\theoremstyle{plain}
\newtheorem{prop}{Proposition}
\theoremstyle{remark}
\journal{European Journal of Operational Research}
\begin{document}

\begin{frontmatter}

\title{Direction Selection in Stochastic Directional Distance Functions}

\author[label1]{Kevin Layer}
\address[label1]{Department of Industrial and Systems Engineering, Texas A\&M University, College Station, TX, USA.}

\author[label1,label2]{Andrew L. Johnson\corref{cor1}}
\address[label2]{School of Information Science and Technology, Osaka University, Suita, Japan.}
\cortext[cor1]{Communicates regarding this paper can be sent to Andrew L. Johnson, ajohnson@tamu.edu}

\author[label3]{Robin C. Sickles}
\address[label3]{Department of Economics, Rice University, Houston, TX, USA.}

\author[label4]{Gary D. Ferrier}
\address[label4]{Department of Economics, University of Arkansas, Fayetteville, AR, USA.}

\begin{abstract}
Researchers rely on the distance function to model multiple product production using multiple inputs. A stochastic directional distance function (SDDF) allows for noise in potentially all input and output variables. Yet, when estimated, the direction selected will affect the functional estimates because deviations from the estimated function are minimized in the specified direction. Specifically, the parameters of the parametric SDDF are point identified when the direction is specified; we show that the parameters of the parametric SDDF are set identified when multiple directions are considered. Further, the set of identified parameters can be narrowed via data-driven approaches to restrict the directions considered. We demonstrate a similar narrowing of the identified parameter set for a shape constrained nonparametric method, where the shape constraints impose standard features of a cost function such as monotonicity and convexity. 

Our Monte Carlo simulation studies reveal significant improvements, as measured by out of sample radial mean squared error, in functional estimates when we use a directional distance function with an appropriately selected direction and the errors are uncorrelated across variables. We show that these benefits increase as the correlation in error terms across variables increase. This correlation is a type of endogeneity that is common in production settings. From our Monte Carlo simulations we conclude that selecting a direction that is approximately orthogonal to the estimated function in the central region of the data gives significantly better estimates relative to the directions commonly used in the literature. For practitioners, our results imply that selecting a direction vector that has non-zero components for all variables that may have measurement error provides a significant improvement in the estimator's performance. We illustrate these results using cost and production data from samples of approximately 500 US hospitals per year operating in 2007, 2008, and 2009, respectively, and find that the shape constrained nonparametric methods provide a significant increase in flexibility over second order local approximation parametric methods.    
\end{abstract}

\begin{keyword}
Nonparametric regression \sep Shape Constraints \sep Data Envelopment Analysis \sep Hospital production.
\end{keyword}

\end{frontmatter}

\section{Introduction}
\label{sec:1.Intro}

The focus of this paper is direction selection in stochastic directional distance functions (SDDF).\footnote{Here we use the term stochastic in reference to a model with a noise term.} While  the DDF is typically used to measure efficiency, in this paper we use a nonparametric shape constrained SDDF to model the conditional mean behavior of production. The stochastic distance function (SDF) was introduced by \cite{lovell1994} and was used in a series of early empirical studies by \cite{coelli1999comparison, coelli2000technical} and \cite{sickles2002}. The parameters of a parametric distance function are point identified; however, if the direction in the DDF is not specified, then the parameters of a parametric DDF are set identified.\footnote{Let $\phi$ be what is known (e.g., via assumptions and restrictions) about the data generating process (DGP).  Let $\theta$ represent the parameters to be identified, let $\Theta$ denote all possible values of $\theta$, and let $\theta_0$ be the true but unknown value of $\theta$.  Then the vector $\theta$ of unknown parameters is point identified if it is uniquely determined from $\phi$. However, $\theta$ is set identified if some of the possible values of $\theta$ are observationally equivalent to $\theta_0$ (\cite{lewbel2016identification}).} A set of axiomatic properties related to production and cost functions, such as monotonicity and convexity in the case of a cost function, are well established in the production literature (\cite{shephard1970,chambers1988applied}). Although the stochastic distance function literature acknowledges the axiomatic properties necessary for duality, it does not impose them globally. Instead, authors typically impose them only on a particular point in the data (e.g., \cite{Atkinson2003}). Recognizing these issues, we provide an axiomatic nonparametric estimator of the SDDF and a method to restrict the pool of the directions to choose from for the SDDF, thereby reducing the size of the set identified parameter set. 
Most empirical studies that use establishment or hospital level data to estimate production or cost functions either assume a specific parametric form or ignore noise, or both (\citep{hollingsworth2003non}). 
In contrast, we use an axiomatic nonparametric SDDF estimator and the proposed method to determine a set of acceptable directions to estimate a cost function that maintains global axiomatic properties for the US hospital industry. Furthermore, we demonstrate the importance of global axiomatic properties for the estimation of most productive scale size and marginal costs.

A few papers have attempted to implement the directional distance function in a stochastic setting (see, for example, \cite{fare2005}, \cite{fare2010functional}, and \cite{fare2016}). The latter two papers discuss the challenges of selecting a parametric functional form that does not violate the axioms typically assumed in production economics. Based on their observations, \cite{fare2016} use a quadratic functional specification.\footnote{As \cite{kuosmanenjohnson2017} note, the translog function used for multi-output production cannot satisfy the standard assumptions for the production technology $T$ globally for any parameter values. The quadratic functional form does not have this shortcoming.} Yet several papers show a loss of flexibility in parametric functional forms, such as the translog or the quadratic functional form, when shape constraints are imposed (e.g., \cite{DiewertWales1987}). Also important to implementation, the selection of the direction vector in the SDDF has been discussed in \cite{FarePasurkaVardanyan2017} and \cite{AtkinsonTsionas2016}, among others. These papers focus on selecting the direction corresponding to a particular interpretation of the inefficiency measure, based on the distance to the economically efficient point. In contrast, we consider \cite{kuosmanenjohnson2017}'s multi-step efficiency analysis and focus on the first step, estimating a conditional mean function. Our goal is to select the direction that best recovers the underlying technology while acknowledging that the data is likely to contain noise in potentially all variables.\footnote{For researchers interested in productivity measurement and productivity variation (e.g., \cite{syverson2011determines}), the results from this paper can be used directly. For authors interested in efficiency analysis, the insights from this paper could be used to improve the estimates from the first stage of \cite{kuosmanenjohnson2017}'s three-step procedure where efficiency is estimated in the third step. }

To model multi-product production,  \cite{kuosmanenjohnson2017} have proposed the use of axiomatic nonparametric methods to estimate the SDDF which they name Directional Convex Nonparametric Least Squares (CNLS-d), a type of sieve estimator. Their methods have the benefits of relaxing standard functional form assumptions for production, cost, or distance functions, but also improve the interpretability and finite sample efficiency over nonparametric methods such as kernel regression (\cite{yagi2018shape}). A variety of models can be interpreted as special cases of \cite{kuosmanenjohnson2017}, among these are a set of models that specify the direction (e.g., \cite{johnson2011one,kuosmanen2012stochastic}). All CNLS models are sieve estimators and fall into the category of partially identified or set identified estimators discussed in \cite{manski2003partial} and \cite{Tamer2010PartialID}. The guidance our paper provides in selecting a direction will reduce the size of the set identified for CNLS-d and other DDF estimators with flexible direction specifications.
 
Much of the production function literature concerns endogeneity issues, for example see \cite{OlleyPakes96}, \cite{levinsohn2003estimating}, and \cite{ackerberg2015}. These methods are often referred to as proxy variable approaches. The argument for endogeneity is typically that decisions regarding variable inputs such as labor are made with some knowledge of the factors included in the unobserved residuals. Recently, these methods have been reinterpreted as instrumental variable approaches (\cite{wooldridge2009}), or control function approaches (\cite{ackerberg2015}). Unfortunately, the assumptions on the particular timing of input decisions is not innocuous. Indeed every firm must adjust its inputs in exactly the same way, otherwise the moment restrictions needed for point identification are violated. For an alternative in the stochastic frontier setting, see \cite{kutlu2018distribution}.   

\cite{kuosmanenjohnson2017} have shown that a production function estimated using a stochastic distance function under a constant returns-to-scale assumption is robust to endogeneity issues because the normalization by one of the inputs or outputs causes the errors-in-variables to cancel each other. In this paper we consider the more general case of a convex technology that does not necessarily satisfy constant returns-to-scale, and show that when errors across variables are highly correlated, a specific type of endogeneity, the SDDF improves estimation performance significantly over the typical alternative of ignoring the endogeneity.

When considering alternative directions in the DDF, we show that the direction that performs the best is often related to the particular performance measure used. We use an out-of-sample mean squared error (MSE) that is measured radially to address this issue. This measure is motivated by the results of our Monte Carlo simulations and is natural for a function that satisfies monotonicity and convexity, assuring the true function and the estimated function are close in the areas were most data are observed. 

We analyze US hospital data and characterize the most productive scale size and marginal costs for the US hospital sector. We demonstrate that out-of-sample MSE is reduced significantly by relaxing parametric functional form restrictions. We also observe the advantage of imposing axioms that allow the estimated function to still be interpretable. Concerning the direction selection, we find, for this data set, that the exact direction selected is not very critical in terms of MSE performance, but some commonly used directions should be avoided.

The remainder of this paper is organized as follows. Section 2 introduces the statistical model and the production model. Section 3 describes the estimators used for the analysis. Section 4 outlines our reasons for the MSE measure we propose. Section 5 highlights the importance of the direction selection through Monte Carlo experiments. Section 6 describes our direction selection method. Section 7 demonstrates the benefits of using non-parametric shape-constrained estimators with an appropriately selected direction for US hospital data. Section 8 concludes.

\section{Models}
\label{sec:2.Mods}

\subsection{Statistical Model}
\label{sec:2.1.StatMod}

We consider a statistical model that allows for measurement error in potentially all of the input and output variables. Let $\bm{\tilde{x}}_i\in \bm{X}\subset \mathbb{R}_+^d, d\geq 1$, be a vector of random input variables of length $d$ and $\bm{\tilde{y}}_i\in \bm{Y}\subset \mathbb{R}_+^Q$, $Q\geq 1$, be a vector of random output variables of length $Q$, where $i$ indexes observations. Let $\bm{\epsilon}^x_{i}\in  \mathbb{R}^d$, $d\geq 1$, be a vector of random error variables of length $d$ and $\bm{\epsilon}^y_{i} \in \mathbb{R}^Q$, $Q\geq 1$, be a vector of random error variables of length $Q$. 
One way of modeling the errors-in-variable (EIV) is:
\begin{equation}
    \label{eq:prod1}
    \binom{\bm{x}_i}{\bm{y}_i} = \binom{\bm{\tilde{x}}_i}{\bm{\tilde{y}}_i} + \binom{\bm{\epsilon}^x_i}{\bm{\epsilon}^y_i}.
\end{equation}
Equation \eqref{eq:prod1} is only identified when multiple measurements exist for the same vector of regressors or when a subsample of observations exists in which the regressors are measured exactly (\cite{Carroll2006}). \cite{Carroll2006} discussed a standard regression setting, not a multi-input/multi-output production process. Thus, repeated measurement requires all but one of the netputs to be identical across at least two observations.\footnote{Here we use the term \textit{netputs} to describe the union of the input and output vectors.} Neither of of these conditions is likely to hold for typical production data sets; therefore, we develop an alternative approach to identification. 

As our starting point, we use the alternative, but equivalent, representation of the EIV model proposed by \cite{kuosmanenjohnson2017}:

\begin{equation}
\label{eq:prod2}
    \binom{\bm{x}_i}{\bm{y}_i} = \binom{\bm{\tilde{x}}_i}{\bm{\tilde{y}}_i} + e_i \binom{\bm{g}_i^x}{\bm{g}_i^y}.
\end{equation}
Clearly, the representations of \cite{Carroll2006} and \cite{kuosmanenjohnson2017} are equivalent if:
\begin{equation}
\label{eq:EIVequivalence}
    \binom{\bm{\epsilon}^x_i}{\bm{\epsilon}^y_i} = e_i \binom{\bm{g}_i^x}{\bm{g}_i^y}.
\end{equation}
We define the following normalization:
\begin{equation}
\label{eq:EIVnorm}
e_i = \sqrt{\sum_{j=1}^{d}{(\epsilon^x_{ij})^2} + \sum_{j=1}^{Q}{(\epsilon^y_{ij})^2}},
\end{equation}
which implies:
\begin{equation}
\label{eq:EIVnorm2}
\sqrt{\sum_{j=1}^{d}{(g^x_{ij})^2} + \sum_{j=1}^{Q}{(g^y_{ij})^2}} = 1.
\end{equation}

\noindent We refer to $(\bm{g}_i^x,\bm{g}_i^y)$ as the \textit{true} noise direction and in the most general case we allow the direction to be observation specific.\footnote{When the noise direction is observation specific and random, all inputs and outputs potentially contain noise and therefore are endogeneous variables. If some components of the $(\bm{g}^x,\bm{g}^y)$ vector are zero, this implies the associated variables are exogeneous and measured with certainty. See \cite{kuosmanenjohnson2017} for more details.} 
 The estimation methods to consider noise in potentially all inputs will depend on our assumptions about the production technology, which are discussed in the following subsection. 

\bigskip
\subsection{Production Model} 
\label{sec:2.2.ProdMod}


Researchers use production function models, cost function models, or distance function models to characterize production technologies. Considering a general production process with multiple inputs used to produce multiple outputs, we define the production possibility set as:

\begin{equation}
\label{eq:DefTechno}
T = \left \{\left(\bm{\tilde{x}},\bm{\tilde{y}}\right) \in \mathbb{R}_{+}^{d+Q} \ | \ \bm{\tilde{x}} \ \text{can produce} \ \bm{\tilde{y}}\right\}.
\end{equation}

\noindent Following \cite{shephard1970}, we adopt the following standard assumptions to assure that $T$ represents a production technology:
\begin{enumerate}
\item{\textit{T} is closed};
\item{\textit{T} is convex};
\item{Free Disposability of inputs and outputs; i.e., if \(\left(\bm{\tilde{x}}^l,\bm{\tilde{y}}^l\right) \in T\)} and \(\left(\bm{\tilde{x}}^k,-\bm{\tilde{y}}^k\right) \geq \left(\bm{\tilde{x}}^l, -\bm{\tilde{y}}^l\right)\), then \(\left(\bm{\tilde{x}}^k,\bm{\tilde{y}}^k\right) \in T\).
\end{enumerate}
For an alternative representation, see, for example, \cite{frisch1964theory}.

Developing methods to estimate characteristics of the production technology while imposing these standard axioms was a popular and fruitful topic from the early 1950's until the early 1980's, generating such classic papers as \cite{koopmans1951}, \cite{shephard1953,shephard1970}, \cite{afriat1972efficiency}, \cite{charnes1978measuring},\footnote{Data Envelopment Analysis is perhaps one of the largest success stories and has become an extremely popular method in the OR toolbox for studying efficiency.} and  \cite{varian1984nonparametric}. Unfortunately, these methods are deterministic in the sense that they rely on a strong assumption that the data do not contain any measurement errors, omitted variables, or other sources of random noise. Furthermore, for some research communities linear programs were seen as harder to implement than parametric regression which could be calculated via normal equations. Thus, most econometricians and applied economists have chosen to use parametric models, sacrificing flexibility for ease of estimation and the inclusion of noise in the model.

Here we focus our attention on the distance function because it allows the joint production of multi-outputs using multi-inputs. The production function and cost functions can be seen as special cases of the distance function in which there is either a single output or a single input (cost), respectively. Further, motivated by our discussion of EIV models above, we consider a directional distance function which allows for measurement error in potentially all variables. We try to relax both the parametric and deterministic assumptions common in earlier approaches to modeling multi-output/multi-input technologies. We do this by building on an emerging literature that revisits the axiomatic nonparametric approach incorporating standard statistical structures including noise (\cite{kuosmanen2008representation};\cite{Kuosmanen2010}). 

\subsubsection{The Deterministic Directional Distance Function (DDF)}
\label{sec:2.2.1.DDF}

\cite{luenberger1992} and \cite{chambers1996distance,chambers1998DDF} introduced the directional distance function, defined for a technology \textit{T} as: 
\begin{equation}
\label{eq:DefDDF}
\overrightarrow{D}_T\left(\bm{\tilde{x}},\bm{\tilde{y}};\bm{g}^x,\bm{g}^y\right) = \max{\{\delta \in \mathbb{R}: \left(\bm{\tilde{x}} - \delta \; \bm{g}^x,\ \bm{\tilde{y}}+\delta \; \bm{g}^y\right) \in T\}}, 
\end{equation}
where $\bm{\tilde{x}}$ and $\bm{\tilde{y}}$ are the observed input and output vectors, such that   \(\bm{\tilde{x}} \in \mathbb{R}_+^{d}\) and \(\bm{\tilde{y}} \in \mathbb{R}_+^{Q}\) are assumed to be observed without noise and fully describe the resources used in production and the goods or services generated from production.  \(\bm{g}^x \in \mathbb{R}_+^{d}\) is the direction vector in the input space, \(\bm{g}^y \in \mathbb{R}_+^{Q}\) is the direction vector in the output space, and  \(\left(\bm{g}^x,\bm{g}^y\right) \in \mathbb{R}_+^{d+Q}\) defines the direction from the point \(\left(\bm{\tilde{x}},\bm{\tilde{y}}\right)\) in which the distance function is measured.\footnote{We assume \(\left(\bm{g}^x,\bm{g}^y\right) \neq \bm{0}\); i.e., at least one of the components of either $\bm{g}^x$ or $\bm{g}^y$ is non-zero.} $\delta$ is commonly interpreted as a measure of inefficiency by quantifying the number of bundles of size $\left(\bm{g}^x,\bm{g}^y\right)$ needed to move the observed point $\left(\bm{\tilde{x}},\bm{\tilde{y}}\right)$ to the boundary of the technology in a deterministic setting. 


\cite{chambers1998DDF} explained how the directional distance function characterizes the technology \textit{T} for a given direction vector \(\left(\bm{g}^x,\bm{g}^y\right) \); specifically:
\begin{equation}
\label{eq:DDFTechnology}
\overrightarrow{D}_T\left(\bm{\tilde{x}},\bm{\tilde{y}};\bm{g}^x,\bm{g}^y\right) \geq 0, \text{if and only if} 
\left(\bm{\tilde{x}},\bm{\tilde{y}}\right) \in T
.
\end{equation}

\noindent If \textit{T} satisfies the assumptions stated in Section \ref{sec:2.2.ProdMod}, then the directional distance function \(\overrightarrow{D}_T: \mathbb{R}_+^{d} \times \mathbb{R}_+^{Q} \times \mathbb{R}_+^{d} \times \mathbb{R}_+^{Q} \to \mathbb{R}_+\) has the following properties (see \cite{chambers1998DDF}):

\begin{enumerate}[label=(\alph*)]
\item{\(\overrightarrow{D}_T\left(\bm{\tilde{x}},\bm{\tilde{y}};\bm{g}^x,\bm{g}^y\right)\) is upper semicontinuous in \(\bm{\tilde{x}}\) and \(\bm{\tilde{y}}\) (jointly);}
\item{\(\overrightarrow{D}_T\left(\bm{\tilde{x}},\bm{\tilde{y}};\lambda \  \bm{g}^x,\lambda \ \bm{g}^y\right) = \left(1/\lambda\right)\overrightarrow{D}_T\left(\bm{\tilde{x}},\bm{\tilde{y}};\bm{g}^x,\bm{g}^y\right), \lambda > 0\);}
\item{\(\bm{\tilde{y}'} \geq \bm{\tilde{y}} \Rightarrow 
\overrightarrow{D}_T\left(\bm{\tilde{x}},\bm{\tilde{y}'};\bm{g}^x,\bm{g}^y\right) \leq \overrightarrow{D}_T\left(\bm{\tilde{x}},\bm{\tilde{y}};\bm{g}^x,\bm{g}^y\right)\);}
\item{\(\bm{\tilde{x}'} \geq \bm{\tilde{x}} \Rightarrow 
\overrightarrow{D}_T\left(\bm{\tilde{x}'},\bm{\tilde{y}};\bm{g}^x,\bm{g}^y\right) \geq \overrightarrow{D}_T\left(\bm{\tilde{x}},\bm{\tilde{y}};\bm{g}^x,\bm{g}^y\right)\);}
\item{If T is convex, then \(\overrightarrow{D}_T\left(\bm{\tilde{x}},\bm{\tilde{y}};\bm{g}^x,\bm{g}^y\right)\) is concave in \(\bm{\tilde{x}} \ \text{and} \ \bm{\tilde{y}}\).}
\end{enumerate}
An additional property of the DDF is the translation invariance:
\begin{enumerate}[resume,label=(\alph*)]
\item{\(\overrightarrow{D}_T\left(\bm{\tilde{x}}-\alpha \bm{g}^x, \bm{\tilde{y}}+\alpha \bm{g}^y; \bm{g}^x,\bm{g}^y  \right) =
\overrightarrow{D}_T\left(\bm{\tilde{x}},\bm{\tilde{y}};\bm{g}^x,\bm{g}^y\right) - \alpha\).}
\end{enumerate}

Several theoretical contributions have been made to extend the deterministic DDF, see for example \cite{Fare2010,Aparicio2017,Kapelko2017}, and \cite{Roshdi2018}. The deterministic DDF has been used in several recent applications, including \cite{Balezentis2015,Adler2016}, and \cite{Fukuyama2018}.

\subsubsection{The Stochastic Directional Distance Function}
\label{sec:2.2.2.StochDistFunc}
 
The properties of the deterministic DDF also apply for the stochastic DDF (\cite{FarePasurkaVardanyan2017}). Here we focus on estimating a stochastic DDF considering a residual which is mean zero.\footnote{Two models are possible, 1) a mean zero residual indicating that the residual contains only noise used to pursue a productivity analysis, or 2) a composed residual with both inefficiency and noise. Our direction selection analysis is used in the first step of Kuosmanen and Johnson's three step procedure in which a conditional mean is estimated.}
This is represented in Figure \ref{fig:SDDF_Ex}.

\begin{figure}[h!]
	\begin{center}
	    \includegraphics[width=3in]{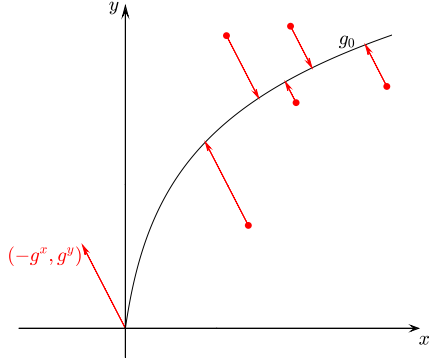}
	\end{center}
	\caption{SDDF in mean zero case}
	\label{fig:SDDF_Ex}
\end{figure}

Using the statistical model in Section \ref{sec:2.1.StatMod} and the functional representation of technology in Section \ref{sec:2.2.ProdMod}, we restate Proposition 2 in \cite{kuosmanenjohnson2017} as:

	\begin{prop}
		\label{prop2}
		If the observed data are generated according to the statistical model described in Section \ref{sec:2.1.StatMod}, then the value of the DDF in the observed data point $(\bm{x}_i,\bm{y}_i)$ is equal to the realization of the random variable $\epsilon_i$ with mean zero; specifically 
		\[\overrightarrow{D}_T(\bm{x}_i,\bm{y}_i,\bm{g}^x,\bm{g}^y)=\epsilon_i \ \forall i. \]
	\end{prop}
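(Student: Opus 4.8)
The plan is to exploit the translation-invariance property (f) of the directional distance function together with the fact that, in the mean-zero case considered here, the noise-free netput vector $(\bm{\tilde{x}}_i,\bm{\tilde{y}}_i)$ lies on the boundary of $T$. The entire argument reduces to the observation that the errors-in-variables model \eqref{eq:prod2} displaces the observed point $(\bm{x}_i,\bm{y}_i)$ from $(\bm{\tilde{x}}_i,\bm{\tilde{y}}_i)$ \emph{purely along} the direction vector, so that property (f) converts this displacement into an additive shift of the DDF value equal to the signed magnitude of the noise.

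First I would record that the true point is efficient: since the residual in this first (conditional-mean) step contains only noise, $(\bm{\tilde{x}}_i,\bm{\tilde{y}}_i)$ sits on the boundary of $T$. Moving in the improving direction $(-\bm{g}^x,+\bm{g}^y)$ from a boundary point leaves $T$, so the maximal feasible $\delta$ in \eqref{eq:DefDDF} is zero; equivalently, by the characterization \eqref{eq:DDFTechnology}, $\overrightarrow{D}_T(\bm{\tilde{x}}_i,\bm{\tilde{y}}_i;\bm{g}^x,\bm{g}^y)=0$. Next I would use \eqref{eq:prod2} to write the observed point as a translate of the true point along the direction, $(\bm{x}_i,\bm{y}_i)=(\bm{\tilde{x}}_i-\alpha\,\bm{g}^x,\ \bm{\tilde{y}}_i+\alpha\,\bm{g}^y)$, where $\alpha$ is the signed magnitude of the noise (equal to $\pm e_i$ according to whether the realized noise points in the improving or the worsening direction). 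Applying property (f) with this $\alpha$ then gives $\overrightarrow{D}_T(\bm{x}_i,\bm{y}_i;\bm{g}^x,\bm{g}^y)=\overrightarrow{D}_T(\bm{\tilde{x}}_i,\bm{\tilde{y}}_i;\bm{g}^x,\bm{g}^y)-\alpha=-\alpha$, and setting $\epsilon_i:=-\alpha$ yields the claimed identity. The mean-zero statement is then inherited directly from the modeling assumption that the noise term averages to zero.

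The main obstacle is bookkeeping the alignment and the sign rather than any deep analytic difficulty. Two points need care. First, property (f) applies only when the displacement lies exactly along the DDF direction, so I must argue that the observation-specific true noise direction $(\bm{g}_i^x,\bm{g}_i^y)$ coincides, up to sign, with the direction $(\bm{g}^x,\bm{g}^y)$ used to evaluate the DDF; it is precisely this matching that allows a single scalar $\alpha$ to describe the input and output shifts simultaneously. Second, the normalization \eqref{eq:EIVnorm} defines $e_i$ as a nonnegative magnitude, whereas $\epsilon_i$ must be signed with mean zero, so the sign of $\epsilon_i$ has to be read off from the orientation of $(\bm{g}_i^x,\bm{g}_i^y)$ relative to the improving direction $(-\bm{g}^x,+\bm{g}^y)$. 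Once these identifications are fixed, verifying $\overrightarrow{D}_T(\bm{x}_i,\bm{y}_i;\bm{g}^x,\bm{g}^y)=\epsilon_i$ is a one-line application of translation invariance.
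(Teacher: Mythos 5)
Your proof is correct and takes essentially the route the paper itself relies on: the paper gives no proof of this proposition but restates Proposition 2 of \cite{kuosmanenjohnson2017} and defers to that source, whose argument is exactly your two ingredients --- the noise-free point $(\bm{\tilde{x}}_i,\bm{\tilde{y}}_i)$ lies on the frontier so its DDF value is zero, and the translation property (f) converts a displacement along the direction vector into an additive shift of the DDF value, with mean zero inherited from the noise assumption. The one bookkeeping slip to fix is that, because \eqref{eq:prod2} adds the noise with a \emph{common} sign to the input and output blocks, the alignment your argument needs is collinearity of $(\bm{g}_i^x,\bm{g}_i^y)$ with the improving direction $(-\bm{g}^x,+\bm{g}^y)$ rather than with $(\bm{g}^x,\bm{g}^y)$ itself (these differ unless one block vanishes, since both $\bm{g}^x$ and $\bm{g}^y$ are componentwise nonnegative); once that orientation is fixed, your identification $\epsilon_i=-\alpha$ and the mean-zero conclusion go through exactly as you state.
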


In the stochastic distance function literature, the translation property, (f) above, is commonly invoked to move an arbitrarily chosen netput variable out of the distance function to the left-hand side of the equation, yielding an equation that looks like a standard regression model; see, for example, \cite{lovell1994} and \cite{kuosmanenjohnson2017}. Instead, we write the SDDF with all of the outputs on one side to emphasize that all netputs are treated symmetrically.

Under the assumption of constant returns to scale, normalizing by one of the netputs causes the noise terms to cancel for the regressors, thus eliminating the issue of endogeneity (e.g., \cite{coelli2000econometric}, \cite{kuosmanenjohnson2017}). However, since we relax the constant returns to scale assumption, endogeneity can still be an issue.\footnote{If the endogeneity is caused by correlations in the errors across variables, it can be addressed by selecting an appropriate direction for the directional distance function. This is the direction we explore in the Monte Carlo simulation below in Section \ref{sec:4.1.ExpIllustrate}.}

\cite{FarePasurkaVardanyan2017}, among others, have recognized that the selection of the direction vector affects the parameter estimates of the production function. In \ref{sec:Ax.A.1.ImpofDisc}, for the linear parametric DDF defined below, we prove that alternative directions lead to distinct parameter estimates. 


\section{Estimation}
\label{sec:3.Est}

We now describe the estimation of the DDF under a specific parametric functional form and under nonparametric shape constrained methods. 

\subsection{Parametric Estimation and the DDF}
\label{sec:3.1.DescParamDir}

Consider data composed of \(n\) observations where the inputs are defined by \(\bm{x_i}, \ i = 1,...,n\) and the outputs by \(\bm{y_i}, \ i = 1,...,n\). The estimator minimizes the squared residuals for a DDF with an arbitrary prespecified direction \(\left(-\bm{g}^x,\bm{g}^y\right)\). For a linear production function, we formulate the estimator as:

\begin{subequations} 
\label{eq:ParamDDF}
    \begin{align}
       & & \min\limits_{\alpha, \bm{\beta}, \bm{\gamma}, \bm{\epsilon}}{\sum_{i=1}^{n}{\epsilon_i^2}} \tag{\ref{eq:ParamDDF}}\\
        && \text{s.t.} \ \ \
         \bm{\gamma}'\,\bm{y_i}  = \alpha + \bm{\beta}' \, \bm{x_i}  - \epsilon_i , 
        &\ \text{for} \ i = 1,\ldots,n \label{subeq:P0.expr}\\
        && \bm{\beta}' \, \bm{g}^x + \bm{\gamma}'\,\bm{g}^y  = 1 ,
        &\  \label{subeq:P1.TransProp}
    \end{align}
\end{subequations}

\noindent where \(\alpha\) is the intercept, \(\bm{\beta}\) and \(\bm{\gamma}\) are the vectors of the marginal effects of the inputs and outputs, respectively, and the \(\epsilon_i, \, i = 1,...,n\) are the residuals.    

Equation (\ref{subeq:P1.TransProp}) enforces the translation property described in \cite{chambers1998DDF}; i.e., scaling the netput vector by \(\delta\) in the direction $(-\bm{g}^x,\bm{g}^y)$ causes the distance function to decrease by \(\delta\). The combination of Equation (\ref{subeq:P0.expr}) and Equation (\ref{subeq:P1.TransProp}) ensures that the residual is computed along the direction $(-\bm{g}^x,\bm{g}^y)$. Intuitively this is because the $\bm{\beta}$ and $\bm{\gamma}$ are rescaled proportionally to the direction $(-\bm{g}^x,\bm{g}^y)$ in Equation (\ref{subeq:P1.TransProp}). For a formal proof, see \cite{kuosmanenjohnson2017}, Proposition 2.

\subsection{The CNLS-d Estimator}
\label{sec:3.2.DescCNLSd}

Convex Nonparametric Least Squares (CNLS) is a non-parametric estimator that imposes the axiomatic properties, such as monotonicity and concavity, on the production technology. The estimator CNLS-d is the directional distance function generalization of CNLS (\cite{hildreth1954point}, \cite{kuosmanen2008representation}). While CNLS allows for just a single output, CNLS-d permits  multiple outputs. In CNLS the direction along which residuals are computed is specified \textit{a priori} and is typically measured in terms of the unique output, \(\bm{y}\).  This corresponds to the assumption that noise is only present in \(\bm{y}\) and that all other variables, \(\bm{\tilde{x}}\), do not contain noise. CNLS-d allows the residual to be measured in an arbitrary prespecified direction. If all components of the direction vector are non-zero, this corresponds to an assumption that noise is present in all inputs.

Using the same input-output data defined in Section \ref{sec:2.1.StatMod}, the CNLS-d estimator is given by:

\begin{subequations} 
\label{eq:CNLSd}
    \begin{align}
       & & \min\limits_{\bm{\alpha}, \bm{\beta}, \bm{\gamma}, \bm{\epsilon}}{\sum_{i=1}^{n}{\epsilon_i^2}} \tag{\ref{eq:CNLSd}}\\
        && \text{s.t.} \ \ \
         \bm{\gamma_i}'\,\bm{y_i}  = \alpha_i + \bm{\beta_i}' \, \bm{x_i}  - \epsilon_i , 
        &\ \text{for} \ i = 1,\ldots,n \label{subeq:0.expr}\\
        && \alpha_i + \bm{\beta_i}' \, \bm{x_i}  - \bm{\gamma_i}'\,\bm{y_i}  \leq \alpha_j + \bm{\beta_j}' \, \bm{x_i}  - \bm{\gamma_j}'\,\bm{y_i}, 
        &\ \text{for} \ i,j = 1,\ldots,n, \, i \neq j \label{subeq:1.AfriatIneq}\\  
        &&  \bm{\beta_i}  \geq 0, 
        &\ \text{for} \ i = 1,\ldots,n \label{subeq:2.IneqBeta}\\
        && \bm{\beta_i}' \, \bm{g}^x + \bm{\gamma_i}'\,\bm{g}^y  = 1, 
        &\ \text{for} \ i = 1,\ldots,n \label{subeq:3.TransProp}\\
        && \bm{\gamma_i} \geq 0, 
        &\ \text{for} \ i = 1,\ldots,n, \label{subeq:4.IneqGamma}       
    \end{align}
\end{subequations}

\noindent where \(\alpha_i, \, i = 1,...,n\) is the vector of the intercept terms, \(\bm{\beta_i}, \, i = 1,..,n\) and \(\bm{\gamma_i}, \, i = 1,..,n\) are the matrices of the marginal effects of the inputs and the outputs, respectively, and \(\epsilon_i, \, i = 1,...,n\) is the vector of the residuals \citep{kuosmanenjohnson2017}.  

Equation \eqref{subeq:0.expr} is similar to \eqref{subeq:P0.expr} with the notable different that $(\alpha_i,\bm{\beta_i},\bm{\gamma_i})$ are indexed by $i$ indicating each observation has their own hyperplane defined by the triplet $(\alpha_i,\bm{\beta_i},\bm{\gamma_i})$. Equation \eqref{subeq:1.AfriatIneq}, which corresponds to the Afriat inequalities, imposes concavity. Given Equation \eqref{subeq:1.AfriatIneq}, Equation \eqref{subeq:2.IneqBeta} imposes the monotonicity of the estimated frontier relative to the inputs. Equation \eqref{subeq:3.TransProp} enforces the translation property described in \cite{chambers1998DDF} and has the same interpretation as Equation \eqref{subeq:P1.TransProp}. Similar to Equation \eqref{subeq:2.IneqBeta}, the combination of Equation \eqref{subeq:1.AfriatIneq} and Equation \eqref{subeq:4.IneqGamma}  imposes the monotonicity of the DDF relative to the outputs. In Equation \eqref{eq:CNLSd}, we specify the CNLS-d estimator with a single common direction, \(\left(-\bm{g}^x,\bm{g}^y\right)\).\footnote{Alternatively, some researchers may be interested in using observation specific directions or perhaps group specific directions (\cite{daraio2016efficiency}). In \ref{sec:Ax.A.2.ImpofDisc}, we derive the conditions under which multiple directions can be used in CNLS-d while still maintaining the axiomatic property of global convexity of the production technology. Consider two groups each with their own direction used in the directional distance function. Essentially, the convexity constraint holds as long as the noise is orthogonal to the difference of the two directions used in the estimation. A simple example of this situation is all the noise being in one dimension and the difference between the two directions for this dimension is zero. However, this condition is restrictive when noise is potentially present in all variables. Thus, specifying multiple directions in CNLS-d while maintaining the axiomatic properties of the estimator, specifically, the convexity of the production possibility set, is still an open research question.} 


\section{Measuring MSE under Alternative Directions}
\label{sec:4.TrainTest}

\subsection{Illustrative Example}
\label{sec:4.1.ExpIllustrate}

\paragraph{Data Generation Process}
\label{sec:4.1.Par_DGP}

For our illustrative example, we use a simple linear cost function and a directional distance linear parametric estimator. We consider two noise generation processes: a random noise direction and a fixed noise direction. Here we discuss the random noise direction case, but direct the reader to \ref{sec:Ax.B.AdExp} for a discussion of the fixed noise direction case.


For our example we consider a single output cost function where the observations \(\left(y_{i},c_{i}\right) , i = 1,\ldots,n\), are created by the Data Generation Process (DGP) outlined in Algorithm 1: 

\begin{mdframed}[nobreak=true]
Algorithm 1
\begin{enumerate}
    \item{Output, \(\tilde{y}_{i}\),  is drawn from the continuous uniform distribution \(U\left[0,1\right]\)}.
    \item{Cost is calculated as \(\tilde{c}_{i} = \beta_0 \ \tilde{y}_{i}\), where $\beta_0 = 1$.}
   \item{The noise terms, $\epsilon_{y_{i}}, \epsilon_{c_{i}}$,  are constructed as follows:}
            \begin{enumerate}
            \item{
            $\epsilon_0$ is calculated as: 
            \begin{equation}
            \label{eq:ep0}
            \epsilon_0 = \frac{1}{2} \left[ \; \sqrt{\frac{1}{n-1}\sum_{i=1}^n\left(\tilde{y}_{i}-\bar{y}\right)^2} +\sqrt{\frac{1}{n-1}\sum_{i=1}^n\left(\tilde{c}_{i}-\bar{c}\right)^2} \; \right],
            \end{equation}
            \noindent where \(\bar{y} = \frac{1}{n}\sum_{i=1}^{n} {\tilde{y}_{i}}\) and \(\bar{c} = \frac{1}{n}\sum_{i=1}^{n}{\tilde{c}_{i}}\) are the means of the output and cost without noise, respectively.}

            \item{The scalar length of the noise is rescaled by the vector, $v_{q\epsilon_{i}}$, in each dimension. These scaling factors are calculated as \(v_{q\epsilon_{i}} = \frac{v^*_{q\epsilon_{i}}}{\lvert\lvert\bm{v^*_{\epsilon_{i}}}\rvert\rvert_2}, q = \{1,2\}\) where  \(v^*_{q\epsilon_{i}}\) are drawn from a continuous uniform distribution \(U[-1,1]\).}
            
            \item{\(\left(\epsilon_{y_{i}},\epsilon_{c_{i}}\right) = l_{\epsilon_i} \, \bm{v_{\epsilon_i}}, i = 1,\ldots,n\), where \(l_{\epsilon_i}\) is a scalar length drawn from the normal distribution,  \(N\left(0,\lambda\,\epsilon_0\right)\), where \(\lambda\) is prespecified initial value for the standard deviation and \(\bm{v_{\epsilon_i}} = \left[v_{1\epsilon_{i}},v_{2\epsilon_{i}}\right]\)\ is a normalized direction vector.}
            
            \end{enumerate}

    \item{The observations with noise are obtained by appending the noise terms to the generated data: 
    \begin{equation}
    \label{eq:DGPObsAndNoise}
    \binom{y_{i}}{c_{i}} = \binom{\tilde{y}_{i}}{\tilde{c}_{i}} + \binom{\epsilon_{y_{i}}}{\epsilon_{c_{i}}} , i = 1,\ldots,n.
    \end{equation}
    }
\end{enumerate}
\end{mdframed}

\begin{figure}[h]
	\caption{Algorithm 1: Linear function data generation process with random noise directions}
	\label{fig:LinearDGPStatement}
\end{figure}


\noindent Figure \ref{fig:DGPExLinRand+Fix} illustrates the results for two cases of the data generating process; in the first case the direction of the noise is random, while in the second case the direction of the noise is fixed.

\begin{figure}[h]
\centerline{
\includegraphics[width=0.5\textwidth]{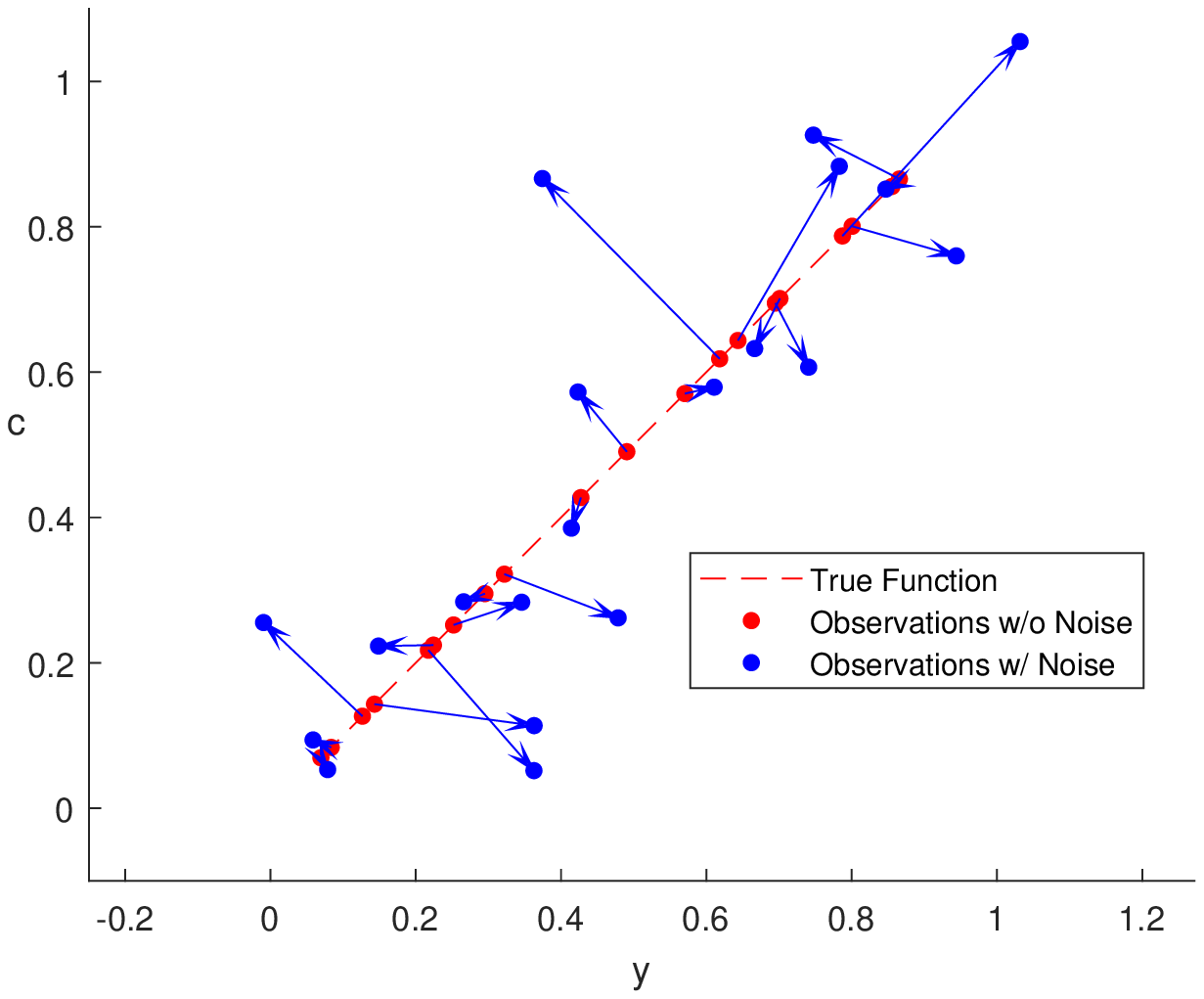}
\includegraphics[width=0.5\textwidth]{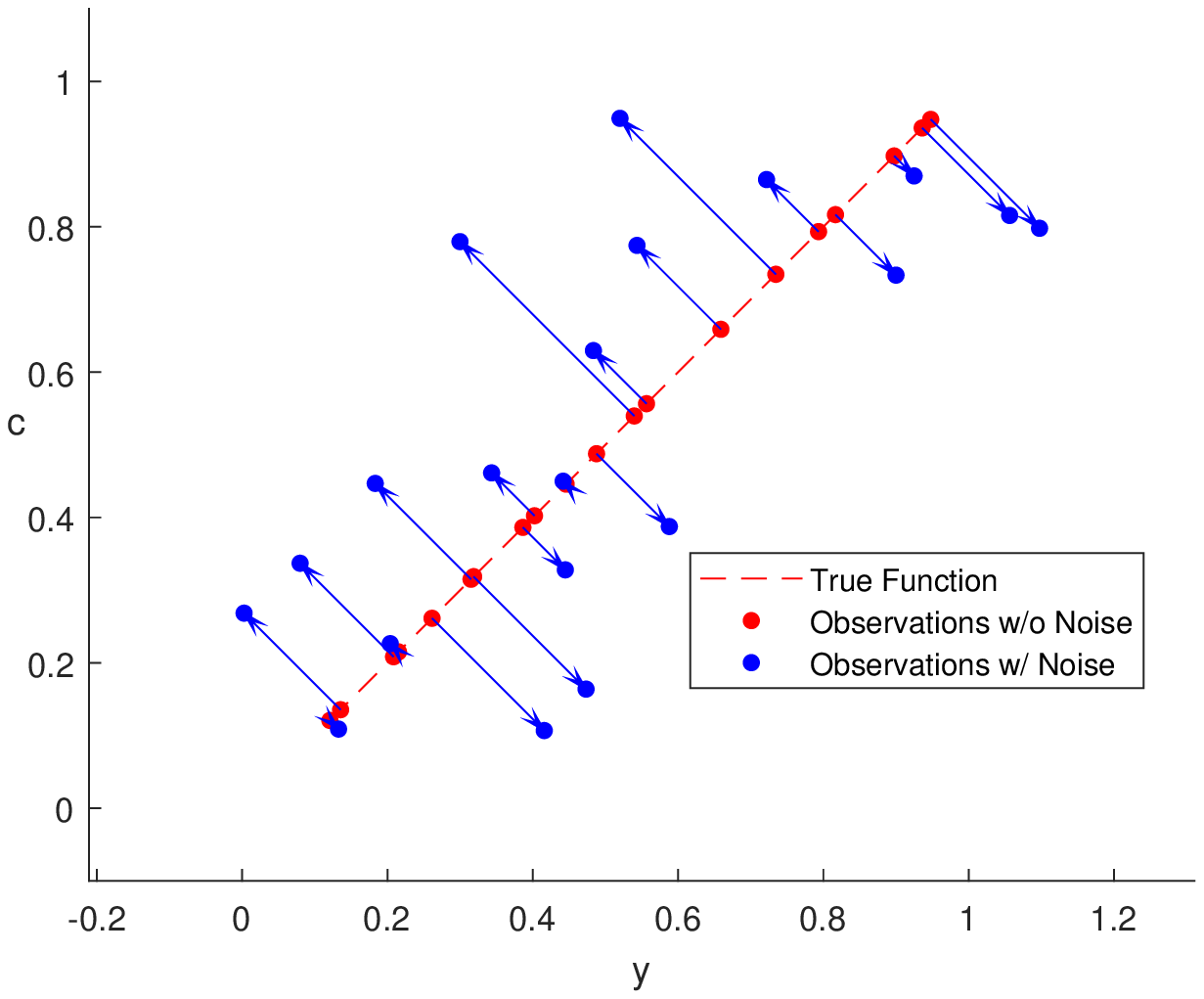}
}
\caption{Linear Case with Random Noise Direction (left), Linear Case with Fixed Noise Direction (right)}
\label{fig:DGPExLinRand+Fix}
\end{figure}



\paragraph{Evaluating the Parametric Estimator's Performance}
\label{sec:4.1.1.MSEComp}

We use two criteria to assess the performance of the parametric estimator: 1) Mean Squared Error (MSE) comparing the true function to the estimated function, and 2) MSE comparing the estimated function to a testing data set. While we can calculate both metrics for our Monte Carlo simulations, only the second metric can be used with our application data below.  

To calculate deviations, we use the MSE direction \(\left(g_{MSE}^y, g_{MSE}^c\right)\). For any particular point of the testing set, \(\left(y_{ts_{i}},c_{ts_{i}}\right), i = 1,\ldots,n\), we determine the estimates, \(\left(\hat{y}_{ts_{i}},\hat{c}_{ts_{i}}\right), i = 1,\ldots,n\) , defined as the intersection of the estimated function characterized by the coefficients \(\left(\hat{\alpha},\hat{\beta}\right)\) and the line passing through  \(\left(y_{ts_{i}},c_{ts_{i}}\right), i = 1,\ldots,n\), and direction vector \(\left(g_{MSE}^y, g_{MSE}^c\right)\). We evaluate the value of the MSE as: 
\begin{equation}
\label{eq:MSEDef}
    \textit{MSE} = \frac{1}{n} 
    \sum_{i=1}^{n} { \left(
    \left(\hat{y}_{ts_{i}} - y_{ts_{i}}\right)^2
    + \left(\hat{c}_{ts_{i}} - c_{ts_{i}}\right)^2
    \right)
    .
}   
\end{equation}

To compare the true function to the estimated function, we use the Linear Function Data Generation Process, Algorithm 1, steps 1 and 2, to construct our testing data set \(\left(y_{ts_{i}},c_{ts_{i}}\right), i = 1,\ldots,n\).
To evaluate the estimated function without knowing the true function the testing set is built using the full Linear Function Data Generation Process.

Figure \ref{fig:LinDiffMSE} show the MSE computations.

\begin{figure}[h]
\centerline{%
\includegraphics[width=0.5\textwidth]{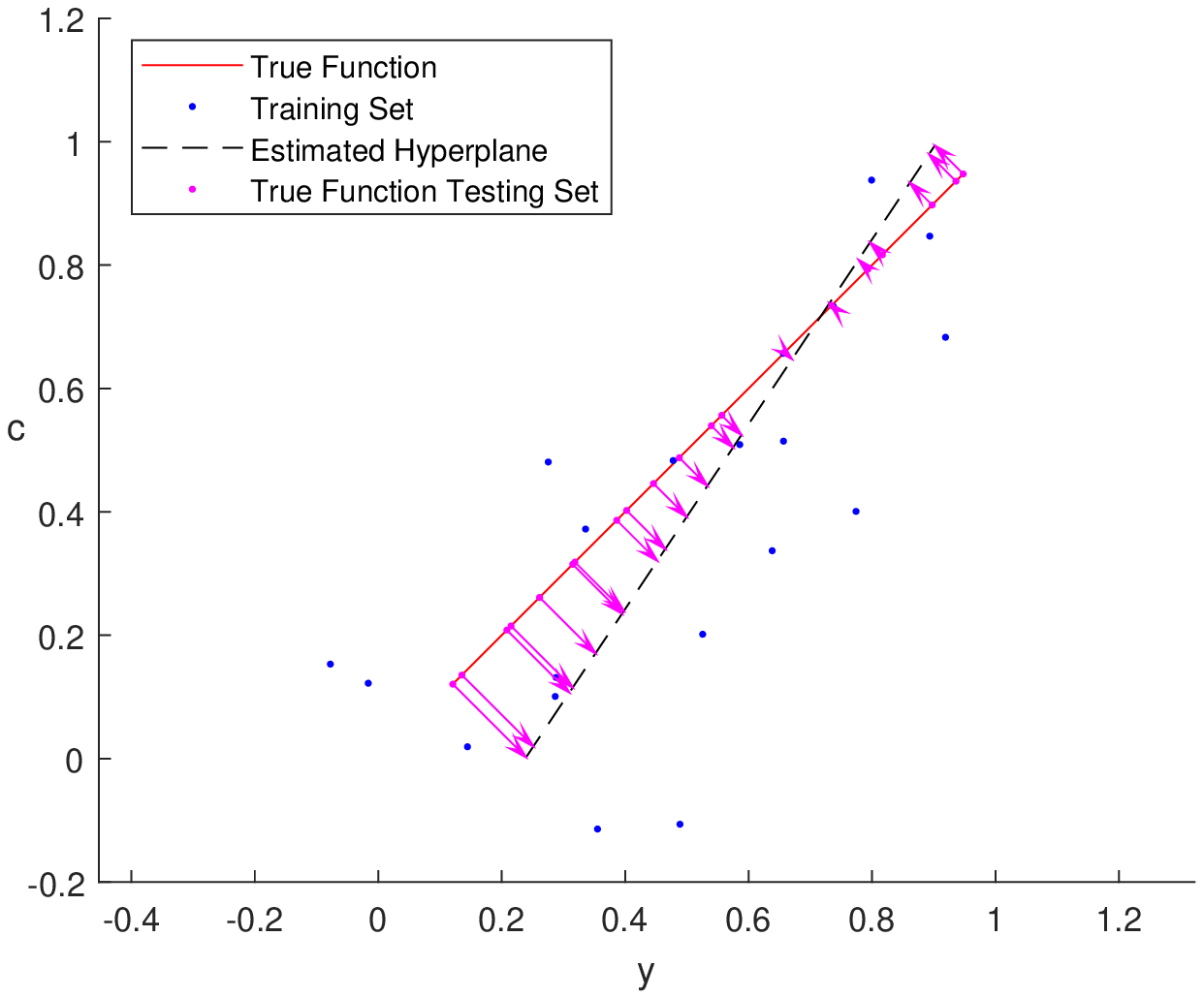}%
\includegraphics[width=0.5\textwidth]{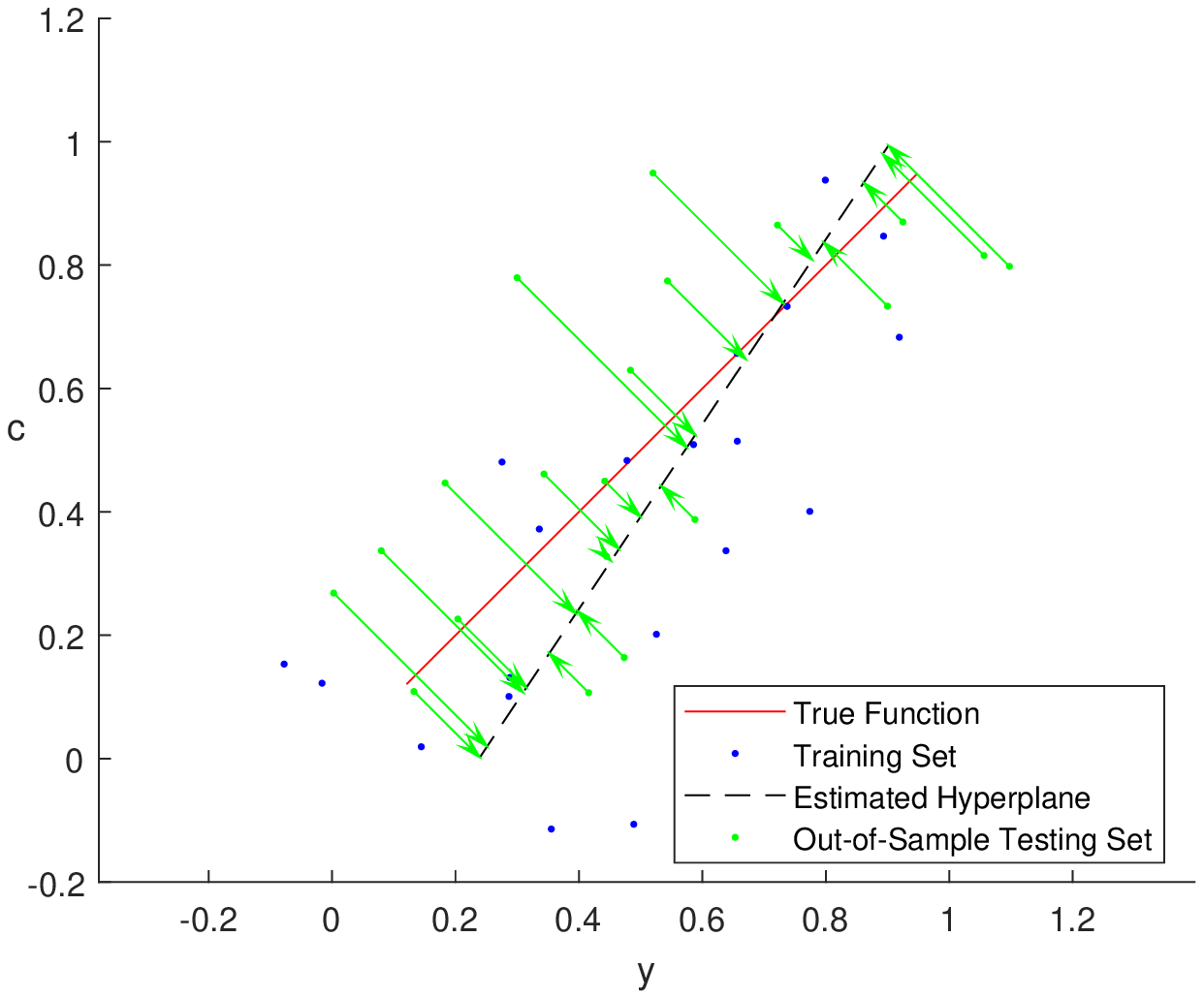}%
}%
\caption{MSE calculated relative to the True Function in the MSE direction \(\pi/4\) (left), MSE calculated using a testing data set in the MSE direction \(\pi/4\) (right)}
\label{fig:LinDiffMSE}
\end{figure}




\paragraph{Additional Information Describing the Simulations}
\label{sec:4.1.InfOnSimu}

We apply the DGP described above to generate a training set, \(\left(y_{tr_{i}},c_{tr_{i}}\right), i = 1,\ldots,n_{tr}\), and a testing set \(\left(y_{ts_{i}},c_{ts_{i}}\right), i = 1,\ldots,n_{ts}\), in which noise is introduced to the observations in random directions.
We set the noise scaling coefficient to $\lambda = 0.6$ and the number of observations to $n_{tr} = n_{ts} = 100$. We run $100$ repetitions of the simulation for each experiment on a computer with a processor Intel Core i7 CPU 860 2.80 GHz and 8 GB RAM. We use the quadratic solver on MATLAB 2017a.  

For the estimator, we define the direction vector used in the parametric DDF as a function of an angular variable $\theta$, which allows us to investigate alternative directions. Specifically, the direction vector used in the DDF is  \(\left(g^y, g^c\right) = \left(\cos(\theta_t),\sin(\theta_t)\right)\). We examine the set of directions corresponding to the angles \(\theta_t \in \left\{0,\ \pi/8,\ \pi/4,\ 3\pi/8,\ \pi/2\right\}\).

\paragraph{Results: Random Noise Directions}
\label{sec:4.1.1.Res}

Table \ref{tab:LinRandomNoiseTrueFunction} and Table \ref{tab:LinRandomNoiseOSMSE} show results corresponding to the two performance criteria introduced above and shown in Figure \ref{fig:LinDiffMSE}, the MSE relative to the true function and the MSE relative to a testing data set, respectively.  Table \ref{tab:LinRandomNoiseTrueFunction} shows that the direction corresponding to the angle \( \pi / 4\), \(\left(g^y=0.707, g^c=0.707\right)\), produces the smallest values of MSE (shown in bold in the table) regardless of the direction used for the MSE computation. However, the estimator's quality diminishes if we select the extreme directions corresponding to the angles $0$ and $\pi/2$. Table \ref{tab:LinRandomNoiseOSMSE} reports performance via a testing set, the direction corresponding to the smallest MSE value (shown in bold) is always the one matching the direction used in the MSE computation. In applications, using a testing set is necessary because the true function is unknown. Table \ref{tab:LinRandomNoiseOSMSE} shows the benefits of matching the direction of MSE evaluation direction outweigh the benefits of selecting a direction based on the properties of the function being estimated.


\begin{table}[h!]
  \centering
  \caption{Average MSE over 100 simulations for the Linear Estimator compared to the true function with a DGP using random noise directions}
    \begin{tabular}{crrrrr}
          \multicolumn{1}{r|}{} & \multicolumn{5}{c}{Avg MSE: Comparison} \\
          \multicolumn{1}{r|}{} & \multicolumn{5}{c}{to the True Function} \\
          \multicolumn{1}{r|}{} & \multicolumn{5}{c}{DDF Angle $\theta_t$ } \\
    \cmidrule{2-6}    
    \multicolumn{1}{c|} {MSE Dir Angle $\theta_{\textit{MSE}}$} & \multicolumn{1}{c}{\(0\)}  &  \multicolumn{1}{c}{\( \pi / 8\)} &   \multicolumn{1}{c}{\( \pi / 4\)}    &  \multicolumn{1}{c}{\( 3\pi / 8\)}  &   \multicolumn{1}{c}{\( \pi / 2\)}   \\
    \midrule
    \multicolumn{1}{c|}{\(0\)} & {2.09} & {0.75} & {\textbf{0.56}} & {1.16} & {3.68} \\
    \multicolumn{1}{c|}{\(\pi/8\)} & {1.36} & {0.46} & {\textbf{0.32}} & {0.63} & {1.89} \\
    \multicolumn{1}{c|}{\(\pi/4\)} & {1.25} & {0.41} & {\textbf{0.28}} & {0.51} & {1.48} \\
    \multicolumn{1}{c|}{\(3\pi/8\)} & {1.59} & {0.50} & {\textbf{0.32}} & {0.57} & {1.60} \\
    \multicolumn{1}{c|}{\(\pi/2\)} & {3.06} & {0.91} & {\textbf{0.55}} & {0.92} & {2.44} \\
    \midrule
    \multicolumn{6}{l}{Note: Displayed are measured values multiplied by \(10^3\).} \\
    \end{tabular}%
  \label{tab:LinRandomNoiseTrueFunction}%
\end{table}%


\begin{table}[h!]
  \centering
  \caption{Average MSE over 100 simulations for the Linear Estimator compared to an out-of-sample testing set with a DGP using random noise directions}
    \begin{tabular}{crrrrr}
          \multicolumn{1}{r|}{} & \multicolumn{5}{c}{Avg MSE: Comparison} \\
          \multicolumn{1}{r|}{} & \multicolumn{5}{c}{to Out-of-Sample} \\
          \multicolumn{1}{r|}{} & \multicolumn{5}{c}{DDF Angle $\theta_t$ } \\
    \cmidrule{2-6}   
    \multicolumn{1}{c|} {MSE Dir Angle $\theta_{\textit{MSE}}$} & \multicolumn{1}{c}{\(0\)}  &  \multicolumn{1}{c}{\( \pi / 8\)} &   \multicolumn{1}{c}{\( \pi / 4\)}    &  \multicolumn{1}{c}{\( 3\pi / 8\)}  &   \multicolumn{1}{c}{\( \pi / 2\)}   \\
    \midrule
    \multicolumn{1}{c|}{\(0\)} & {\textbf{28.28}} & {29.43} & {31.29} & {34.23} & {40.67} \\
    \multicolumn{1}{c|}{\(\pi/8\)} & {18.03} & {\textbf{17.79}} & {18.19} & {19.09} & {21.32} \\
    \multicolumn{1}{c|}{\(\pi/4\)} & {16.38} & {15.55} & {\textbf{15.45}} & {15.77} & {16.90} \\
    \multicolumn{1}{c|}{\(3\pi/8\)} & {20.50} & {18.67} & {18.04} & {\textbf{17.90}} & {18.46} \\
    \multicolumn{1}{c|}{\(\pi/2\)} & {38.63} & {33.07} & {30.68} & {29.29} & {\textbf{28.70}} \\
    \midrule
    \multicolumn{6}{l}{Note: Displayed are measured values multiplied by  \(10^3\).} \\
    \end{tabular}%
  \label{tab:LinRandomNoiseOSMSE}%
\end{table}%

For the out-of-sample testing set, the direction that provides the smallest MSE value is the direction used for the MSE computation. Because the functional estimate is optimized for the direction specified in the SDDF, it is perhaps expected that using the same direction that will be used in the MSE evaluation would produce a relatively low MSE compared to other directions. However, when the functional estimate is compared to the true function, the MSE values are around ten times smaller than the out-of-sample testing case. In out-of-sample testing the presence of noise in the observations causes a deviation regardless of the quality of the estimator or the number of observations. 
The DDF direction corresponding to the smallest MSE is the direction orthogonal to the true function (i.e., \(\pi/4\) for our DGP). This direction provides the shortest distance from the observations to the true function. We conclude that, in this experiment, it is preferable to select a direction orthogonal to the true function (see  Section \ref{sec:5.DirMatters} for further experiments).

From the fixed noise direction experiments (see \ref{sec:Ax.B.AdExp.1}), we observe that using a direction for the estimator that matches the direction used for the noise generation significantly reduces the MSE values compared to the true function. From this, we infer that when endogeneity is severe, using a direction that matches the characteristics of this endogeneity significantly improves the fit of the estimator; i.e., the MSE is $50 \%$ smaller for the matching direction than for the second best direction in $70 \%$ of the cases (see Section \ref{sec:5.DirMatters} for the details).

Finally, we need to solve the problem of evaluating alternative directions when the true function is unknown so that we can evaluate alternative directions in the application data. Below, we describe our proposed alternative measure of fit.



\subsection{Radial MSE Measure}
\label{sec:4.2.MSERadPres}

MSE is typically measured by the average sum of squared errors in the dimension of a single variable, such as cost or output. As explained in Section \ref{sec:4.1.ExpIllustrate}, when we compare out-of-sample performance, we find that the best direction to use in estimating a SDDF is the direction used for MSE evaluation regardless of the direction of noise in the DGP or any other characteristics of the DGP. To avoid this relationship between the direction of estimation and the direction of evaluation, we propose a \textbf{radial MSE measure}.

We begin by normalizing the data to a unit cube and consider a case of \(Q\) outputs and \(n\) observations, where the original observations are:
\[
(y_{i1},\ldots,y_{iQ},c_{i}), \ i=1,\ldots,n.
\]

The normalized observations are:
\begin{eqnarray}
\label{eq:NormObs}
    \breve{y}_{ij}  &=&
    \frac{ y_{ij} - \min_{k}{y_{kj}}}
    {\max_{k}{y_{kj}} - \min_{k}{y_{kj}}}
    , \ j=1,\ldots,Q, \ i,k=1,\ldots,n ,\\
    \breve{c}_{i}  &=&
    \frac{ y_{i} - \min_{k}{c_{k}}}
    {\max_{k}{c_{k}} - \min_{k}{c_{k}}} 
    , \ i,k=1,\ldots,n .
\end{eqnarray}

Our radial MSE measure is the distance from the testing set observation to the estimated function measured along a ray from the testing set observations to the \textbf{center \(C\)}.
Having normalized the data, the center for the radial measure is
\(
C = [ \breve{y}_1, …,\breve{y}_Q,\breve{c}] = \left[
\overbrace{0,\ldots,0}^{Q},1
\right].
\)

The radial MSE measure is the average of the distance from each testing set observation to the estimated function measured radially. Figure \ref{fig:RadMeasureEx} illustrates this measure. For a convex function, a radial measure reduces the bias in the measure for extreme values in the domain.

\begin{figure}[h!]
	\begin{center}
	    \includegraphics[width=4in]{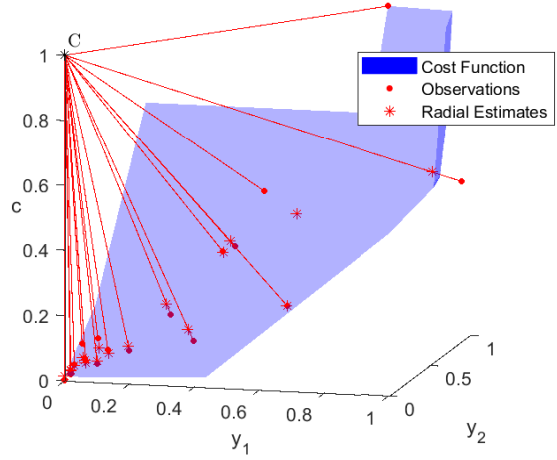}
	\end{center}
	\caption{A Radial MSE Measure on a Cost Function with Two Outputs}
	\label{fig:RadMeasureEx}
\end{figure}

\clearpage

\bigskip
\section{Monte Carlo Simulations}
\label{sec:5.DirMatters}

We next examine how different DGPs affect the optimal direction for the DDF estimator based on a set of Monte Carlo simulations. We consider both random noise directions for each observation and a fixed noise direction representing a high endogeneity case. We consider the effects of the different variance levels for the noise and changes in the underlying distribution of the production data. Using the simplest case of two outputs and a fixed cost level for all observed units allows us to separate the effects of the data and of the function.

\subsection{CNLS-d Formulation for Cost Isoquant Estimation}
\label{sec:CNLSdIso}

Before describing our experiments, we first outline the CNLS-d for estimating the iso-cost level set. It is based on the following optimization problem:

\begin{subequations}
\label{eq:CNLSd_iso}
\begin{align}
&\min_{\bm{\gamma}, \bm{\epsilon}}{\sum_{i=1}^{n}{\epsilon_i^2}} \tag{\ref{eq:CNLSd_iso}}\\
\text{s.t.} \ \ \  -\epsilon_j + \epsilon_i - \bm{\gamma_i}'\,\left(\bm{y_i} - \bm{y_j} \right) \leq 0, & \ \text{for} \ i,j = 1,\ldots,n, \ i \neq j \\
\bm{\gamma_i}'\,\bm{g}^y  = 1, & \ \text{for} \ i = 1,\ldots,n\\
\bm{\gamma_i} \geq 0, & \ \text{for} \ i = 1,\ldots,n.
\end{align}
\end{subequations}

Note all observations, $\bm{y}_i$, have a common cost level. This allows us to focus on a 2-dimensional estimation problem. For results related to 3-dimensional estimation problems see \ref{sec:Ax.B.AdExp.2}, Experiment 6.

We can recover the fitted values, $\hat{y}_i$, and the coefficient, \(\alpha_i, \ i = 1,\ldots,n\), using:
\begin{eqnarray}
\label{eq:CNLSDcoeffs}
    \bm{\hat{y}_i} &=& \bm{y_i} - \epsilon_i \, \bm{g}^y, \ \text{for} \ i = 1,\ldots,n \\
    \alpha_i &=& \bm{\gamma_i}' \, \bm{y_i} + \epsilon_i, \ \text{for} \ i = 1,\ldots,n.
\end{eqnarray}

\subsection{Experiments}
\label{sec:Exps}


We conducted several experiments to investigate the optimal direction for the DDF estimator. Four experiments' results are shown in the main text of the paper with two additional experiments described in the appendix.

\paragraph{Experiment 1 - Base case: A two output circular isoquant with uniformly distributed angle parameters and random noise direction} \mbox{} 
\label{sec:exp1}

For the base case, we consider a fixed cost level and approximate a two output isoquant; i.e., \(Q=2\). Indexing the outputs by $q$ and observations by $i$, we generate the output variables as: 

\begin{equation}
\label{eq:Exp1_TrueVars}
    y_{qi}=\tilde{y}_{qi}+\epsilon_{qi}, \ q=1,\ldots,Q , \ i=1,\ldots,n , 
\end{equation}

\noindent where \(\bm{\tilde{y}_i}\) is the observation on the isoquant and \(\bm{\epsilon_i}\) is the noise. We generate the output levels \(\tilde{y}_{qi}, \ q=1,\ldots,Q \ ,i=1,\ldots,n\) as:

\begin{eqnarray}
\label{eq:Exp1_DetailedOutputs}
    \tilde{y}_{1i} = \cos(\theta_i), \ i=1,\ldots,n \\
    \tilde{y}_{2i} = \sin(\theta_i), \ i=1,\ldots,n,
\end{eqnarray}

\noindent where \(\theta_i, \ i=1,\ldots,n\), is drawn randomly from a continuous uniform distribution, \(U\left[0,\frac{\pi}{2}\right]\). The noise terms, \(\epsilon_{qi}, \ q=1,\ldots,Q ,\ i=1,\ldots,n\), have the following expressions:

\begin{eqnarray}
\label{eq:Exp1_DetailedNoise}
    \epsilon_{1i}= l\; \cos(\theta_{\epsilon_i}), \ i=1,\ldots,n \\
    \epsilon_{2i}= l\; \sin(\theta_{\epsilon_i}), \ i=1,\ldots,n,
\end{eqnarray}

\noindent where the length \(l\) is drawn from the normal distribution \(N\left(0,\lambda \right)\), the angle \(\theta_{\epsilon_i}\) is observation specific and characterizes the noise direction for each observation, and \(\theta_{\epsilon_i}\) is drawn from a continuous uniform distribution \(U\left[-\frac{\pi}{2},\frac{\pi}{2}\right]\). The values considered for the directions in CNLS-d estimator are \(\theta_{\textit{CNLS-d}} \in \{0,\frac{\pi}{8}, \frac{\pi}{4}, \frac{3\pi}{8}, \frac{\pi}{2}\}\). The standard deviation of the normal distribution is \(\lambda = 0.1\). We perform the experiment \(100\) times for each parameter setting. 

Table \ref{tab:UnifRand} reports the radial MSE values from a testing set of $n$ observations lying on the true function.


\begin{table}[h!]
\centering
\caption{Experiment 1: Values of the radial MSE relative to the true function. The angle used in CNLS-d estimator varies and the noise direction is randomly selected. In the DGP, the standard deviation of the noise distribution, \(\lambda\), is 0.1.}
\begin{tabular}{cccccc}
\toprule
\multirow{2}[2]{*}{} & \multicolumn{5}{c}{CNLS-d Direction Angle} \\
    & \(0\)  &  \(\pi/8\) &   \(\pi/4\)    &  \(3\pi/8\)  &   \(\pi/2\)\\
\midrule
Average MSE across simulations & 13.90  & 4.65  & \textbf{3.32} & 4.49  & 13.93 \\
\midrule
\multicolumn{6}{l}{\textit{Note: Displayed are measured values multiplied by }\(10^4\).} \\
\end{tabular}%
\label{tab:UnifRand}%
\end{table}%

\noindent As shown in Table \ref{tab:UnifRand}, the angle corresponding to the smallest MSE (shown in bold) is the one that gives an orthogonal direction to the center of the true function, \(\frac{\pi}{4}\), and that the MSE values differ significantly, increasing at similar rates as the direction angle deviates from \(\frac{\pi}{4}\) in either direction.

\paragraph{Experiment 2 - The base case with fixed noise directions} \mbox{} 
\label{sec:exp2}

In this experiment, \(\theta_{\epsilon_i}\), which characterizes the noise direction for each observation, is constant for all observations, \(\theta_{\epsilon}\). The values used for \(\theta_{\epsilon}\) and the directions in CNLS-d estimator are the same, \(0,\frac{\pi}{8}, \frac{\pi}{4}, \frac{3\pi}{8}, \frac{\pi}{2}\). The standard deviation of the normal distribution is again  \(\lambda = 0.1\). We perform the experiment \(100\) times for each parameter settings. Table \ref{tab:Uniffixednoise} reports the results.


Each row in the Table \ref{tab:Uniffixednoise} corresponds to a different noise direction in DGP. The bold numbers identify the directions in CNLS-d estimator that obtain the smallest MSE for each noise direction. We confirm our previous insight, from the parametric estimator and fixed noise direction case described in \ref{sec:Ax.B.AdExp.1}, that the bold values appearing on the diagonal (from the upper-left to the lower-right of Table \ref{tab:Uniffixednoise}) correspond to the directions used in CNLS-d. This result indicates that selecting the direction in the SDDF that matches the underlying noise direction in the DGP results in improved functional estimates. 

\begin{table}[h!]
  \centering
  \caption{Experiment 2: Values of radial MSE relative to the true function varying the DGP noise direction and the CNLS-d estimator direction. In the DGP, the standard deviation of the noise distribution, \(\lambda\), is 0.1.}
    \begin{tabular}{crrrrr}
    \toprule
          & \multicolumn{5}{c}{CNLS-d Direction Angle} \\
    \cmidrule{2-6}    
    \multicolumn{1}{l}{Noise Direction Angle} &      \multicolumn{1}{c}{\(0\)}  &  \multicolumn{1}{c}{\( \pi / 8\)} &   \multicolumn{1}{c}{\( \pi / 4\)}    &  \multicolumn{1}{c}{\( 3\pi / 8\)}  &   \multicolumn{1}{c}{\(\pi / 2\)}  \\
    \midrule
    \(0\)           & \textbf{2.69} & 3.03  & 4.49  & 8.86  & 25.47 \\
    \( \pi / 8\)    & 7.49  & \textbf{3.44} & 4.00  & 8.07  & 28.83 \\
    \( \pi / 4\)    & 20.28 & 5.79  & \textbf{4.30} & 5.80  & 19.06 \\
    \( 3\pi / 8\)    & 25.58 & 7.80  & 4.18  & \textbf{3.51} & 6.84 \\
    \( \pi / 2\)    & 25.90 & 9.09  & 4.73  & 3.10  & \textbf{2.57} \\
    \midrule
    \multicolumn{6}{l}{\textit{Note: Displayed are measured values multiplied by }\(10^4\).} \\
    \end{tabular}%
  \label{tab:Uniffixednoise}
\end{table}%

\bigskip

\paragraph{Experiment 3. Base case with fixed noise direction and different noise levels} \mbox{} 
\label{sec:exp3} 

\noindent In Experiment 3, we vary the noise term by changing the \(\lambda\) coefficient
. Table \ref{tab:UnifFixedSmaller} reports the results for \(\lambda = 0.05\).

\begin{table}[h!]
  \centering
  \caption{Experiment 3--Less Noise: Values of radial MSE relative to the true function varying the DGP noise direction and the CNLS-d direction. In the DGP, the standard deviation of the noise distribution, \(\lambda\), is 0.05.}
    \begin{tabular}{crrrrr}
    \toprule
          & \multicolumn{5}{c}{CNLS-d Direction Angle} \\
    \cmidrule{2-6}    
    \multicolumn{1}{l}{Noise Direction Angle} &      \multicolumn{1}{c}{\(0\)}  &  \multicolumn{1}{c}{\( \pi / 8\)} &   \multicolumn{1}{c}{\( \pi / 4\)}    &  \multicolumn{1}{c}{\( 3\pi / 8\)}  &   \multicolumn{1}{c}{\(\pi / 2\)}  \\
    \midrule
    \(0\)         & 0.92  & \textbf{0.82} & 0.96  & 1.53  & 5.12 \\
    \( \pi / 8\)  & 1.83  & \textbf{1.09} & \textbf{1.09} & 1.47  & 5.45 \\
    \( \pi / 4\)  & 3.70  & 1.41  & \textbf{1.29} & 1.43  & 3.93 \\
    \( 3\pi / 8\) & 5.75  & 1.68  & 1.27  & \textbf{1.18} & 1.86 \\
    \( \pi / 2\)  & 4.61  & 1.40  & 0.95  & \textbf{0.79} & 0.90 \\
    \midrule
    \multicolumn{6}{l}{\textit{Note: Displayed are measured values multiplied by }\(10^4\).} \\
    \end{tabular}%
  \label{tab:UnifFixedSmaller}%
\end{table}%

In Table \ref{tab:UnifFixedSmaller} (Experiment 3, with \(\lambda = 0.05\)), we do not observe the same diagonal pattern observed in Experiment 2, and the best direction for CNLS-d estimator does not match the direction selected for the noise. 
This leads us to hypothesize that when the noise level is small, data characteristics, such as the distribution of the regressors or the shape of the function, affect the estimation whereas when the noise level is large, regressors' relative variability becomes a more dominant factor in determining the best direction for the CNSL-d estimator.   


However, with \(\lambda = 0.2\) the results of Experiment 3 are consistent with those from Experiment 2; i.e., the best direction always coincides with the noise direction selected. The results of Experiment 3 with \(\lambda = 0.2\) are reported in \ref{sec:Ax.B.AdExp}, Table \ref{tab:UnifFixedLarger} (Experiment 3 with \(\lambda = 0.2\)).



\paragraph{Experiment 4: Base case with different distributions for the initial observations on the true function} \mbox{} 
\label{sec:exp4}
In Experiment 4, we seek to understand how changing the DGP for the angle, \(\theta_i, \ i = 1,\ldots,n\), affects the optimal direction. We consider the three normal distributions with different parameters: \(N\left[\frac{\pi}{8},\frac{\pi}{16}\right]\), \(N\left[\frac{\pi}{4},\frac{\pi}{16}\right]\) and \(N\left[\frac{3\pi}{8},\frac{\pi}{16}\right]\). We truncate the tails of the distribution so that the generated angles fall in the range \(\left[0, \pi/2\right]\). Noise is specified as in  Experiment 1. Table \ref{tab:NormDist} reports the results of this experiment.

\begin{table}[h!]
  \centering
  \caption{Experiment 4: Values of radial MSE relative to the true function varying the CNLS-d direction and the mean of the normal distribution used in the DGP.}
    \begin{tabular}{crrrrr}
    \toprule
     \multicolumn{1}{c}{Mean of the}  &  \multicolumn{5}{c}{CNLS-d Direction angle} \\
    \cmidrule{2-6} 
    \multicolumn{1}{c}{Normal Distribution} (\(\bar{\theta}\)) & \multicolumn{1}{c}{\(0\)}  &  \multicolumn{1}{c}{\( \pi / 8\)} &   \multicolumn{1}{c}{\( \pi / 4\)}    &  \multicolumn{1}{c}{\( 3\pi / 8\)}  &   \multicolumn{1}{c}{\(\pi / 2\)}  \\
    \midrule
    \multicolumn{1}{c}{\(\pi/8\)} & 3.19  & \textbf{2.21} & 3.89  & 10.28 & 46.47 \\
    \multicolumn{1}{c}{\(\pi/4\)}  & 8.44  & 2.92  & \textbf{1.98} & 3.17  & 9.00 \\
    \multicolumn{1}{c}{\(3\pi/8\)} & 45.64 & 10.25 & 4.02  & \textbf{2.43} & 3.07 \\
    \midrule
    \multicolumn{6}{l}{\textit{Note: Displayed are measured values multiplied by }\(10^4\).} \\
    \end{tabular}%
  \label{tab:NormDist}%
\end{table}%

In Table {\ref{tab:NormDist}}, we observe that selecting a direction in the SDDF to match \(\bar{\theta}\), the mean of the distribution for the angle variable used in the DGP, corresponds to the smallest MSE value. This result suggests that the estimator's performance improves when we select a direction that points to the ``center'' of the data. 

\ref{sec:Ax.B.AdExp.2} presents additional experiments, varying the distribution of the observations and considering three outputs with a fixed costed level. These experiments lend further support to the strategy of selecting a direction pointed to the ``center'' of the data. 

\bigskip

\section{Proposed Approach to Direction Selection}
\label{sec:6.DirSelApp}


Based on Monte Carlo simulations, we found that the optimal direction depends on the shape of the function and the distribution of the observed data. This of itself is not surprising. However, by assuming a unimodal distribution for the data generation process, a direction that aims towards the ``center" of the data and is perpendicular to the true function at that point tends to outperform other directions. To apply this finding for a data set with \(Q\) outputs and \(n\) observations, \( (y_{i1},\ldots,y_{iQ},c_{i}), \ i=1,\ldots,n \), we suggest selecting the direction for the DDF as follows: 

\begin{mdframed}
\begin{enumerate}
    \item 
Normalize the data:
\begin{eqnarray}
    \breve{y}_{ij}  &=&
    \frac{ y_{ij} - \min_{k}{y_{kj}}}
    {\max_{k}{y_{kj}} - \min_{k}{y_{kj}}}
    , \ j=1,\ldots,Q, \ i,k=1,\ldots,n \\
    \breve{c}_{i}  &=&
    \frac{ y_{i} - \min_{k}{c_{k}}}
    {\max_{k}{c_{k}} - \min_{k}{c_{k}}} 
    , \ i,k=1,\ldots,n
\end{eqnarray}
\item Select the direction: 
\begin{align}  
    \begin{bmatrix}
       g^{y_1} \\           
       \vdots \\
       g^{y_Q} \\
       g^{c}
      \end{bmatrix} =
      \begin{bmatrix}
       \text{median}\left(\breve{y}_{i1}\right)\\
       \vdots \\
       \text{median}\left(\breve{y}_{iQ}\right) \\
       1-\text{median}\left(\breve{c}_{i}\right).
    \end{bmatrix}
\end{align}

\end{enumerate}
\end{mdframed}


This provides a method for direction selection that can be used in applications when the true direction is unknown.\footnote{A cost function is convex with respect to the point $[ \breve{y}_1, …,\breve{y}_Q,\breve{C}]=[ 0, …,0,1]$. Therefore, to have a ray that points from the point $[ 0, …,0,1]$ to the median of the data, the directional vector $[\text{median} (\breve{y}_{i1}),...,\text{median} (\breve{y}_{iQ}),1 - \text{median} (\breve{c}_{i})]$ is needed.} We test the proposed method by estimating a cost function for a US hospital data set. 

\section{Cost Function Estimation of the US Hospital Sector}
\label{sec:7.HospApp}

We analyze the cost variation across US hospitals using a conditional mean estimate of the cost function. We estimate a multi-output cost function for the US hospital sector by implementing our data-driven method for selecting the direction vector for the DDF. We report most productive scale size and marginal cost estimates. 

\subsection{Description of the Data Set}
\label{sec:7.1.DescDataset}

We obtain cost data from the American Hospital Association's (AHA) Annual Survey Databases from 2007 to 2009. The costs reported include payroll, employee benefits, depreciation, interest, supply expenses and other expenses. We estimate a cost function which can be interpreted as a distance function with a single input when hospitals face the same input prices\footnote{Unfortunately we do not observe input prices. We chose to estimate a cost function and make the assumption of common input prices rather than impose an arbitrary division of the cost.}. We obtain hospital output data from the Healthcare Cost and Utilization Project (HCUP) National Inpatient Sample (NIS) core file that captures data annually for all discharges for a 20\% sample of US community hospitals. The hospital sample changes every year. For each patient discharged, all procedures received are recorded as International Classification of Diseases, Ninth Revision, Clinical Modification (ICD9-CM) codes. The typical hospital in the US relies on these detailed codes to quantify the medical services it provides 
(\cite{zuckerman1994measuring}). We map the codes to four categories of procedures, specifically the procedure categories are ``Minor Diagnostic," ``Minor Therapeutic," ``Major Diagnostic," and ``Major Therapeutic" which are standard output categories in the literature (\cite{PopeJohnson2013}). The number of procedures is each category are summed for each hospital by year to construct the output variables. The total number of hospitals sampled is around 1,000 per year from 2007 to 2009.\footnote{The NIS survey is a stratified systematic random sample. The strata criteria are urban or rural location, teaching status, ownership, and bed size. This stratification ensures a more representative sample of discharges than a simple random sample would yield. For details see \url{https://www.hcup-us.ahrq.gov/tech_assist/sampledesign/508_compliance/508course.htm#{463754B8-A305-47E3-B7EE-A43953AA9478}}.} However, mapping between the two databases is only possible for approximately 50\% of the hospitals in the HCUP data, resulting in approximately 450 to 525 observations available each year.

\begin{table}[htbp]
  \centering
  \caption{Summary Statistics of the Hospital Data Set}
    \begin{tabular}{c|ccccc}
          & \multicolumn{5}{c}{\textbf{2007}}   \\
          & \multicolumn{5}{c}{(523 observations)}   \\
          & Cost (\$)  & MajDiag & MajTher & MinDiag & MinTher  \\
    Mean  & 146M & 162   & 4083  & 3499  & 7299   \\
    Skewness & 3.51  & 2.89  & 2.63  & 5.19  & 3.28   \\
    25-percentile & 24M & 9     & 277   & 108   & 512    \\
    50-percentile & 72M & 73    & 1688  & 938   & 3108   \\
    75-percentile & 182M & 207   & 5443  & 4082  & 9628   \\
    \midrule
        & \multicolumn{5}{c}{\textbf{2008}}   \\
        & \multicolumn{5}{c}{(511 observations)}   \\
        & Cost (\$)  & MajDiag & MajTher & MinDiag & MinTher  \\
     Mean    & 163M & 175   & 4433  & 3688  & 7657   \\
    Skewness   & 4.19  & 3.80  & 2.97  & 4.87  & 2.82   \\
    25-percentile  & 28M & 10    & 325   & 120   & 545    \\
    50-percentile   & 83M & 76    & 1809  & 1013  & 3350   \\
    75-percentile   & 189M & 246   & 5984  & 4569  & 10781  \\
    \midrule
        & \multicolumn{5}{c}{\textbf{2009}}   \\
        & \multicolumn{5}{c}{(458 observations)}   \\
        & Cost (\$)  & MajDiag & MajTher & MinDiag & MinTher  \\
     Mean &    175M & 161   & 4471  & 3615  & 7905\\
    Skewness & 3.39  & 3.78  & 2.43  & 4.68  & 2.41   \\
    25-percentile & 31M & 12    & 420   & 148   & 713    \\
    50-percentile & 91M & 69    & 1737  & 1136  & 3458   \\
    75-percentile & 220M & 230   & 6402  & 4694  & 10989  \\
    \bottomrule
    \end{tabular}%
  \label{tab:Hosp_SumStats}%
\end{table}%


\subsection{Pre-Analysis of the Data Set}
\label{sec:7.2.PresAnalysis}

\subsubsection{Testing the Relevance of the Regressors}
\label{sec:7.2.1.TestRegRel}


We begin by testing the statistical significance of our four output variables, \(\bm{y} = \left(y_1, y_2, y_3, y_4\right)\), for predicting cost. While the variables selected have been used in previous studies, we use these tests to evaluate whether this variable specification can be rejected for the current data set of U.S. hospitals from 2007-2009. 

The null hypothesis stated for the $q$th output is:
\[
    H_0 : P\left[
                    E\left(c \, | \, \bm{y}-\left\{y_q\right\}\right) = E\left(c \, | \, \bm{y}\right)
                \right] = 1
\]
against:\footnote{Where the notation $\bm{y}-\left\{y_q\right\}$ implies the vector $\bm{y}$ excluding the $q$th component.} 
\[
    H_1 : P\left[
                    E\left(c \, | \, \bm{y}-\left\{y_q\right\}\right) = E\left(c \, | \, \bm{y}\right)
                \right] < 1.
\]

We implement the test with a Local Constant Least Squares (LCLS) estimator described in \cite{HendersonParmeter2015}, calculating bandwidths using least-squares cross-validation. We use 399 wild bootstraps. 
We found that all output variables were highly statistically significant for all years.

\subsection{Results}
\label{sec:7.3.AppResults}

\paragraph{CNLS-d and Different Directions}
\label{sec:CompDirCNLSD}

We analyze each year of data as a separate cross-section because, as noted above, the HCUP does not track the same set of hospitals across years. 
To illuminate the direction's effect on the functional estimates, we graph ``Cost" as a function of ``Major Diagnostic Procedures" and ``Major Therapeutic Procedures" holding ``Minor Diagnostic Procedures'' and ``Minor Therapeutic Procedures'' constant at their median values. Figure \ref{fig:HospAppDiffDirs} illustrates the estimates for three different directions, one with only a cost component, one with only a component in Major Therapeutic Procedures, and one that comes from our median approach. Visual inspection indicates that the estimates with different directions produce significantly different estimates, highlighting the importance of considering the question of direction selection. 

\begin{figure}[h!]
	\begin{center}
	    \includegraphics[width=5in]{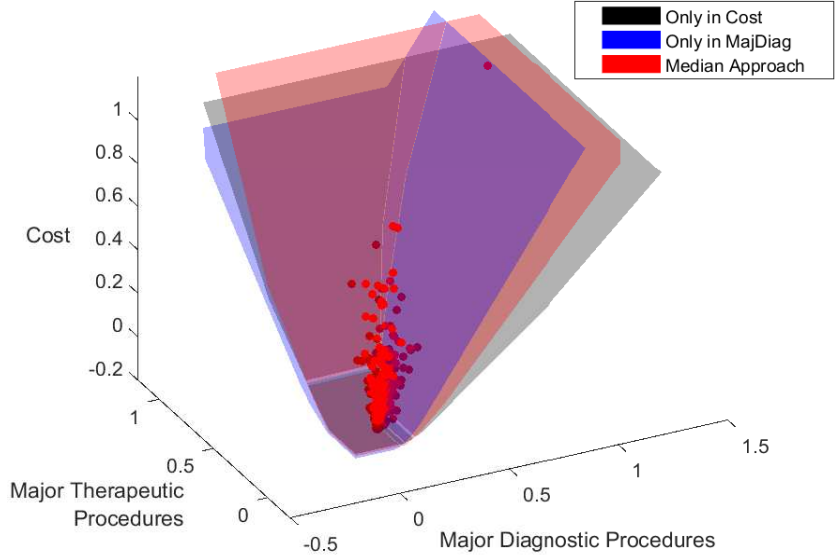}
	\end{center}
	\caption{US Hospital Cost Function Estimates for Three Directions}
	\label{fig:HospAppDiffDirs}
\end{figure}

We compare the estimator's performance when using different directions. Table \ref{tab:HospAv} reports the MSE for three sample directions in each year. We define our direction vector as \((g^{y1},g^{y2},g^{y3},g^{y4},g^c)\).\footnote{We focus on types of directions found to be competitive in our Monte Carlo simulations.}

\begin{table}[h!]
  \centering
  \caption{Results of the radial MSE values for different directions by year}
    \begin{tabular}{l|ccc}
    \multicolumn{1}{c|}{Direction} & \multicolumn{3}{c}{Year} \\
    \multicolumn{1}{c|}{(\(g^{y_1},g^{y_2},g^{y_3},g^{y_4},g^c)\)} & 2007  & 2008  & 2009 \\
    \midrule
     (0.45, 0.45, 0.45, 0.45, 0.45) & 2.10  & \textbf{1.30}  & 1.50 \\
    (0.35, 0.35, 0.35, 0.35, 0.71)  & 2.15  & 1.65  & \textbf{1.29} \\
    Median Direction  & \textbf{1.79}  & 1.55  & 1.34 \\
    \midrule
    \multicolumn{4}{l}{\textit{Note: Displayed are the measured values }}\\ 
    \multicolumn{4}{l}{\textit{multiplied by} \(10^3\)} \\
    \end{tabular}%
  \label{tab:HospAv}%
\end{table}%

We pick two directions, one with equal components in all dimensions, and a second direction that has a cost component that is double the value of the output components. The median vector is \((0.014,0.041,0.033,0.038,0.998)\), which is very close to the cost-only direction. The MSE varies by 15-30\% over the different directions. We observe that there is no clear dominant direction; however, the median direction performs reasonably well in all cases. We conclude that as long as a direction with non-zero components for all variables that could contain noise is selected, then the precise direction selected is not critical to obtaining improved estimation results.


\paragraph{Comparison with other estimators}
\label{sec:CompOtherEst_MSE}

We compare three methods to estimate a cost function: 1) a quadratic functional form (without the cross-product terms), \cite{fare2010functional}; 2) CNLS-d with the direction selection method proposed in Section \ref{sec:6.DirSelApp}; and 3) lower bound estimate calculated using a local linear kernel regression with a Gaussian kernel and leave one-out cross-validation for bandwidth selection, \cite{li2007nonparametric}.\footnote{For CNLS-d, we select a value for an upper bound through a tuning process, \(\text{Ubound} = 0.5\), and impose the upper bound on the slope coefficients estimated \citep{lim2014convergence}.} We select these estimators because a quadratic functional form to model production has been used in recent productivity and efficiency analysis of healthcare. See, for example, \cite{Ferrier2018}. The local linear kernel is selected because it is an extremely flexible nonparametric estimator and provides a lower bound for the performance of a functional estimate. However, note that the local linear kernel does not satisfy standard properties of a cost function; i.e., cost is monotonic in output and marginal costs are increasing as output increases. 

We will use the criteria of K-fold average MSE with $k=5$ to compare the approaches. This means we split the data equally into 5 parts. We use 4 of the 5 parts for estimation (training) and evaluate the performance of the estimator on the 5th part (testing). We do this for all 5 parts and average the results. 
The values presented in Table \ref{tab:HospDiffMethods} correspond to the average across folds.



\begin{table}[h!]
  \centering
  \caption{US Hospital K-fold Average MSE in Cost to the Cost Function Estimates for the Three Functional Specifications by Year}
    \begin{tabular}{c|cccc}
      & Quadratic  & CNLS-d &  Lower Bound\\
     Year  & Regression  &  (Median Direction) & Estimator \\
    \midrule
    2007  & 3.43    & 2.44  & \textbf{2.35} \\
    2008  & 2.76    & 1.93  & \textbf{1.48} \\
    2009  & 2.43    & 1.80  & \textbf{1.53} \\
    \midrule
    \multicolumn{5}{l}{\textit{Note: The MSE values displayed are the measured}} \\ \multicolumn{5}{l}{\textit{values multiplied by }\(10^3\)}\\
    \end{tabular}%
  \label{tab:HospDiffMethods}%
\end{table}%



While the average MSEs for all years are lowest for the lower bound estimator, CNLS-d performs relatively well as it is close to the lower bound in terms of fitting performance while imposing standard axioms of a cost function. As is true of most production data, the hospital data are very noisy. The shape restrictions imposed in CNLS-d improves the interpretability. The CNLS-d estimator outperforms the parametric approach, indicating the general benefits of nonparametric estimators. 


\paragraph{Description of Functional Estimates - MPSS and Marginal Costs}
\label{sec:CompOtherEst_MPSS+MC}

We report the most productive scale size (MPSS) and the marginal costs for the a quadratic parametric estimator, the CNLS-d estimator with our proposed direction selection method, and an alternative.\footnote{Here most productive scale size is measured on each ray from the origin (fixing the output ratios) and is defined as the cost level that maximizes the ratio of aggregate output to cost. Marginal cost is measured on each ray from the origin (fixing the output ratios) and is defined as the cost to increase aggregate output by one unit.} 
These metrics are determined on the averaged K-fold estimations for each estimation method. For the MPSS, we present the cost levels obtained for different ratios of \textit{Minor Therapeutic} procedures (MinTher) and \textit{Major Therapeutic} procedures (MajTher), with the minor and major diagnostics held constant at their median levels.

MPSS results are presented in Table \ref{tab:MPSS_c-level_ratios} and the values for CNLS-d (Median Direction) are illustrated in Figure \ref{fig:HospMPSS_CNLSdMed}. We observe small variations across both years and estimators. The differences across years are in part due to the sample changing across years. Most hospitals are small and operate close to the MPSS. 
However, there are several large hospitals that are operating significantly above MPSS. Hospitals might choose to operate at larger scales and provide a large array of services allowing consumers to fulfill multiple healthcare needs. 

For marginal costs, we present the values for different percentiles of the MinTher and MajTher, with the minor and major diagnostics held constant at their median levels. A more exhaustive comparison across all outputs is presented in \ref{sec:Ax.C.HospAppAnnex}. Marginal cost information can be used by hospital decision makers to select the types of improvements that are likely to result in higher productivity with minimal cost increase. For example, consider a hospital that is in the $50^{th}$ percentile of the data set for all four outputs in 2008 and the hospital manager has the option to expand operations for either minor or major diagnostic procedures. Results reported in Tables \ref{tab:MC_MajDiag_9rows} and \ref{tab:MC_MajTher_9rows} indicate that an increase of 1 minor therapeutic procedures would result in a $\$ 4.9k$ increase in cost. Alternatively, an increase of 1 major therapeutic procedures would result in a $\$ 7.7k$ increase in cost. 
A decision maker would want to consider the revenue generated by the different procedures; however, these estimates provide insights regarding the incremental cost of additional major and minor therapeutic procedures. 

CNLS-d is the most flexible of the estimators and allows MPSS values to fluctuate significantly across percentiles. CNLS-d does not smooth variation, rather it minimizes the distance from each observation to the shape constrained estimator. In \ref{sec:Ax.C.HospAppAnnex}, results for the local linear kernel estimator are also presented. Even though the local linear kernel bandwidths are selected via cross-validation, relatively large values are selected due to the relatively noisy data and the highly skewed distribution of output. These large bandwidths and the parametric nature of the quadratic function make these two estimators relatively less flexible compared to CNLS-d. A feature of performance that is captured only by CNLS-d is that, hospitals specializing in either minor or major therapeutics maximize productivity at a larger scales of operation as illustrated in Figure \ref{fig:HospMPSS_CNLSdMed}.


\begin{table}[h!]
  \centering
  \caption{Most Productive Scale Size measured in cost ($\$M$) conditional on Minor Therapeutic procedures (MinTher) and Major Therapeutic procedures (MajTher), Minor Diagnostic procedures (MinDiag) and Major Diagnostic procedures (MajDiag) held constant at the 50th percentile}
    \begin{tabular}{c|ccc|ccc|ccc|}
    \toprule
    Ratio & \multicolumn{3}{c|}{Quadratic Regression} & \multicolumn{3}{c|}{CNLS-d (median)} & \multicolumn{3}{c|}{CNLS-d (equal)} \\
\cmidrule{2-10}    MajTher/MinTher & 2007  & 2008  & 2009  & 2007  & 2008  & 2009  & 2007  & 2008  & 2009 \\
    \midrule
    20\%  & 13    & 379   & 252   & 210   & 61    & 88    & 224   & 137   & 106 \\
    30\%  & 17    & 861   & 640   & 146   & 66    & 83    & 134   & 129   & 148 \\
    40\%  & 272   & 377   & 1090  & 107   & 56    & 77    & 127   & 85    & 135 \\
    50\%  & 870   & 249   & 1552  & 112   & 64    & 85    & 124   & 126   & 134 \\
    60\%  & 360   & 210   & 276   & 90    & 70    & 120   & 88    & 96    & 142 \\
    70\%  & 205   & 182   & 187   & 111   & 66    & 184   & 132   & 104   & 104 \\
    80\%  & 151   & 170   & 150   & 174   & 69    & 286   & 221   & 110   & 111 \\
    \midrule
    \multicolumn{10}{l}{\textit{Note: The values displayed are in \$M}}\\
    \end{tabular}%
  \label{tab:MPSS_c-level_ratios}%
\end{table}%

\begin{figure}[h!]
	\begin{center}
	    \includegraphics[width=7in]{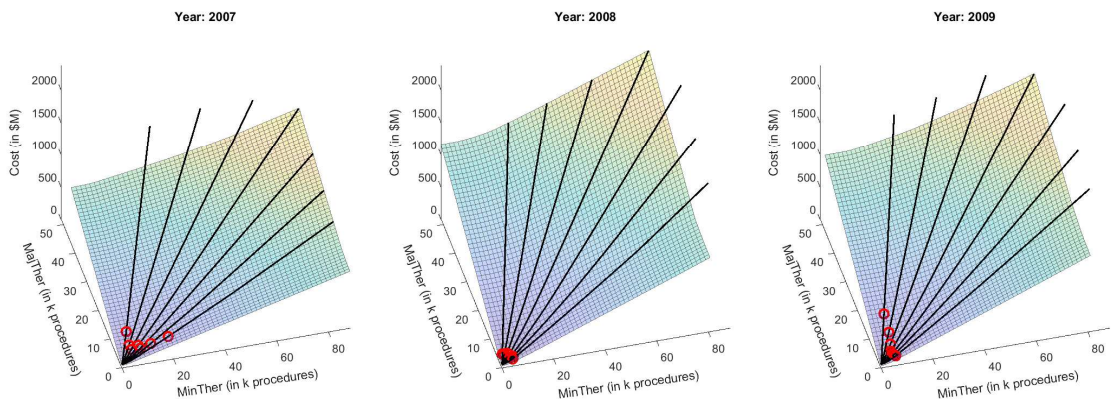}
	\end{center}
	\caption{Most Productive Scale Size (in red) on the estimated function by ``CNLS-d (Med)", CNLS-d using the median approach for the direction, for different ratios of Major Therapeutic Procedures over Minor Therapeutic Procedures}
	\label{fig:HospMPSS_CNLSdMed}
\end{figure}

\begin{table}[h!]
  \centering
  \caption{Marginal Cost of Minor Therapeutic Procedures}
    \begin{tabular}{cc|ccc|ccc|ccc}
    \toprule
    \multicolumn{2}{c|}{Percentile} & \multicolumn{3}{c|}{Quadratic Regression} & \multicolumn{3}{c|}{CNLS-d (median)} & \multicolumn{3}{c}{CNLS-d (equal)} \\
    \midrule
     MinTher & MajTher & 2007  & 2008  & 2009  & 2007  & 2008  & 2009  & 2007  & 2008  & 2009 \\
    \midrule
    25    & 25    & 8.9   & 6.5   & 13.2  & 0.03  & 0.03  & 0.03  & 0.2   & 0.02  & 0.1 \\
    25    & 50    & 8.9   & 6.5   & 13.2  & 0.05  & 0.1   & 0.1   & 0.04  & 0.1   & 0.04 \\
    25    & 75    & 8.9   & 6.5   & 13.2  & 0.2   & 0.04  & 0.03  & 0.1   & 0.02  & 0.02 \\
    50    & 25    & 8.1   & 6.1   & 12.4  & 6.9   & 5.5   & 7.4   & 5.9   & 6.3   & 7.8 \\
    50    & 50    & 8.1   & 6.1   & 12.4  & 4.3   & 4.9   & 7.8   & 2.1   & 3.7   & 7.4 \\
    50    & 75    & 8.1   & 6.1   & 12.4  & 0.2   & 0.4   & 0.03  & 0.1   & 0.02  & 0.02 \\
    75    & 25    & 6.0   & 5.0   & 10.4  & 9.6   & 13.5  & 14.0  & 9.5   & 10.9  & 14.1 \\
    75    & 50    & 6.0   & 5.0   & 10.4  & 9.6   & 13.5  & 14.3  & 9.6   & 10.9  & 13.8 \\
    75    & 75    & 6.0   & 5.0   & 10.4  & 5.7   & 10.1  & 6.4   & 4.6   & 8.7   & 6.4 \\
    \midrule
    \multicolumn{11}{l}{\textit{Note: The values displayed are in \$k }}\\
    \end{tabular}%
  \label{tab:MC_MajDiag_9rows}%
\end{table}%

\begin{table}[h!]
  \centering
  \caption{Marginal Cost of Major Therapeutic Procedures}
    \begin{tabular}{cc|ccc|ccc|ccc}
    \toprule
    \multicolumn{2}{c|}{Percentile} & \multicolumn{3}{c|}{Quadratic Regression} & \multicolumn{3}{c|}{CNLS-d (median)} & \multicolumn{3}{c}{CNLS-d (equal)} \\
    \midrule
     MinTher & MajTher & 2007  & 2008  & 2009  & 2007  & 2008  & 2009  & 2007  & 2008  & 2009 \\
    \midrule
    25    & 25    & 10.5  & 11.5  & 9.8   & 0.1   & 0.04  & 0.1   & 0.2   & 0.03  & 0.1 \\
    25    & 50    & 11.7  & 13.0  & 10.8  & 11.3  & 11.8  & 15.7  & 10.5  & 10.3  & 14.6 \\
    25    & 75    & 15.1  & 17.2  & 14.5  & 19.8  & 22.1  & 24.6  & 19.8  & 21.8  & 24.0 \\
    50    & 25    & 10.5  & 11.5  & 9.8   & 0.4   & 0.2   & 0.5   & 0.1   & 0.1   & 0.4 \\
    50    & 50    & 11.7  & 13.0  & 10.8  & 3.7   & 7.7   & 1.7   & 6.9   & 7.1   & 3.7 \\
    50    & 75    & 15.1  & 17.2  & 14.5  & 19.8  & 22.0  & 24.6  & 19.8  & 21.8  & 24.0 \\
    75    & 25    & 10.5  & 11.5  & 9.8   & 0.2   & 0.03  & 0.1   & 0.0   & 0.1   & 0.1 \\
    75    & 50    & 11.7  & 13.0  & 10.8  & 0.2   & 0.2   & 0.4   & 0.8   & 0.1   & 0.3 \\
    75    & 75    & 15.1  & 17.2  & 14.5  & 18.3  & 12.4  & 19.8  & 16.2  & 11.0  & 15.2 \\
    \midrule
    \multicolumn{11}{l}{\textit{Note: The values displayed are in \$k}}\\
    \end{tabular}%
  \label{tab:MC_MajTher_9rows}%
\end{table}%

\begin{figure}[h!]
\centerline{%
\includegraphics[width=0.5\textwidth]{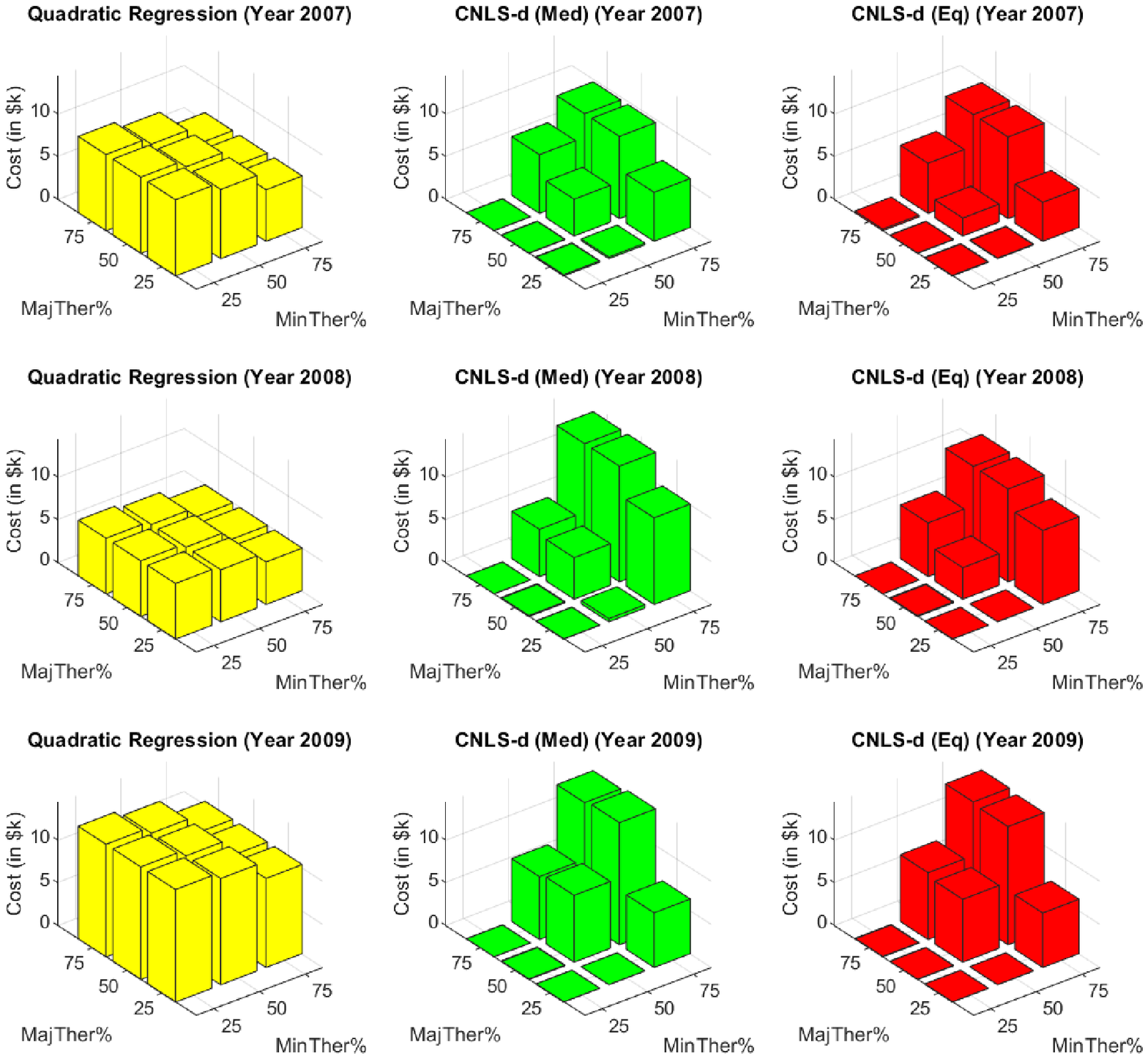}%
\includegraphics[width=0.5\textwidth]{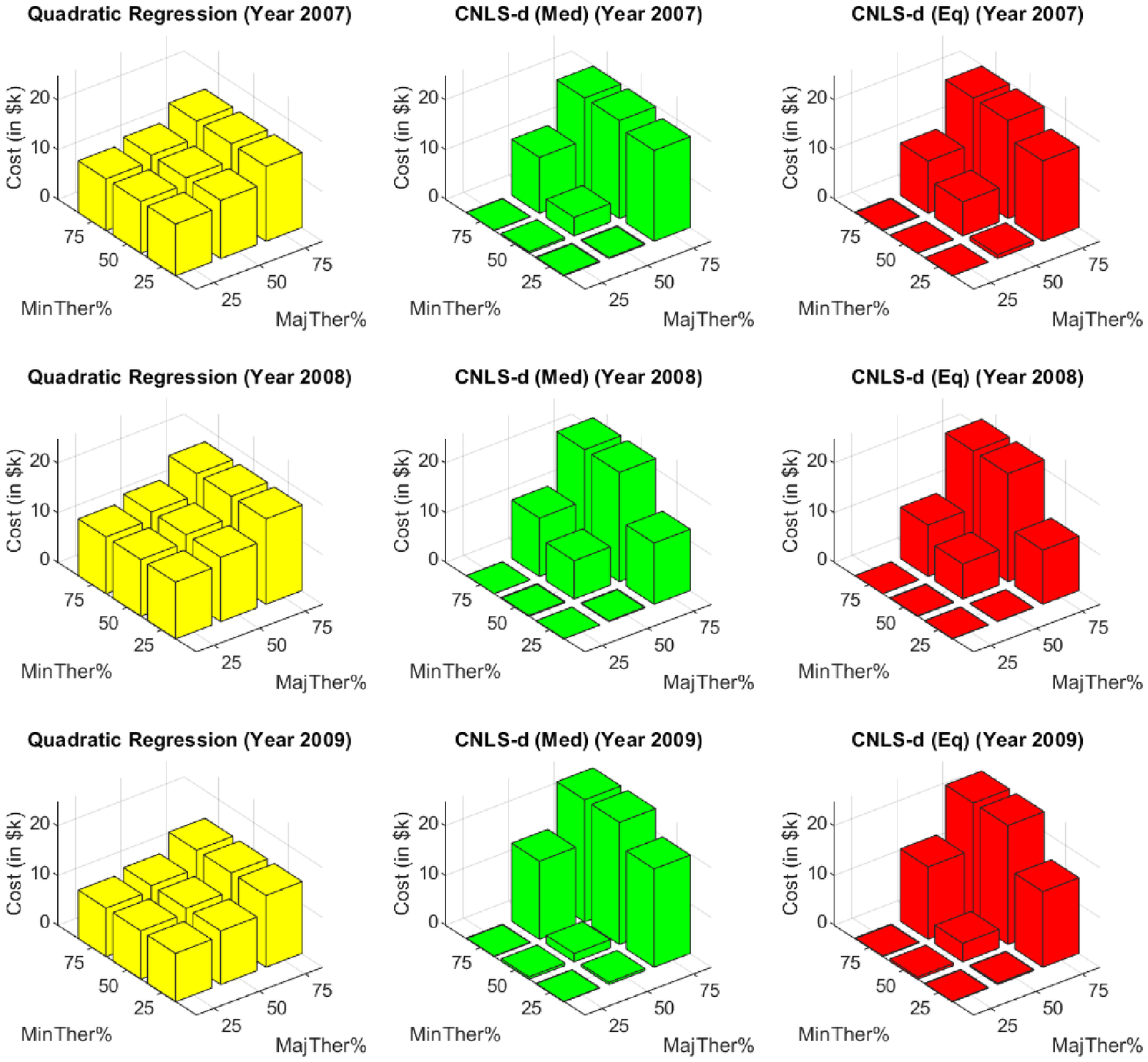}%
}%
\caption{Marginal Cost of the Minor Therapeutic procedures (left) and Marginal Cost of the Major Therapeutic procedures (right) (``CNLS-d (Med)" corresponds to CNLS-d using the median approach for the direction and ``CNLS-d (Eq)'' corresponds to CNLS-d using the direction  with equal components in all netputs}
\label{fig:MCMajDiag+MCMajTher}
\end{figure}



The marginal cost results for Minor Therapeutic procedures are presented in Table \ref{tab:MC_MajDiag_9rows} and Figure \ref{fig:MCMajDiag+MCMajTher} (left) and the marginal cost results for Major Therapeutic procedures are reported in Table \ref{tab:MC_MajTher_9rows} and Figure \ref{fig:MCMajDiag+MCMajTher} (right). 
As was the case for MPSS (see Table \ref{tab:MPSS_c-level_ratios}), CNLS-d is more flexible and its marginal cost estimates vary significantly across percentiles. The CNLS-d with different directions provides very similar marginal costs estimates. However, the CNLS-d estimates differ significantly from the marginal cost estimates obtained with the parametric estimator. 
For CNLS-d the marginal costs results are in line with the theory that marginal costs are increasing with scale. This property can also be violated if using a non-parametric estimator without any shape constraints imposed. For example this can be seen in the marginal costs of minor therapeutic procedures for the parametric (quadratic) regression estimator, Figure \ref{fig:MCMajDiag+MCMajTher}. 


Our data set, which combines AHA cost data with AHRQ output data for a broad sample of hospitals from across the US, is unique to the best of our knowledge. However, the marginal cost estimates are broadly in line with marginal cost estimates for US hospitals for similar time periods. \cite{gowrisankaran2015mergers} studied a considerably smaller set of Northern Virginia hospitals observed in 2006 that, on average, were larger that hospitals in our data set. Due to the differences in the measures of output the marginal cost levels are not directly comparable. However, conditional on the size variation, the variation in marginal costs is similar to the variation we observe for the parametric (quadratic) regression specification applied to our data. \cite{boussemart2015two} analyzed data on nearly 150 hospitals located in Florida observed in 2005. The authors use a different output specification and a translog model; however, their distribution of hospital size is similar to our data set and we observe similar variances in marginal costs with the parametric (quadratic) regression specification applied to our data.

\section{Conclusions}
\label{sec:8.Conclusion}

This paper investigated the improvement in functional estimates when specifying a particular direction in CNLS-d. 
Based on Monte Carlo experiments, two primary findings emerged from our analysis. First, directions close to the average orthogonal direction to the true function performed well. Second, when the data are noisy, selecting a direction that matched the noise direction of the DGP improves estimator performance. Our simulations indicate that CNLS-d with a direction orthogonal to the data is preferable if the noise level is not too large and that a direction that matches the noise direction of the DGP is preferred if the noise level is large. 
Thus, if users know the shape of the data or the characteristics of the noise, they can use CNLS-d with a direction orthogonal to the data if the noise coefficient is small. Or if the noise coefficient is large, the user can select a direction close to the true noise direction, with non-zero components in all variables that potentially have noise. Our application to US hospital data shows that CNLS-d performs similarly across different directions that all include non-zero components of the direction vector for variables that potentially have noise in their measurement. 

In future research, we propose developing an alternative estimator that incorporates multiple directions in CNLS-d while maintaining the concavity axiom. This would permit treating subgroups within the data, allowing different assumptions to be made across subgroups (e.g., for-profit vs. not-for-profit hospitals). 






\bibliography{References}

\pagebreak

\bigskip
\appendix

\appendixpage
\addappheadtotoc

The appendix is composed of the following parts:
\begin{itemize}
    \item Properties of Directional Distance Functions and CNLS-d (\ref{sec:Ax.A.ApprDetails})
    
    \item Monte Carlo, Additional Experiments (\ref{sec:Ax.B.AdExp})
    
    \item Detailed Results for the Hospital Application (\ref{sec:Ax.C.HospAppAnnex})
    
\end{itemize}

\pagebreak

\section{Properties of Directional Distance Functions and CNLS-d}
\label{sec:Ax.A.ApprDetails}


\subsection{Direction Selection in Directional Distance Functions}
\label{sec:Ax.A.1.ImpofDisc}

In this appendix we prove that the direction vector affects the functional estimates. Let $\bm{g}^{x,y}=(\bm{g}^x,\bm{g}^y)$, then we can state the following theorem:

\begin{restatable}{thm}{effectdir}
    \label{thm:effectdir}
		Suppose that two direction vectors exist, $\bm{g}^{x,y}_a$ and $\bm{g}^{x,y}_b$, such that $\bm{g}^{x,y}_a \neq \bm{g}^{x,y}_b$. Then the directional distance function estimates using these two different directions are not equal, $D(\bm{X},\bm{Y};\bm{g}^{x,y}_a) \neq D(\bm{X},\bm{Y};\bm{g}^{x,y}_b)$.
\end{restatable}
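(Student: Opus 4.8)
The plan is to reduce the estimator in \eqref{eq:ParamDDF} to a single constrained quadratic program whose minimizer can be written in closed form as a function of the direction, and then to read off from that closed form that distinct directions force distinct estimates. First I would stack the slope coefficients as $\bm{c} = (\bm{\beta}',\bm{\gamma}')'$, fold the output sign into the regressors by writing $\bm{v}_i = (\bm{x}_i',-\bm{y}_i')'$, and set $\bm{g} = \bm{g}^{x,y} = (\bm{g}^{x\prime},\bm{g}^{y\prime})'$. With this notation the defining relation \eqref{subeq:P0.expr} reads $\epsilon_i = \alpha + \bm{c}'\bm{v}_i$ and the translation constraint \eqref{subeq:P1.TransProp} becomes the single affine restriction $\bm{c}'\bm{g} = 1$, so the problem is to minimize $\sum_{i=1}^n (\alpha + \bm{c}'\bm{v}_i)^2$ over $(\alpha,\bm{c})$ subject to $\bm{c}'\bm{g}=1$.

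Second, I would profile out the intercept. For fixed $\bm{c}$ the inner minimization over $\alpha$ is unconstrained and gives $\hat{\alpha} = -\bm{c}'\bar{\bm{v}}$, where $\bar{\bm{v}}$ is the sample mean of the $\bm{v}_i$; the residuals then become $\epsilon_i = \bm{c}'(\bm{v}_i-\bar{\bm{v}})$ and the objective collapses to the quadratic form $\bm{c}'S\bm{c}$, where $S = \sum_{i=1}^n (\bm{v}_i-\bar{\bm{v}})(\bm{v}_i-\bar{\bm{v}})'$ is the centered scatter matrix of the stacked netputs. Assuming the data are not contained in a proper affine subspace, so that $S$ is positive definite, this is a strictly convex objective on a nonempty affine set, hence it has a unique minimizer, and a Lagrange-multiplier calculation delivers $\hat{\bm{c}}(\bm{g}) = S^{-1}\bm{g}\,/\,(\bm{g}'S^{-1}\bm{g})$ together with $\hat{\alpha}(\bm{g}) = -\hat{\bm{c}}(\bm{g})'\bar{\bm{v}}$.

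Third, I would prove the claim by contradiction from this closed form. Note first that $\hat{\bm{c}}(\bm{g}) \neq \bm{0}$ because $\bm{g} \neq \bm{0}$ and $S^{-1}$ is nonsingular. Suppose $\hat{\bm{c}}(\bm{g}_a) = \hat{\bm{c}}(\bm{g}_b) =: \bm{c}^{\ast}$ for $\bm{g}_a := \bm{g}^{x,y}_a$ and $\bm{g}_b := \bm{g}^{x,y}_b$. Then $S^{-1}\bm{g}_a$ and $S^{-1}\bm{g}_b$ are each scalar multiples of $\bm{c}^{\ast}$, so applying $S$ shows $\bm{g}_a$ and $\bm{g}_b$ are each scalar multiples of $S\bm{c}^{\ast}$, whence $\bm{g}_a = \lambda\bm{g}_b$ for some scalar $\lambda$. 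Substituting into the closed form gives the homogeneity $\hat{\bm{c}}(\lambda\bm{g}_b) = \lambda^{-1}\hat{\bm{c}}(\bm{g}_b)$, and equality with $\hat{\bm{c}}(\bm{g}_b)\neq\bm{0}$ forces $\lambda = 1$, i.e. $\bm{g}_a = \bm{g}_b$, contradicting the hypothesis $\bm{g}^{x,y}_a \neq \bm{g}^{x,y}_b$. The estimated coefficients therefore differ, and hence so do the estimated distance functions $D(\bm{X},\bm{Y};\cdot)$.

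I expect the main obstacle to be the non-degeneracy of $S$: both the closed form and the uniqueness of the minimizer rely on $S$ being positive definite, so I would state explicitly the identifiability assumption that the netput observations span the full space, a condition that holds generically for production data. A secondary subtlety is that the statement is asserted for every $\bm{g}^{x,y}_a \neq \bm{g}^{x,y}_b$, including the case where the two directions are positive scalar multiples of one another; the contradiction argument already disposes of this case, since the homogeneity $\hat{\bm{c}}(\lambda\bm{g}) = \lambda^{-1}\hat{\bm{c}}(\bm{g})$ --- the estimation-level counterpart of property (b) of the DDF --- yields distinct coefficient vectors whenever $\lambda \neq 1$.
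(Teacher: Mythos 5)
Your proof is correct, but it is not the paper's argument: you prove the theorem for the linear parametric DDF estimator \eqref{eq:ParamDDF} of Section~\ref{sec:3.1.DescParamDir}, whereas the paper's own proof rewrites the CNLS-d program as \eqref{eq:CNLSd_simp} and argues there. (The paper is itself inconsistent on this point --- the main text announces the result ``for the linear parametric DDF,'' yet the appendix proof manipulates the CNLS-d problem --- so your reading is defensible.) The two routes are genuinely different. The paper argues abstractly: the quadratic program has a unique solution, and the solution obtained under $\bm{g}^{x,y}_a$ is claimed to violate the normalization constraint written with $\bm{g}^{x,y}_b$, hence the two solutions must differ. You instead profile out the intercept, reduce the problem to minimizing $\bm{c}'S\bm{c}$ over the affine set $\bm{c}'\bm{g}=1$, and read everything off the closed form $\hat{\bm{c}}(\bm{g})=S^{-1}\bm{g}/(\bm{g}'S^{-1}\bm{g})$. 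Your approach buys three things the paper's does not: an explicit identifiability condition ($S \succ 0$) without which uniqueness of the coefficients --- and hence the theorem --- can fail; a treatment that does not rely on the infeasibility claim, which is false in general (with one input and one output, $\beta=\gamma=1$ satisfies both normalizations for $(g^x_a,g^y_a)=(1,0)$ and $(g^x_b,g^y_b)=(0,1)$, yet your closed form still shows the optima differ, since $S^{-1}\bm{g}_a$ and $S^{-1}\bm{g}_b$ cannot be parallel when $\bm{g}_a$ and $\bm{g}_b$ are not); and the exact homogeneity $\hat{\bm{c}}(\lambda\bm{g})=\lambda^{-1}\hat{\bm{c}}(\bm{g})$, which settles the collinear-direction case and mirrors property (b) of the DDF. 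What the paper's route buys is scope: it addresses the shape-constrained nonparametric estimator CNLS-d, the estimator actually used in the empirical analysis, to which your scatter-matrix calculation does not extend directly (the coefficients there are observation-specific and constrained by the Afriat inequalities, so no closed form exists); even so, the paper's uniqueness assertion is delicate for that program, since CNLS-type estimators have unique fitted values but not necessarily unique hyperplane coefficients, a gap your parametric argument avoids entirely.
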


\begin{proof}

Rewrite Problem (\ref{eq:CNLSd}) from Section 3.2 as
\begin{subequations}
\label{eq:CNLSd_simp}
\begin{align}
\min_{\bm{\alpha}, \bm{\beta}, \bm{\gamma}} \sum_{i=1}^{n}{(\alpha_i + \bm{\beta_i}' \, \bm{x_i} - \gamma_i}'\,\bm{y_i})^2  \tag{\ref{eq:CNLSd_simp}}   \\
\text{s.t.}\ \ \
     \alpha_i + \bm{\beta_i}' \, \bm{x_i}  - \bm{\gamma_i}'\,\bm{y_i}  \leq \alpha_j + \bm{\beta_j}' \, \bm{x_i}  - \bm{\gamma_j}'\,\bm{y_i}, &  \ \text{for} \ i,j=1,\ldots,n, \  i \neq j \\
\bm{\beta_i},\bm{\gamma_i}  \geq 0, & \  \text{for} \ i=1,\ldots,n \\
\bm{\beta_i}' \, \bm{g}^x + \bm{\gamma_i}'\,\bm{g}^y  = 1, & \ \text{for} \ i=1,\ldots,n
\end{align}
\end{subequations}



\noindent Observe that all decision variables appear in the objective function and that the objective function is a quadratic function while the constraints define a convex solution space; i.e., this optimization problem has a unique solution (\cite{bertsekas1999nonlinear}). If we solve Problem \eqref{eq:CNLSd_simp} with $\bm{g}^{x,y}_a$, then the resulting solution vector is $(\bm{\alpha}_a, \bm{\beta}_a, \bm{\gamma}_a)$. Changing the direction vector from $\bm{g}^{x,y}_a$ to $\bm{g}^{x,y}_b$ the normalization constraint $\bm{\beta_i}' \, \bm{g}^x_b + \bm{\gamma_i}'\,\bm{g}^y_b  = 1$ no longer holds for $\bm{\beta}_a$ and  $\bm{\gamma}_a$. However, the previous argument holds for the uniqueness of $(\bm{\alpha}_b, \bm{\beta}_b, \bm{\gamma}_b)$. Thus, $(\bm{\alpha}_a, \bm{\beta}_a, \bm{\gamma}_a) \neq (\bm{\alpha}_b, \bm{\beta}_b, \bm{\gamma}_b)$.

\end{proof}

\subsection{Details of CNLS-d}
\label{sec:6.1.CNLSd_details}

An alternative expression for CNLS-d (cf. equations (16)-(16c) from Section 5.1) is given by:

\begin{subequations}
\label{eq:CNLSd_details}
\begin{align}
\min_{\bm{\alpha}, \bm{\beta}, \bm{\gamma}} \sum_{i=1}^{n}{\epsilon_i^2} \tag{\ref{eq:CNLSd_details}}   \\
\text{s.t.}\ \ \
     -\epsilon_j + \epsilon_i + \bm{\beta_i}' \, \left(\bm{x_i} - \bm{x_j} \right)  - \bm{\gamma_i}'\,\left(\bm{y_i} - \bm{y_j} \right) \leq 0, \ &
    \text{for} \ i,j = 1,\ldots,n, \ i \neq j \\
    \bm{\beta_i}' \, \bm{g}^x + \bm{\gamma_i}'\,\bm{g}^y  = 1, \ & 
    \text{for} \ i = 1,\ldots,n\\
    \bm{\beta_i}, \,\bm{\gamma_i} \geq 0, \  & \text{for} \ i = 1,\ldots,n.
\end{align}
\end{subequations}


It's possible to recover \(\alpha_i, i = 1,\ldots,n\), and the final estimates using the following relations:
\begin{eqnarray}
\label{eq:Vars}
    \bm{\hat{x}_i} &=& \bm{x_i} + \epsilon_i \, \bm{g}^x, \ 
    \text{for} \ i = 1,\ldots,n \\
    \bm{\hat{y}_i} &=& \bm{y_i} - \epsilon_i \, \bm{g}^y, \ 
    \text{for} \ i = 1,\ldots,n \\
    \alpha_i &=& -\bm{\beta_i}' \, \bm{x_i} + \bm{\gamma_i}' \, \bm{y_i} + \epsilon_i, \
    \text{for} \ i = 1,\ldots,n.
\end{eqnarray}
\\

\subsection{Different Directions for Different Groups in CNLS-d}
\label{sec:Ax.A.2.ImpofDisc}


Consider the case where all observations have the same input level and produce two outputs and estimate the isoquant. Define two groups of observations $G_1$ and $G_2$ such that $\lvert G_1 \cup G_2 \rvert= n$  and $G_1 \cap G_2 = \emptyset$.\footnote{The notation $\lvert \cdot \rvert$ corresponds to the cardinality of the set.} Using the notation in \ref{sec:Ax.A.1.ImpofDisc}, the direction vector for the first group of observations $G_1$ is \(\bm{g}^{y_{G_1}}\) and it's \(\bm{g}^{y_{G_2}}\) for the second group of observations $G_2$.

For either a fixed input vector, $\bm{X}$, or a fixed cost level, $c$, formulate the iso-cost estimator for $G_1$ and $G_2$ with different directions vectors as:

\begin{subequations} 
\label{eq:CNLSd_multi}
    \begin{align}
       & & \min\limits_{\bm{\alpha}, \bm{\beta}, \bm{\gamma}, \bm{\epsilon}}{\sum_{i=1}^{n}{\epsilon_i^2}} \tag{\ref{eq:CNLSd_multi}}\\
        && \text{s.t.} \ \ \
        -\epsilon_j + \epsilon_i - \bm{\gamma_i}'\,\left(\bm{y_i} - \bm{y_j} \right) \leq 0, 
        &\ \text{for} \ i,j = 1,\ldots,n, \ i\neq j \label{eq:CNLSd_multi_a}\\
        && \bm{\gamma_i}'\,\bm{g^{y_{G_1}}}  = 1, 
        &\ \text{for} \; i \in G_1 \label{eq:CNLSd_multi_b}\\
        && \bm{\gamma_i}'\,\bm{g^{y_{G_2}}}  = 1, 
        &\ \text{for} \; i \in G_2 \label{eq:CNLSd_multi_c}\\
        && \bm{\gamma_i} \geq 0, 
        &\ \text{for} \ i = 1,\ldots,n.  \label{eq:CNLSd_multi_d}
    \end{align}
\end{subequations}

Note that using more than one direction for CNLS-d can lead to violations on convexity. Only under very limiting conditions can we allow for multiple directions in CNLS-d and guarantee that the resulting estimated function will maintain convexity. The following theorem formalizes the conditions. 

\begin{restatable}{thm}{ConvCond}
    \label{thm:ConvCond}
    If a CNLS-d estimator is calculated using two groups of observations with different direction vectors as shown in Equation \eqref{eq:CNLSd_multi} and the following condition holds regarding the direction vectors and the noise direction: 
    \begin{equation}
    \label{eq:2dconditions}
        \left(\epsilon_i \, \frac{\bm{g^{y_{k(i)}}}}{\|\bm{g^{y_{k(i)}}}\|}\right)' \ 
        \left[ \frac{\bm{g^{y_{k(j)}}}}{\|\bm{g^{y_{k(j)}}}\|} 
        - \frac{\bm{g^{y_{k(i)}}}}{\|\bm{g^{y_{k(i)}}}\|} \right] \geq 0, \ 
        \text{for} \ i,j = 1,\ldots,n, \ i \neq j,
    \end{equation}
    where
    \begin{equation*}
    k(i) = \begin{cases}
        1, & \text{if $i \in G_1$}\\
        2, & \text{if $i \in G_2$},
    \end{cases}
    \end{equation*}
    then the resulting CNLS-d estimate is a concave function.
\end{restatable}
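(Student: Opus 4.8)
The plan is to characterize concavity of the estimated frontier through a supporting--hyperplane consistency condition on the fitted points, and then to show that the Afriat--type inequalities already imposed in Problem~\eqref{eq:CNLSd_multi}, together with condition~\eqref{eq:2dconditions}, deliver exactly this consistency. First I would make the geometry explicit. For each $i$ let the fitted point be $\bm{\hat{y}_i} = \bm{y_i} - \epsilon_i\,\bm{g^{y_{k(i)}}}$, i.e. the observation projected back onto the estimated isoquant along its group's direction, and recall the normalization $\bm{\gamma_i}'\bm{g^{y_{k(i)}}} = 1$ from \eqref{eq:CNLSd_multi_b}--\eqref{eq:CNLSd_multi_c}. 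The piecewise--linear estimate is the boundary of the intersection of the half--spaces $\{\bm{y}:\bm{\gamma_i}'\bm{y}\le \bm{\gamma_i}'\bm{\hat{y}_i}\}$, so the estimate is concave (equivalently, the estimated technology is convex) precisely when every fitted point lies on the feasible side of every supporting hyperplane, that is
\[
\bm{\gamma_i}'\left(\bm{\hat{y}_i} - \bm{\hat{y}_j}\right) \ge 0, \qquad i,j=1,\ldots,n,\ i\neq j.
\]
Establishing this family of inequalities is the goal.

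Next I would rewrite constraint~\eqref{eq:CNLSd_multi_a} in fitted--point coordinates. Substituting $\bm{y_i}-\bm{y_j} = \left(\bm{\hat{y}_i}-\bm{\hat{y}_j}\right) + \epsilon_i\,\bm{g^{y_{k(i)}}} - \epsilon_j\,\bm{g^{y_{k(j)}}}$ into $-\epsilon_j+\epsilon_i-\bm{\gamma_i}'\left(\bm{y_i}-\bm{y_j}\right)\le 0$ and using $\bm{\gamma_i}'\bm{g^{y_{k(i)}}}=1$ cancels the $\epsilon_i$ terms and yields
\[
\bm{\gamma_i}'\left(\bm{\hat{y}_i} - \bm{\hat{y}_j}\right) \ge \epsilon_j\,\bm{\gamma_i}'\left(\bm{g^{y_{k(j)}}} - \bm{g^{y_{k(i)}}}\right).
\]
The key sanity check is that when a single direction is used, $\bm{g^{y_{k(i)}}}=\bm{g^{y_{k(j)}}}$, the right--hand side vanishes and concavity holds automatically, recovering the standard CNLS-d guarantee. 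With two groups the right--hand side is an extra cross term that can become negative only when $i$ and $j$ belong to different groups, which is exactly the regime the hypothesis is meant to control.

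The remaining step, and the main obstacle, is to show that this cross term is nonnegative under condition~\eqref{eq:2dconditions}. After clearing the norms, \eqref{eq:2dconditions} is an inner product between the noise vector $\epsilon_i\,\bm{g^{y_{k(i)}}}$ and the difference of the normalized group directions $\bm{g^{y_{k(j)}}}-\bm{g^{y_{k(i)}}}$; geometrically it requires the noise to point into the half--space separating the two directions, with the orthogonality case of the footnote as the boundary. The delicate point is that the transformed constraint carries the \emph{endogenous} normal $\bm{\gamma_i}$ and the residual $\epsilon_j$, whereas the hypothesis is stated through the \emph{exogenous} direction $\bm{g^{y_{k(i)}}}$ and $\epsilon_i$. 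I would therefore exploit the two--dimensional structure of the problem (only two outputs, $\bm{\gamma_i}\ge 0$, and the unit--inner--product normalization $\bm{\gamma_i}'\bm{g^{y_{k(i)}}}=1$, which aligns $\bm{\gamma_i}$ with $\bm{g^{y_{k(i)}}}$) to bound the sign of $\bm{\gamma_i}'\left(\bm{g^{y_{k(j)}}}-\bm{g^{y_{k(i)}}}\right)$ by that of the noise projection appearing in~\eqref{eq:2dconditions}. This closes the gap between the imposed constraint and genuine concavity, giving $\bm{\gamma_i}'\left(\bm{\hat{y}_i}-\bm{\hat{y}_j}\right)\ge 0$ for every pair; assembling these pairwise inequalities over all $i\neq j$ yields convexity of the estimated technology and hence concavity of the CNLS-d estimate.
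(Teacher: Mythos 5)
Your setup and first reduction are sound, and they track the paper's strategy more closely than you may realize: like the paper, you start from the Afriat-type constraint in \eqref{eq:CNLSd_multi}, rewrite it through the fitted points, and try to isolate a cross term that condition \eqref{eq:2dconditions} should control. Your intermediate inequality
\[
\bm{\gamma_i}'\left(\bm{\hat{y}_i}-\bm{\hat{y}_j}\right)\;\ge\;\epsilon_j\,\bm{\gamma_i}'\left(\bm{g^{y_{k(j)}}}-\bm{g^{y_{k(i)}}}\right)
\]
is correct, and your characterization of concavity as hyperplane consistency of the fitted points is the right target. The genuine gap is the last step, which you flag as the ``main obstacle'' but never carry out --- and it cannot be carried out as you propose. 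The cross term you must sign contains the \emph{endogenous} normal $\bm{\gamma_i}$, whereas condition \eqref{eq:2dconditions} constrains only inner products of residuals with the \emph{exogenous} normalized directions. The normalization $\bm{\gamma_i}'\bm{g^{y_{k(i)}}}=1$ does not ``align'' $\bm{\gamma_i}$ with $\bm{g^{y_{k(i)}}}$: it is one linear equation, so $\bm{\gamma_i}$ may rotate freely subject only to $\bm{\gamma_i}\ge 0$. Concretely, with unit directions $\bm{g^{y_{k(i)}}}=(1,0)$ and $\bm{g^{y_{k(j)}}}=(1/\sqrt{2},\,1/\sqrt{2})$, every $\bm{\gamma_i}=(1,t)$ with $t\ge 0$ is feasible, and $\bm{\gamma_i}'\left(\bm{g^{y_{k(j)}}}-\bm{g^{y_{k(i)}}}\right)=(1+t)/\sqrt{2}-1$ changes sign as $t$ varies. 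Even granting that \eqref{eq:2dconditions} pins down the sign of $\epsilon_j$ (when the two group directions differ, it in fact forces $\epsilon_j\le 0$, since a unit vector dotted with the difference to another unit vector is $\cos\theta-1\le 0$), the factor multiplying $\epsilon_j$ remains of indeterminate sign, so no condition that never mentions $\bm{\gamma_i}$ can close your argument; the two-dimensional structure does not rescue it.

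The paper's proof is organized precisely to avoid this trap, and that is the substantive difference between the two routes. Rather than aiming at $\bm{\gamma_i}'(\bm{\hat{y}_i}-\bm{\hat{y}_j})\ge 0$ directly, the paper compares the imposed mixed-direction Afriat expression \eqref{eq:AF_proof2} with the common-direction expression \eqref{eq:AF_proof3}; because both contain the \emph{identical} term $\bm{\gamma_i}'\left(\bm{y_i}-\bm{y_j}\right)$, that term cancels in the comparison, and the sufficient condition that emerges, namely \eqref{eq:2dconditions}, involves only residuals and direction vectors. Concavity is then obtained by reading \eqref{eq:AF_proof3} as the single-direction CNLS-d constraint. (To be fair, that last identification is asserted rather than derived --- your consistency criterion is the more rigorous one, which is exactly why your honest route collides with the $\bm{\gamma_i}$ obstruction the paper never confronts.) If you want to repair your proof while keeping its structure, you need the same cancellation: derive the sufficient condition by comparing two constraint expressions sharing the $\bm{\gamma_i}'\left(\bm{y_i}-\bm{y_j}\right)$ term, rather than by trying to bound a $\bm{\gamma_i}$-dependent remainder.
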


\begin{proof}

Consider the Afriat inequalities in the context of cost isoquant estimation.  One of the conditions of Equation \eqref{eq:CNLSd_iso} is: 
\begin{equation}
\label{eq:AF_proof}
    \epsilon_i - \epsilon_j - \bm{\gamma_i}'\,\left(\bm{y_i} - \bm{y_j} \right) \leq 0, \ \text{for} \ i,j = 1,\ldots,n, \ i \neq j.
\end{equation}
Knowing that \(
\epsilon_i \, \frac{\bm{g^{y_{k(i)}}}}{\|\bm{g^{y_{k(i)}}}\|} = \bm{\hat{y}_i} - \bm{y_i}
\) means that
\(
\epsilon_i = \left(\bm{\hat{y}_i} - \bm{y_i}\right)' \, \frac{\bm{g^{y_{k(i)}}}}{\|\bm{g^{y_{k(i)}}}\|}
\).

Substituting \(\epsilon_i\) and \(\epsilon_j\) in the inequalities  \eqref{eq:AF_proof} obtains:
\begin{equation}
\label{eq:AF_proof2}
\left(\bm{\hat{y}_i} - \bm{y_i}\right)' \, \frac{\bm{g^{y_{k(i)}}}}{\|\bm{g^{y_{k(i)}}}\|} - \left(\bm{\hat{y}_j} - \bm{y_j}\right)' \, \frac{\bm{g^{y_{k(j)}}}}{\|\bm{g^{y_{k(j)}}}\|} - \bm{\gamma_i}' \, \left(\bm{y_i} - \bm{y_j}\right) \leq 0, \ \text{for} \ i,j = 1,\ldots,n, \ i \neq j. 
\end{equation}

Next, consider the case where both observations have the same direction. Then the expression is:
\begin{equation}
\label{eq:AF_proof3}
    \left[\left(\bm{\hat{y}_i} - \bm{y_i}\right) - \left(\bm{\hat{y}_j} - \bm{y_j}\right) \right]' 
    \, \frac{\bm{g^{y_{k(i)}}}}{\|\bm{g^{y_{k(i)}}}\|}
    - \bm{\gamma_i}' \, \left(\bm{y_i} - \bm{y_j}\right)
    \leq 0, 
    \ \text{for} \ i,j = 1,\ldots,n, \ i \neq j. 
\end{equation}

If Equation \eqref{eq:AF_proof3} is satisfied, we know that the CNLS-d constraints hold. By comparison observe that the condition listed below is a sufficient condition for Equation \eqref{eq:AF_proof3} being satisfied when Equation \eqref{eq:AF_proof2} holds:

\[
\left[\left(\bm{\hat{y}_i} - \bm{y_i}\right) - \left(\bm{\hat{y}_j} - \bm{y_j}\right) \right]' \,
    \frac{\bm{g^{y_{k(i)}}}}{\|\bm{g^{y_{k(i)}}}\|} 
    - \bm{\gamma_i}' \, \left(\bm{y_i} - \bm{y_j}\right)  \ \ -- \textit{from eq.} \eqref{eq:AF_proof3} \ \ \ \ \ \ 
\] 
\[ \ \ \ \ \   \leq \left(\bm{\hat{y}_i} - \bm{y_i}\right)' 
    \, \frac{\bm{g^{y_{k(i)}}}}{\|\bm{g^{y_{k(i)}}}\|} 
    - \left(\bm{\hat{y}_j} - \bm{y_j}\right)' \, 
    \frac{\bm{g^{y_{k(j)}}}}{\|\bm{g^{y_{k(j)}}}\|} - \bm{\gamma_i}' \, \left(\bm{y_i} - \bm{y_j}\right) \ \ -- \textit{from eq.} \eqref{eq:AF_proof2}
\] 
\[\ \text{for} \ i,j = 1,\ldots,n, \ i \neq j,\]

\noindent which, after simplifying, becomes:
\begin{equation}
\label{eq:AF_proof4}
    \left(\bm{\hat{y}_i} - \bm{y_i}\right)' \,
    \left[ \frac{\bm{g^{y_{k(j)}}}}{\|\bm{g^{y_{k(j)}}}\|} 
    - \frac{\bm{g^{y_{k(i)}}}}{\|\bm{g^{y_{k(i)}}}\|} \right] 
    \geq 0 \
    \text{for} \ i,j = 1,\ldots,n, \ i \neq j
\end{equation}
\end{proof}

Thus Theorem \ref{thm:ConvCond} is proved and a sufficient condition is found that, if verified, ensures the concavity property of the estimator even when multiple directions are used in the estimation of the directional distance function.

The following corollary, concerning the convex case, is directly inferred from Theorem \ref{thm:ConvCond}:

\begin{restatable}{corollary}{ConvCond2}
    \label{cor:ConvCond2}
    If a CNLS-d estimator is calculated using two groups of observations with different direction vectors as shown in Equation \eqref{eq:CNLSd_multi}, and the following condition holds regarding the direction vectors and the noise direction: 
    \begin{equation}
    \label{eq:2dconditions_conv}
        \left(\epsilon_i \, \frac{\bm{g^{y_{k(i)}}}}{\|\bm{g^{y_{k(i)}}}\|}\right)' \ 
        \left[ \frac{\bm{g^{y_{k(j)}}}}{\|\bm{g^{y_{k(j)}}}\|} 
        - \frac{\bm{g^{y_{k(i)}}}}{\|\bm{g^{y_{k(i)}}}\|} \right] 
        \leq 0, \ 
        \text{for} \ i = 1,\ldots,n, \ i \neq j,
    \end{equation}
    where
    \begin{equation*}
    k(i) = \begin{cases}
        1, & \text{if $i \in G_1$}\\
        2, & \text{if $i \in G_2$},
    \end{cases}
    \end{equation*}
    then the resulting CNLS-d estimate is a convex function.
\end{restatable}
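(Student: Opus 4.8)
The plan is to mirror the proof of Theorem~\ref{thm:ConvCond} verbatim, exploiting the fact that convexity is the reflection of concavity: a function $f$ is convex if and only if $-f$ is concave. Concretely, I would first observe that the iso-cost estimator of Equation~\eqref{eq:CNLSd_multi} imposes concavity through the Afriat-type constraint~\eqref{eq:AF_proof}; the convex case is obtained by reversing that inequality, i.e. by requiring $\epsilon_i - \epsilon_j - \bm{\gamma_i}'(\bm{y_i}-\bm{y_j}) \geq 0$ for all $i \neq j$. Everything downstream in the proof of Theorem~\ref{thm:ConvCond} then carries over with each inequality sign reversed.

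The concrete steps would be: (i) substitute the identity $\epsilon_i = (\bm{\hat{y}_i}-\bm{y_i})' \, \bm{g^{y_{k(i)}}}/\|\bm{g^{y_{k(i)}}}\|$ into the reversed Afriat inequality, producing the convex-case analog of~\eqref{eq:AF_proof2} (now with $\geq 0$); (ii) write down the single-common-direction constraint that genuinely encodes convexity of the estimate, the analog of~\eqref{eq:AF_proof3}; and (iii) compare the two. Because the inequalities now point the other way, the sufficient-condition comparison that previously produced~\eqref{eq:AF_proof4} with a $\geq 0$ bound instead delivers the bound with its sense reversed, namely $(\bm{\hat{y}_i}-\bm{y_i})'[\bm{g^{y_{k(j)}}}/\|\bm{g^{y_{k(j)}}}\| - \bm{g^{y_{k(i)}}}/\|\bm{g^{y_{k(i)}}}\|] \leq 0$. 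Repackaging this exactly as in the theorem (multiplying through by the sign-carrying scalar $\epsilon_i$) yields condition~\eqref{eq:2dconditions_conv}, which is precisely~\eqref{eq:2dconditions} with $\geq$ replaced by $\leq$.

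The main obstacle I anticipate is bookkeeping rather than conceptual: one must ensure that every inequality---the Afriat constraint, the substituted two-direction inequality, and the sufficient-condition comparison---is reversed consistently, since a single missed sign flips the conclusion between convex and concave. A cleaner route that sidesteps this risk is the negation argument: apply Theorem~\ref{thm:ConvCond} directly to the reflected estimator (replace each $\bm{y_i}$ by $-\bm{y_i}$, equivalently estimate $-f$), under which hypothesis~\eqref{eq:2dconditions} becomes exactly~\eqref{eq:2dconditions_conv} for the original problem while ``concave'' becomes ``convex.'' This makes the sign reversal automatic and reduces the corollary to a one-line invocation of the theorem, consistent with its being stated as ``directly inferred.''
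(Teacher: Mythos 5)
Your proposal is correct and follows essentially the same route as the paper: the paper's proof simply reverses the inequality sign in Equation \eqref{eq:AF_proof} and repeats the logic of the proof of Theorem \ref{thm:ConvCond}, which is exactly your primary plan of mirroring that proof with every inequality flipped to arrive at \eqref{eq:2dconditions_conv}. Your alternative reflection argument (replacing $\bm{y_i}$ by $-\bm{y_i}$ and invoking the theorem directly) is a tidier packaging of the same idea, but it is not the route the paper takes.
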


\begin{proof}

Reverse the inequality sign in Equation \eqref{eq:AF_proof}:
\begin{equation}
\label{eq:AF_proof_cor}
-\epsilon_j + \epsilon_i - \bm{\gamma_i}'\,\left(\bm{y_i} - \bm{y_j} \right) \geq 0, \ \text{for} \ i,j = 1,\ldots,n, \ i \neq j,
\end{equation}

\noindent and follow the logic of the proof of Theorem \ref{thm:ConvCond} to obtain Corollary \ref{cor:ConvCond2} and Equation \eqref{eq:2dconditions_conv}.

\end{proof}

Theorem \ref{thm:ConvCond} clarifies that if the directions for each respective group are orthogonal to each other, then condition \ref{eq:2dconditions} is verified. This means that if the direction for group 1 has a single nonzero component in the output 1 dimension and group 2 has a single nonzero component in the output 2 dimension, then we will not observe violations of the convexity property.


We state a second Corollary that follows from Theorem \ref{thm:ConvCond}, which is useful when there are more than two groups each with their own estimation direction in CNLS-d. 

\begin{restatable}{corollary}{MultiDirs}
    \label{cor:MultiDirs}
    Let $n \in  \mathbb{N}$ the total number of observation.
    Let $Q$ the number of outputs considered.
    Let $\bm{Y} = \{\bm{y}_i \in \mathbb{R}^Q_+, i=1,\ldots,n\}$ the set of observed outputs.
    Let $P_g$ a partition of $\bm{Y}$ of cardinal $N_g \in \mathbb{N}$. 
    Let $\bm{g}^{y} = \{\bm{g}^{y_k}, k=1,\ldots,N_g\}$ the set of directions used for each respective group of the partition.
    If a CNLS-d estimator is calculated using the directions from $\bm{g}^{y}$ based on partition $P_g$, and the following condition holds regarding the direction vectors and the noise direction: 
    \begin{equation}
    \label{eq:2dconditions_v2}
        \left(\epsilon_i \, \frac{\bm{g^{y_{k(i)}}}}{\|\bm{g^{y_{k(i)}}}\|}\right)' \ 
        \left[ \frac{\bm{g^{y_{k(j)}}}}{\|\bm{g^{y_{k(j)}}}\|} 
        - \frac{\bm{g^{y_{k(i)}}}}{\|\bm{g^{y_{k(i)}}}\|} \right] \geq 0, \ 
        \text{for} \ i,j = 1,\ldots,n, \ i \neq j,
    \end{equation}
    where $\text{for each} \ i=1,\ldots,n, \ k(i)$ corresponds to the indicator of the part of the partition $P_g$, in which $\bm{y}_i$ belongs.
    Then the resulting CNLS-d estimate is a concave function.
\end{restatable}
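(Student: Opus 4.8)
The plan is to show that nothing in the proof of Theorem \ref{thm:ConvCond} depends on the partition having exactly two parts: the argument is carried out one pair of observations at a time, and every inequality it manipulates involves only the two direction vectors attached to the pair being compared. I would therefore reuse that proof almost verbatim, after reinterpreting the label map $k(\cdot)$.

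First I would recall the structure of the two-group proof. Concavity of the CNLS-d estimate is certified by the common-direction Afriat inequalities \eqref{eq:AF_proof3} holding for every ordered pair $(i,j)$, $i\neq j$. The optimization problem instead delivers the constraints \eqref{eq:AF_proof}, which after substituting $\bm{\hat{y}_i}-\bm{y_i}=\epsilon_i\,\bm{g}^{y_{k(i)}}/\|\bm{g}^{y_{k(i)}}\|$ take the mixed-direction form \eqref{eq:AF_proof2}. For a single pair, subtracting the two left-hand sides collapses every term except one, leaving exactly the scalar expression \eqref{eq:AF_proof4}; thus \eqref{eq:AF_proof4} is precisely the per-pair gap that must be controlled to pass from the constraint \eqref{eq:AF_proof2} to the concavity certificate \eqref{eq:AF_proof3}.

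The key observation is that in this per-pair computation the labels $k(i)$ and $k(j)$ refer only to the groups of the two observations under comparison; no step invokes the total number of groups or any simultaneous interaction among three or more directions. Consequently I would simply redefine $k(\cdot)$ so that $k(i)$ returns the index in $\{1,\ldots,N_g\}$ of the part of $P_g$ containing $\bm{y}_i$. Since the corollary's hypothesis \eqref{eq:2dconditions_v2} is identical in form to \eqref{eq:2dconditions} and is assumed for \emph{all} ordered pairs $i\neq j$ (so that each pair and its reverse are both covered), the per-pair implication established in the theorem applies to every pair of the refined partition. Hence \eqref{eq:AF_proof3} holds throughout and the estimate is concave.

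The closest thing to an obstacle is verifying that the original derivation hides no global, cross-group step --- for instance, a transitivity chain linking three or more groups --- that would genuinely require the two-group restriction. Inspecting the passage from \eqref{eq:AF_proof2} through \eqref{eq:AF_proof4}, each line rewrites a single pairwise inequality using only the two associated normalized directions, so no such global structure is present and the extension to an arbitrary finite partition is immediate.
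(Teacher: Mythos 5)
Your proposal is correct and takes essentially the same approach as the paper: the paper's proof of Corollary \ref{cor:MultiDirs} likewise argues that the proof of Theorem \ref{thm:ConvCond} is entirely pairwise, so reinterpreting $k(\cdot)$ as the part-indicator of an arbitrary finite partition changes nothing in the derivation. Your explicit check that no hidden cross-group step (such as a transitivity chain over three or more directions) is used is just a more careful articulation of the same point.
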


\begin{proof}

We can follow the proof of Theorem \ref{thm:ConvCond}, as the condition does not change. The condition still concerns pairwise observations, the only difference is that now the partition of observations corresponds to more than two groups. This does not affect the proof of the condition.

\end{proof}

Corollary \ref{cor:MultiDirs} extends the statement of Theorem \ref{thm:ConvCond} to provide sufficient conditions to avoid violations of the shape constraints in a scenario where there are more than two groups each with their own estimation direction in CNLS-d estimation.

\paragraph{Simulations to investigate the frequency with which multiple directions leads to violations}
\label{sec:A.4.DetSimsTh2}



We run simulations to investigate the effects of using multiple directions. We use the same DGP as stated in Section \ref{sec:5.DirMatters}, Example 1. 
However, we define two groups and assign different directions for each one of them: 
\begin{eqnarray}
\label{eq:2Dirs_GroupsDef}
    G_1 &=& \{i \in \{1,2,...,n\} | \arctan\left( \tilde{y}_{i2} / \tilde{y}_{i1}\right) \leq \pi/4  \} \\
    G_2 &=& \{i \in \{1,2,...,n\} | \arctan\left( \tilde{y}_{i2} / \tilde{y}_{i1}\right) > \pi/4  \},
\end{eqnarray}

and, 
\begin{equation}
\label{eq:2Dirs_DirsDef}
    \bm{g^y} = \begin{cases}
        \bm{g^{y_{G_1}}}, & \text{if $i \in G_1$}\\
        \bm{g^{y_{G_2}}}, & \text{if $i \in G_2$},
    \end{cases}
\end{equation}

\noindent where $\bm{g^{y_{G_1}}} = \left[\cos(\pi/8), \sin(\pi/8)\right]$ and $\bm{g^{y_{G_2}}} = \left[\cos(3\pi/8), \sin(3\pi/8)\right]$.

We run a total of $100$ simulations. 
For comparison, for each simulation, we also record the estimates when using only the direction based on $\pi/8$ and $3\pi/8$ only for all observations. We identify violations of the monotonicity and concavity by sorting the estimates by $y_1$. We identify all adjacent pairs and triplets, which means 99 pairs and 98 triplets given that we consider 100 observations for each simulation.

As expected, there are no violations when we use a single direction for the estimation. However, when we use two directions violations are observed. For monotonicity, we observe no violations for pairs of observations that are part of the same group. However, for pairs with one member from each group we observe violations of monotonicity for 6\% of the pairs. We use the triplets to analyze concavity. When the members of the triplet are from the same group, we observe violations of concavity for 2\% of the triplets. When one member of the triplet is from a different group, the violations of concavity increase to 45\%.  
These results indicate that for one instance when the conditions of Theorem \ref{thm:ConvCond} do not hold, we see a significant number of violations of the maintained assumptions.

\pagebreak
\section{Additional Experiments}
\label{sec:Ax.B.AdExp}


\subsection{Experiments Related to Section \ref{sec:4.1.ExpIllustrate} - with the Linear Estimator.}
\label{sec:Ax.B.AdExp.1}

\paragraph{Measuring MSE Example, Section \ref{sec:4.1.ExpIllustrate} - Noise Generated in a Common and Prespecified Direction $\theta_f$} \mbox{} \\
This section describes the simulations and the results for the fixed noise direction case referenced in Section \ref{sec:4.1.ExpIllustrate}.

The Data Generation Process (DGP) for observations \(\left(\bm{y}_{i},c_{i}\right) , i = 1,\ldots,n\), is as follows: 

\begin{mdframed}

\begin{enumerate}
    \item{The output, \(\tilde{y}_{i}\),  is drawn from the continuous uniform distribution \(U\left[0,1\right]\)}
    \item{The cost is calculated as \(\tilde{c}_{i} = \beta_0 \ \tilde{y}_{i}\), where $\beta_0 = 1$.}
        \item{In the case of fixed direction, the noise term is determined as:}
            \begin{enumerate}
            \item{\(l_{\epsilon_i}\) is the scalar length that is drawn from a normal distribution, \(N\left(0,\lambda\,\epsilon_0\right)\), \(\lambda\) is prespecified and an initial value for the standard deviation, $\epsilon_0$, is calculated as in Equation \eqref{eq:ep0} in Section \ref{sec:4.1.ExpIllustrate}.}:
            \begin{equation}
            \label{eq:ep0_bis}
            \epsilon_0 = \frac{1}{2} \left[ \sqrt{\frac{1}{n-1}\sum_{i=1}^n\left(\tilde{y}_{i}-\bar{y}\right)^2} +\sqrt{\frac{1}{n-1}\sum_{i=1}^n\left(\tilde{c}_{i}-\bar{c}\right)^2} \right],
            \end{equation}
            where \(\bar{y} = \frac{1}{n}\sum_{i=1}^{n} {\tilde{y}_{i}}\) and \(\bar{c} = \frac{1}{n}\sum_{i=1}^{n}{\tilde{c}_{i}}\) are the mean of the output and the mean of the cost without noise, respectively.
            
            \item{\(\bm{v_f} = \left[\cos(\theta_f),\sin(\theta_f)\right]\)\ is the fixed noise direction that is inferred from the prespecified angle $\theta_f$.}
            
            \item{\( \left(\epsilon_{y_{i}},\epsilon_{c_{i}}\right) = l_{\epsilon_i} \, \bm{v_f}, \ i = 1,\ldots,n. \)}
        \end{enumerate}
    \item{The observations with noise are obtained by appending the noise term: 
    \begin{equation}
    \label{ObsAndNoise_bis}
        \binom{y_{i}}{c_{i}} = \binom{\tilde{y}_{i}}{\tilde{c}_{i}} + \binom{\epsilon_{y_{i}}}{\epsilon_{c_{i}}} , \ i = 1,\ldots,n.
    \end{equation}
    }
\end{enumerate}
\end{mdframed}

\begin{figure}[h!]
	\caption{Linear function data generation process with fixed noise direction}
	\label{fig:LinearDGPStatementFixed}
\end{figure}

Apply the DGP described above to generate a training set, \(\left(y_{tr_{i}},c_{tr_{i}}\right), \ i = 1,\ldots,n_{tr}\), and a testing set \(\left(y_{ts_{i}},c_{ts_{i}}\right), \ i = 1,\ldots,n_{ts}\).
Consider $100$ repetitions of the simulation and set the number of observations in each group to $n_{tr} = n_{ts} = 100$. Set the scaling coefficient for the noise to $\lambda = 0.6$. Consider different DGP since data is generated for the following values of noise direction angles, $\theta_f \in \left\{0, \pi/8, \pi/4, 3\pi/8, \pi/2\right\}$. 

We test the set of directions corresponding to the angle 
$\theta_t \in \left\{0, \pi/8, \pi/4, 3\pi/8, \pi/2\right\}$. If the direction of the noise, $\theta_f$, matches the direction used in the DDF, $\theta_t$, then the smallest MSE results for all cases.

\paragraph{Results: Fixed Noise Direction}
\label{sec:TrainTestResFixedDir}

Table \ref{tab:2.LinFixedNoiseDirTrueMSE} reports the MSE computed by comparing the estimated function to the true function and Table \ref{tab:3.LinFixedNoiseDirNoisyMSE} reports the MSE computed by comparing the estimated function to the testing set.

\begin{table}[h!]
  \centering
  \caption{Average  MSE  over  100  simulations  for  the  Linear  Estimator  compared  to  the true function with a DGP using random noise directions}
    \begin{tabular}{c|c|ccccc}
    \multicolumn{2}{r|}{} & \multicolumn{5}{c}{Average MSE: Estimator } \\
    \multicolumn{2}{r|}{} & \multicolumn{5}{c}{compared to the true function} \\
    \multicolumn{2}{r|}{} & \multicolumn{5}{c}{DDF Direction Angle $\theta_t$} \\
\cmidrule{3-7}    \multicolumn{1}{l|}{Noise Dir Angle $\theta_f$ } & \multicolumn{1}{l|}{MSE Dir Ang $\theta_{\textit{MSE}}$} & \(0\)  &  \( \pi / 8\) &   \( \pi / 4\)    &  \( 3\pi / 8\)  &   \( \pi / 2\) \\
    \midrule
    \(0\)  & \(0\) &  \textbf{0.55} & 1.59  & 3.49  & 6.35  & 12.06 \\
    \(0\)  & \( \pi / 8\)  & \textbf{0.32} & 0.86  & 1.81  & 3.17  & 5.70 \\
    \(0\)  & \( \pi / 4\)  & \textbf{0.27} & 0.69  & 1.42  & 2.44  & 4.23 \\
    \(0\)  & \( 3\pi / 8\) & \textbf{0.32} & 0.77  & 1.54  & 2.58  & 4.36 \\
    \(0\)  & \( \pi / 2\)  & \textbf{0.54} & 1.21  & 2.37  & 3.86  & 6.28 \\
    \midrule
    \( \pi / 8\)  & \(0\)        & 3.22  & \textbf{1.00} & 2.66  & 7.79  & 22.92 \\
    \( \pi / 8\)  & \( \pi / 8\) & 2.16  & \textbf{0.59} & 1.39  & 3.80  & 9.98 \\
    \( \pi / 8\)  & \( \pi / 4\) & 2.04  & \textbf{0.50} & 1.10  & 2.88  & 7.09 \\
    \( \pi / 8\)  & \( 3\pi / 8\)& 2.67  & \textbf{0.59} & 1.21  & 3.02  & 7.02 \\
    \( \pi / 8\)  & \( \pi / 2\) & 5.40  & \textbf{1.03} & 1.88  & 4.45  & 9.68 \\
    \midrule
    \( \pi / 4\)  & \(0\)  & 8.95  & 2.92  & \textbf{1.18} & 2.95  & 15.94 \\
    \( \pi / 4\)  & \( \pi / 8\)  & 6.46  & 1.93  & \textbf{0.70} & 1.53  & 7.21 \\
    \( \pi / 4\)  & \( \pi / 4\)  & 6.49  & 1.81  & \textbf{0.61} & 1.20  & 5.24 \\
    \( \pi / 4\)  & \( 3\pi / 8\)  & 9.10  & 2.35  & \textbf{0.74} & 1.31  & 5.30 \\
    \( \pi / 4\)  & \( \pi / 2\)  & 20.84 & 4.70  & \textbf{1.32} & 2.03  & 7.48 \\
    \midrule
    \( 3\pi / 8\)  & \(0\)  & 9.65  & 4.44  & 1.90  & \textbf{1.13} & 5.70 \\
    \( 3\pi / 8\)  & \( \pi / 8\)  & 6.99  & 3.00  & 1.22  & \textbf{0.65} & 2.83 \\
    \( 3\pi / 8\)  & \( \pi / 4\)  & 7.05  & 2.86  & 1.11  & \textbf{0.55} & 2.17 \\
    \( 3\pi / 8\)  & \( 3\pi / 8\)  & 9.92  & 3.76  & 1.40  & \textbf{0.64} & 2.30 \\
    \( 3\pi / 8\)  & \( \pi / 2\)  & 22.76 & 7.71  & 2.66  & \textbf{1.09} & 3.45 \\
    \midrule
    \( \pi / 2\)  & \(0\)  & 6.15  & 3.76  & 2.29  & 1.16  & \textbf{0.50} \\
    \( \pi / 2\)  & \( \pi / 8\)  & 4.25  & 2.50  & 1.49  & 0.73  & \textbf{0.29} \\
    \( \pi / 2\)  & \( \pi / 4\)  & 4.11  & 2.36  & 1.37  & 0.66  & \textbf{0.25} \\
    \( \pi / 2\)  & \( 3\pi / 8\)  & 5.52  & 3.06  & 1.74  & 0.81  & \textbf{0.29} \\
    \( \pi / 2\)  & \( \pi / 2\)7  & 11.62 & 6.10  & 3.33  & 1.50  & \textbf{0.49} \\
    \midrule
    \multicolumn{7}{l}{\textit{Note: Displayed are the measured values multiplied by} \(10^3\)} \\
    \end{tabular}%
  \label{tab:2.LinFixedNoiseDirTrueMSE}%
\end{table}%

\begin{table}[h!]
  \centering
  \caption{Average  MSE  over  100  simulations  for  the  Linear  Estimator  compared  to an out-of-sample testing set with a DGP using fixed noise directions}
    \begin{tabular}{c|c|ccccc}
    \multicolumn{2}{r|}{}     & \multicolumn{5}{c}{Average MSE: Estimator } \\
    \multicolumn{2}{r|}{}     & \multicolumn{5}{c}{compared to testing set data} \\
    \multicolumn{2}{r|}{}     & \multicolumn{5}{c}{DDF Direction Angle $\theta_t$} \\
\cmidrule{3-7}    \multicolumn{1}{l|}{Noise Dir Angle $\theta_f$} & \multicolumn{1}{l|}{MSE Dir Ang $\theta_{\textit{MSE}}$} & \(0\)  &  \( \pi / 8\) &   \( \pi / 4\)    &  \( 3\pi / 8\)  &   \( \pi / 2\) \\
    \midrule
    \(0\)  & \(0\)  & \textbf{30.02} & 31.22 & 33.23 & 36.21 & 42.08 \\
    \(0\)  & \( \pi / 8\)  & 17.53 & \textbf{17.13} & 17.46 & 18.24 & 20.01 \\
    \(0\)  & \( \pi / 4\)  & 14.95 & 13.99 & \textbf{13.86} & 14.10 & 14.92 \\
    \(0\)  & \( 3\pi / 8\)  & 17.51 & 15.70 & 15.15 & \textbf{15.03} & 15.42 \\
    \(0\)  & \( \pi / 2\)  & 29.93 & 25.30 & 23.55 & 22.64 & \textbf{22.32} \\
    \midrule
    \( \pi / 8\)  & \(0\)  & \textbf{49.89} & 52.78 & 58.59 & 68.39 & 91.28 \\
    \( \pi / 8\)  & \( \pi / 8\)  & 32.41 & \textbf{30.88} & 31.71 & 34.14 & 40.37 \\
    \( \pi / 8\)  & \( \pi / 4\)  & 29.93 & 26.38 & \textbf{25.69} & 26.27 & 28.92 \\
    \( \pi / 8\)  & \( 3\pi / 8\)  & 38.15 & 31.00 & 28.66 & \textbf{27.92} & 28.88 \\
    \( \pi / 8\)  & \( \pi / 2\)  & 74.19 & 53.30 & 45.83 & 41.93 & \textbf{40.19} \\
    \midrule
    \( \pi / 4\)  & \(0\)  & \textbf{51.54} & 53.79 & 59.55 & 70.76 & 101.99 \\
    \( \pi / 4\)  & \( \pi / 8\)  & 36.65 & \textbf{34.53} & 35.21 & 38.14 & 47.22 \\
    \( \pi / 4\)  & \( \pi / 4\)  & 36.39 & 31.60 & \textbf{30.32} & 30.83 & 34.75 \\
    \( \pi / 4\)  & \( 3\pi / 8\)  & 50.32 & 39.87 & 35.91 & \textbf{34.32} & 35.52 \\
    \( \pi / 4\)  & \( \pi / 2\)  & 112.21 & 76.31 & 62.47 & 54.76 & \textbf{50.83} \\
    \midrule
    \( 3\pi / 8\)  & \(0\)  & \textbf{39.37} & 41.09 & 45.01 & 52.54 & 73.64 \\
    \( 3\pi / 8\)  & \( \pi / 8\)  & 28.28 & \textbf{27.35} & 28.14 & 30.56 & 37.89 \\
    \( 3\pi / 8\)  & \( \pi / 4\)  & 28.30 & 25.72 & \textbf{25.22} & 26.01 & 29.73 \\
    \( 3\pi / 8\)  & \( 3\pi / 8\)  & 39.47 & 33.40 & 31.11 & \textbf{30.42} & 32.19 \\
    \( 3\pi / 8\)  & \( \pi / 2\)  & 89.14 & 66.84 & 57.41 & 51.96 & \textbf{49.51} \\
    \midrule
    \( \pi / 2\)  & \(0\)  & \textbf{22.47} & 22.94 & 23.97 & 25.85 & 30.66 \\
    \( \pi / 2\)  & \( \pi / 8\)  & 15.44 & \textbf{15.16} & 15.36 & 15.99 & 17.91 \\
    \( \pi / 2\)  & \( \pi / 4\)  & 14.89 & 14.17 & \textbf{14.01} & 14.21 & 15.27 \\
    \( \pi / 2\)  & \( 3\pi / 8\)  & 19.88 & 18.27 & 17.59 & \textbf{17.35} & 17.88 \\
    \( \pi / 2\)  & \( \pi / 2\)  & 41.52 & 36.04 & 33.31 & 31.51 & \textbf{30.54} \\
    \midrule
    \multicolumn{7}{l}{\textit{Note: Displayed are the measured values multiplied by} \(10^3\)} \\
    \end{tabular}%
  \label{tab:3.LinFixedNoiseDirNoisyMSE}%
\end{table}%

In Table \ref{tab:2.LinFixedNoiseDirTrueMSE}, the direction for the DDF corresponding to the smallest MSE always matches the noise direction in the DGP. Further for more than $70 \%$ of the cases tested there is more than a $50 \%$ decrease in MSE by using the correctly specified direction compared to the next best direction tested, which was not as large in the random direction case in Table \ref{tab:LinRandomNoiseTrueFunction} of Section \ref{sec:4.1.ExpIllustrate}. In other words, when endogeneity is severe, the benefits of using a DDF with a well-selected direction are potentially large. 

Table \ref{tab:3.LinFixedNoiseDirNoisyMSE} is consistent with the results observed in the random noise case, in Table \ref{tab:LinRandomNoiseOSMSE} of Section \ref{sec:4.1.ExpIllustrate}. The DDF directions corresponding to the smallest MSE values are those matching the directions used for the MSE computation. 
Thus, the proposed radial MSE measure addresses the challenge of measuring performance in applications with a testing dataset. 

\paragraph{Monte Carlo Simulations - Experiments, Section \ref{sec:exp3} - Experiment 3. Base case with fixed noise direction and different noise levels}  \mbox{} \\
This section summarizes the results of Experiment 3 with \(\lambda = 0.2\).

\begin{table}[h!]
  \centering
  \caption{Experiment 3--More Noise: Values of radial MSE relative to the true function varying the DGP noise direction and the CNLS-d direction. In the DGP, the standard deviation of the noise distribution, \(\lambda\), is 0.2.}
    \begin{tabular}{crrrrr}
    \toprule
          & \multicolumn{5}{c}{CNLS-d Direction Angle} \\
    \cmidrule{2-6}    
    \multicolumn{1}{l}{Noise Direction Angle} &      \multicolumn{1}{c}{\(0\)}  &  \multicolumn{1}{c}{\( \pi / 8\)} &   \multicolumn{1}{c}{\( \pi / 4\)}    &  \multicolumn{1}{c}{\( 3\pi / 8\)}  &   \multicolumn{1}{c}{\(\pi / 2\)}  \\
    \midrule
    \(0\)         & \textbf{8.15} & 15.62 & 37.66 & 82.16 & 183.39 \\
    \( \pi / 8\)  & 50.60 & \textbf{11.59} & 20.68 & 67.88 & 206.46 \\
    \( \pi / 4\)  & 145.21 & 29.40 & \textbf{11.89} & 33.89 & 149.24 \\
    \( 3\pi / 8\) & 220.24 & 69.87 & 22.28 & \textbf{11.66} & 53.66 \\
    \( \pi / 2\)  & 165.84 & 72.13 & 33.27 & 14.25 & \textbf{7.41} \\
    \midrule
    \multicolumn{6}{l}{\textit{Note: Displayed are measured values multiplied by }\(10^4\)} \\
    \end{tabular}%
  \label{tab:UnifFixedLarger}%
\end{table}%


\subsection{Experiments related to Section \ref{sec:Exps} - with CNLS-d}
\label{sec:Ax.B.AdExp.2}

Here we complete Section \ref{sec:Exps} with additional experiments and we follow the numbering experiments numbering established then.

 \paragraph{Experiment 5: Base case with different distributions for the initial observations on the true function} \mbox{} 
\label{sec:exp5}

In Experiment 5, we extend the analysis performed in Experiment 4. We consider additional distributions of the DGP for the angle, \(\theta_i, \ i = 1,\ldots,n\) and see how it affects the optimal direction. Unlike Experiment 4, we don't consider only normal distributions, instead we consider the following: a normal distribution,  \(N\left(\frac{\pi}{4},\frac{\pi}{16}\right)\), and two gamma distributions, \(\Gamma\left(3,\frac{\pi}{2}\right)\) and \(\Gamma\left(.5,\frac{\pi}{24}\right)\). For the gamma distributions, the first parameter corresponds to the shape coefficient and the second the scale coefficient. Each distribution is later referenced respectively as $Normal$, $Gamma_1$ and $Gamma_2$. We truncate the tails of the distribution so that the generated angles fall within the range \(\left[0, \pi/2\right]\). Noise is specified as in  Experiment 1. In Figure \ref{fig:Dist_HistAndObsAndMed}, the distributions of the angles \(\theta_i\) are illustrated and in particular the median values are highlighted. Table \ref{tab:Iso_gamma} reports the results of this experiment.

\begin{figure}[h!]
\centerline{
    \includegraphics[width=3in]{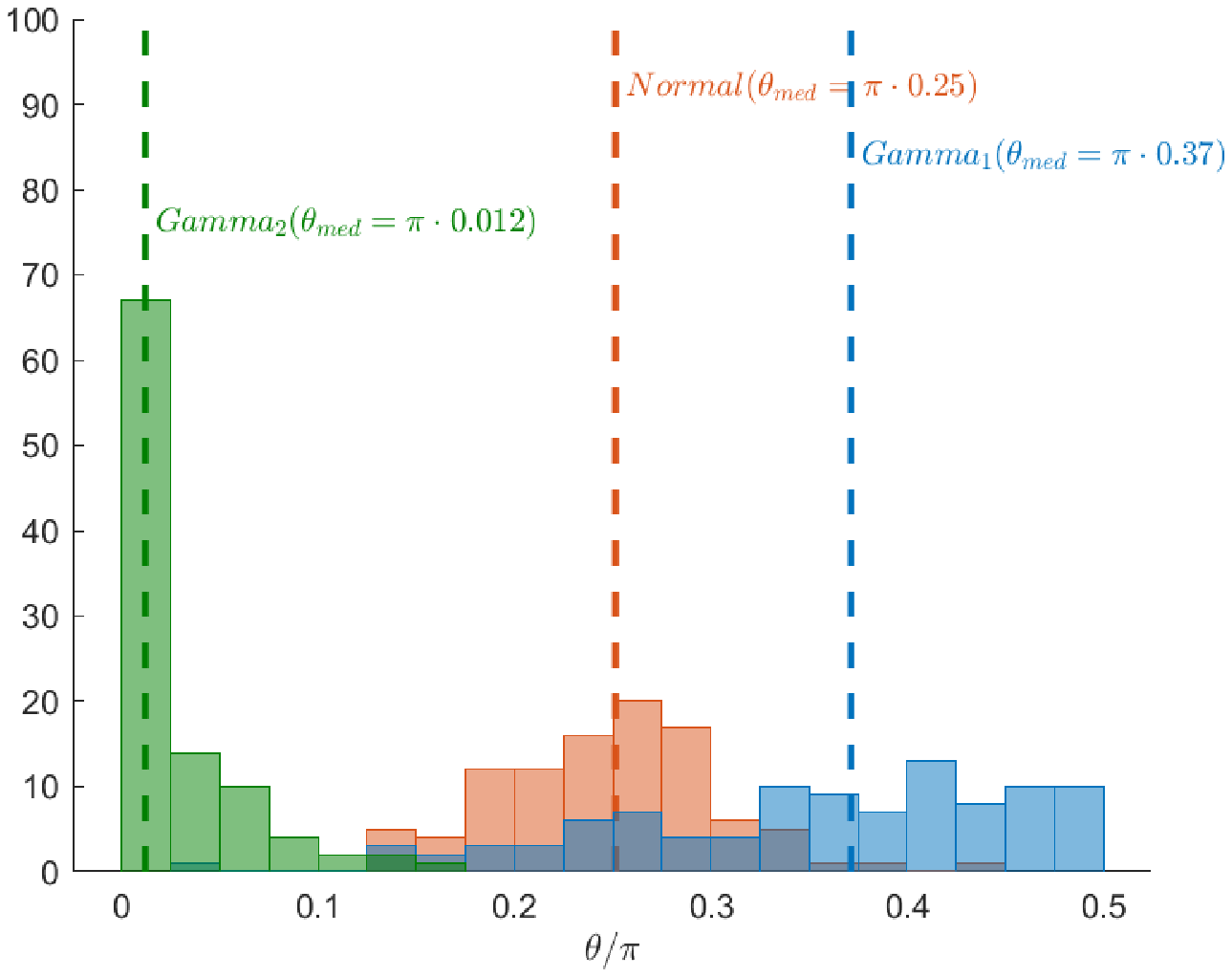}
    \includegraphics[width=3in]{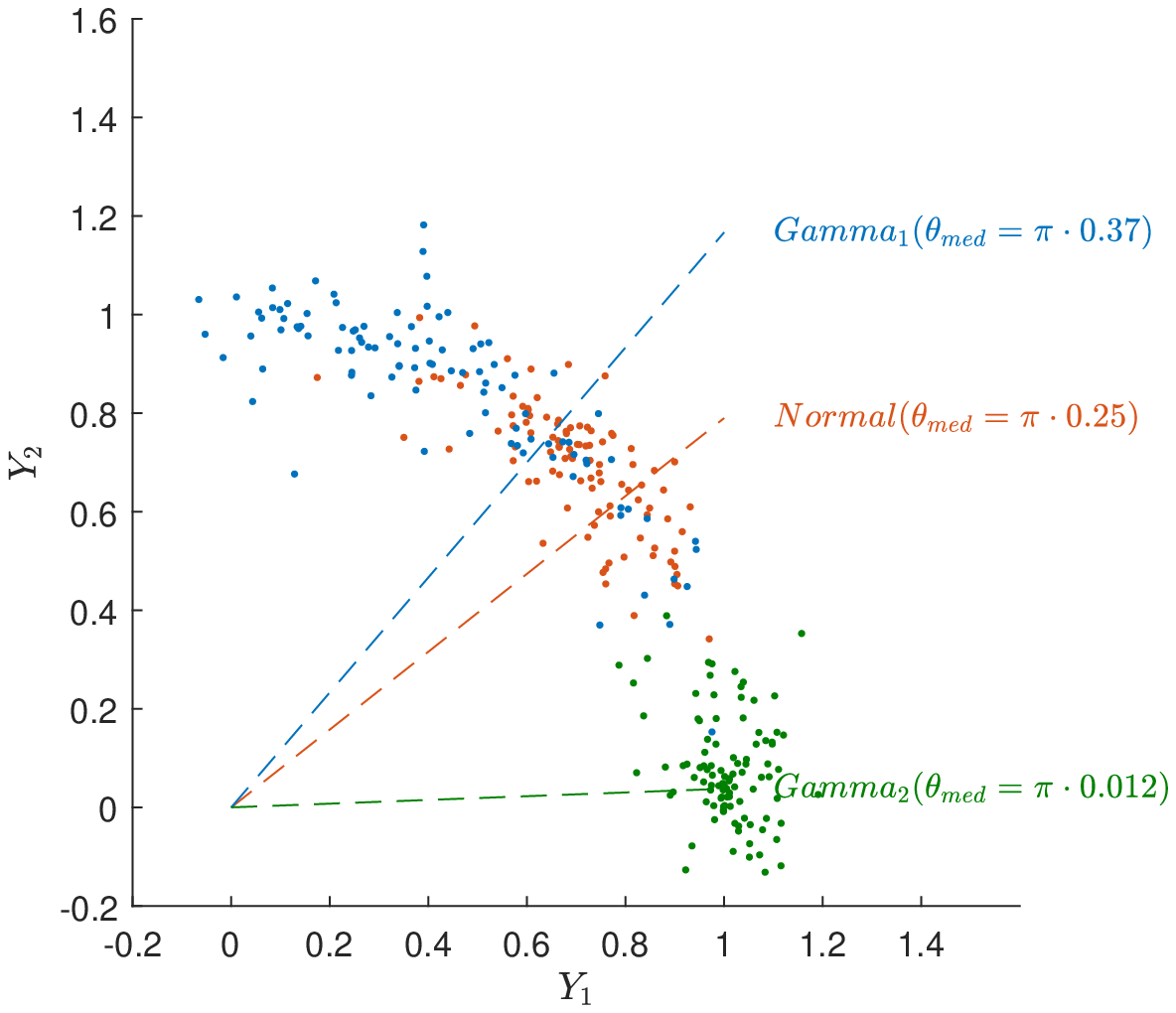}
}
\caption{(a) On the left, histogram of the angles \(\theta_i\) and its median obtained for each distribution when running a simulation with \(100\) observations. (b) On the right, the corresponding observations and the median of the angles \(\theta_i\) for each distribution for the same simulation as the histogram (a).}
\label{fig:Dist_HistAndObsAndMed}
\end{figure}

\begin{table}[h!]
  \centering
  \caption{Experiment 5: Values of radial MSE relative to the true function varying the CNLS-d direction and type of direction used for the DGP.}
    \begin{tabular}{lrrrrr}
    \toprule
    & \multicolumn{5}{c}{CNLS-d Direction Angle} \\
    \cmidrule{2-6}    Distribution \hspace{.1in}  & \multicolumn{1}{c}{\(0\)}  &  \multicolumn{1}{c}{\( \pi / 8\)} &   \multicolumn{1}{c}{\( \pi / 4\)}    &  \multicolumn{1}{c}{\( 3\pi / 8\)}  &   \multicolumn{1}{c}{\(\pi / 2\)} \\
    \midrule
    $Normal$ & 8.45  & 3.04  & \textbf{1.96}  & 3.01  & 8.60 \\
    $Gamma_1$ & 29.34 & 6.92  & 3.27  & \textbf{2.54}  & 3.39 \\
    $Gamma_2$ & \textbf{6.62}  & 9.69  & 19.19 & 72.55 & 598.97 \\
    \midrule
    \multicolumn{6}{l}{\textit{Note: Displayed are measured values multiplied by }\(10^4\).} \\
    \bottomrule
    \end{tabular}%
  \label{tab:Iso_gamma}%
\end{table}%

\bigskip
Two main conclusion can be drawn from the results in Table \ref{tab:Iso_gamma}. First, the smaller the variance of the data distribution, the greater is the importance of direction selection. Looking at the differences between the two gamma distributions, $Gamma_1$ has a larger tail than $Gamma_2$, which means the observations for $Gamma_2$ have a smaller variance. Table \ref{tab:Iso_gamma} indicates that the MSE increases rapidly with deviations from the optimal direction when variance of observations is smaller as with $Gamma_2$ compared to $Gamma_1$. Second, among the directions tested, \(\theta_i\), MSE is minimized for the direction closest to the direction corresponding to the median of the distribution. This second point supports the selection approach proposed in Section \ref{sec:6.DirSelApp}.

\pagebreak


\paragraph{Experiment 6: Adaptation of the Base Case to a 3-Dimensional Case} \mbox{} 
\label{sec:exp6}

We adapt the DGP from Experiment 1, the base case. 
We consider a fixed input level and approximate a three output isoquant, \(Q=3\). 
Indexing the outputs by $q$ and observations by $i$, we define the outputs, 

\begin{equation}
\label{eq:Exp1_TrueVars_3d}
    y_{qi}=\tilde{y}_{qi}+\epsilon_{qi}, \ q=1,\ldots,Q , \ i=1,\ldots,n , 
\end{equation}

\noindent where \(\bm{\tilde{y}_i}\) is the observation on the isoquant and \(\bm{\epsilon_i}\) is the noise. 
The output levels \(\tilde{y}_{qi}, \ q=1,\ldots,Q, \ i=1,\ldots,n\) are generated:

\begin{eqnarray}
\label{eq:Exp1_DetailedOutputs_3d}
    \bm{\tilde{y}}_{i} = \frac{\bm{l}_i}{\lvert\lvert{\bm{l}_i}\rvert\rvert_2}, \ i=1,\ldots,n
\end{eqnarray}

\noindent where \(l_{qi}, \ q=1,\ldots,Q, \ i=1,\ldots,n\), are drawn randomly from a continuous uniform distribution, \(U\left[0,1\right]\). 

The noise terms \(\bm{\epsilon}_{i} , \ i=1,\ldots,n\) is adapted to the 3-dimensional isoquant:

\begin{eqnarray}
\label{eq:Exp1_DetailedNoise_3d}
    \bm{\epsilon}_{i}= l_{\epsilon_i}\; \bm{v}_i, \ i=1,\ldots,n
\end{eqnarray}

\noindent where the length \(l_{\epsilon_i}\) is drawn from the normal distribution \(N\left(0,\lambda \right)\), and \(v_{qi} = \frac{v^*_{qi}}{\lvert\lvert \bm{v}_i \rvert\rvert_2}, \ q=1,\ldots,Q, \ i=1,\ldots,n\) for which \(v^*_{qi}\) are drawn from a continuous uniform distribution \(U\left[-1,1\right]\).

In Experiment 6, 19 directions are considered for the CNLS-d estimators. The directions are determined using the following steps: 
\begin{enumerate}
\item{enumerate all 3 component vectors, corresponding to \(\mathbb{R}^3\) with elements from the set \(\{0,0.5,1\}\) and excluding \(\left(0,0,0\right)\)};
\item{normalize the direction vectors dividing them by their respective \uppercase{e}uclidean norms};
\item{eliminate duplicates} 
\end{enumerate}
The 19 directions are represented by the markers in Figure \ref{fig:Iso_3d} and create a balanced grid on the eighth of a unit sphere, our isoquant. The median direction is \([1/\sqrt{3}, 1/\sqrt{3}, 1/\sqrt{3}]=[.58, 58, .58]\). The standard deviation of the normal distribution is \(\lambda = 0.1\). We perform this experiment \(100\) times for each direction. We report the averaged radial MSE values on a testing set of \(n\) observations lying on the true function in Table \ref{tab:Iso_3d}. In addition to the table, the MSE results are also illustrated in Figure \ref{fig:Iso_3d} where the size of the markers has a positive affine relation with the MSE values and that in the color range from yellow to red, with larger the MSE values associated with more red markers.

\begin{figure}[h!]
\centering
\includegraphics[width=6in]{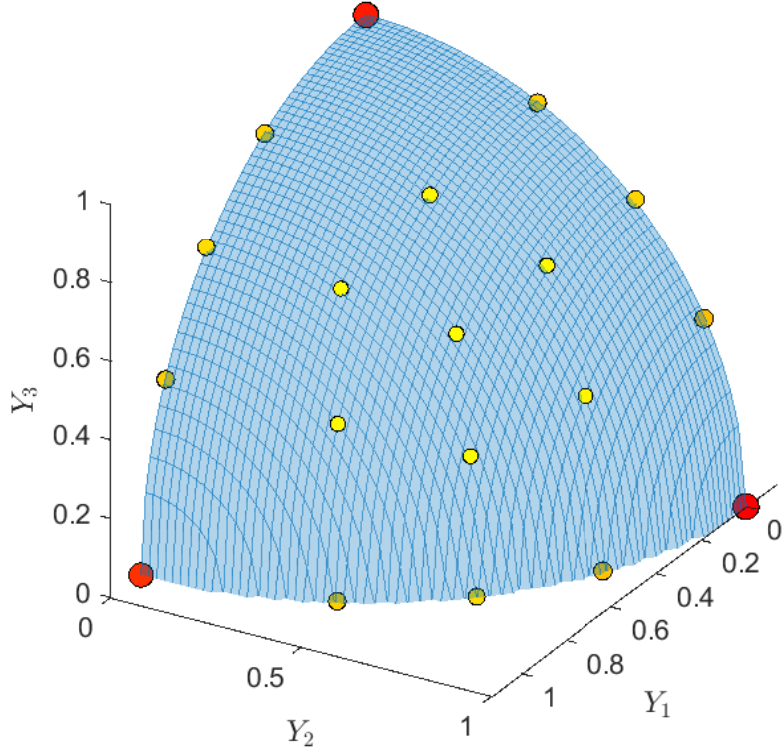}
\caption{Experiment 6: 3-dimensional isoquant case: Representation of the directions tested and the values of averaged radial MSE, the size of the markers having a positive affine relation to the values. The color is an other indicator as the more red the higher the averaged radial MSE is and the more yellow for lower values.}
\label{fig:Iso_3d}
\end{figure}

\begin{table}[h!]
  \centering
  \caption{Experiment 6: Values of radial MSE relative to the true function varying the CNLS-d direction in a 3- dimensional case}
  \begin{tabular}{lc}
  \toprule
    CNLS-d Direction &  \\
    $\left(g^{y_1},g^{y_2},g^{y_3}\right)$ & \multicolumn{1}{c}{Average of radial MSE} \\
    \midrule
    $(0,0,1)$ & 9.07 \\
    $(0,0.45,0.89)$ & 5.23 \\
    $(0,0.71,0.71)$ & 5.04 \\
    $(0,0.89,0.45)$ & 5.53 \\
    $(0,1,0)$ & 9.62 \\
    $(0.33,0.67,0.67)$ & 4.24 \\
    $(0.41,0.41,0.82)$ & 4.29 \\
    $(0.41,0.82,0.41)$ & 4.35 \\
    $(0.45,0,0.89)$ & 5.12 \\
    $(0.45,0.89,0)$ & 5.44 \\
    $(0.58,0.58,0.58)$ & 4.21 \\
    $(0.67,0.33,0.67)$ & 4.15 \\
    $(0.67,0.67,0.33)$ & 4.18 \\
    $(0.71,0,0.71)$ & 4.89 \\
    $(0.71,0.71,0)$ & 4.91 \\
    $(0.82,0.41,0.41)$ & 4.23 \\
    $(0.89,0,0.45)$ & 5.20 \\
    $(0.89,0.45,0)$ & 5.18 \\
    $(1,0,0)$ & 8.58 \\
    \midrule
    \multicolumn{2}{l}{\textit{Note: Displayed are measured values multiplied by }\(10^4\).} \\
    \bottomrule
    \end{tabular}%
  \label{tab:Iso_3d}%
\end{table}%

\bigskip

We can establish three categories of directions that correspond to certain ranges of MSE values. The first category corresponds to the worst MSE values, which are almost twice the smallest values. These are the directions that have only one non-zero component shown with red markers on the corners of the surface shown in Figure \ref{fig:Iso_3d}. The second category is for the MSE values that are above \(5\cdot10^{-4}\) but less than \(8\cdot10^{-4}\). These directions are labeled with the orange markers in Figure \ref{fig:Iso_3d} that are on the edges of the surface but not the corners. One of their directional components, $\left(\bm{g}^x,\bm{g}^y\right)$, is zero but all others are not. The third category of directions, which has the smallest MSEs, correspond to the yellow markers in Figure \ref{fig:Iso_3d}. These directions have only positive components.
Thus, we observe a trend that the directions that have positive components in all variables correspond to the best MSE values. The median value 
direction, \([0.58,0.58,0.58]\), is among the yellow markers. 
These results 
support the selection approach proposed in Section \ref{sec:6.DirSelApp} and confirm the results obtained on the US hospitals data set.

\pagebreak

\section{U.S. Hospital Dataset Application}
\label{sec:Ax.C.HospAppAnnex}




We describe the functional estimates provided by quadratic regression, CNLS-d using a direction with equal components in all dimensions and CNLS-d using the median direction, and the local linear kernel. Table \ref{tab:MPSS_c-level_81rows} provides most productive scale size (MPSS) measurements in cost in $\$M$.  Tables \ref{tab:MC_MajDiag_81rows} and \ref{tab:MC_MajTher_81rows} provide the marginal cost of Minor Therapeutic procedures and the marginal cost of Major Therapeutic procedures, respectively. The units for Tables \ref{tab:MC_MajDiag_81rows} and \ref{tab:MC_MajTher_81rows} are cost  in $\$k$ over Minor and Major Therapeutic procedures, respectively.

Our conclusions are the same as stated in the body of the paper, CNLS-d provides the advantage of being more flexible than the parametric estimator (quadratic regression) while having shape constraints that maintain the interpretability of the results. 

\begin{table}[htbp]
  \tiny
  \centering
  \caption{Most Productive Scale Size (c)}
    \begin{tabular}{cccc|ccc|ccc|ccc|ccc}
    \toprule
    \multicolumn{4}{c|}{Percentile} & \multicolumn{3}{c|}{Quadratic Regression} & \multicolumn{3}{c|}{CNLS-d (median)} & \multicolumn{3}{c}{CNLS-d (equal)} & \multicolumn{3}{c}{LL Kernel}\\
    \midrule
    \multicolumn{1}{c}{MinDiag} & MinTher & MajDiag & MajTher & 2007  & 2008  & 2009  & 2007  & 2008  & 2009  & 2007  & 2008  & 2009 & 2007  & 2008  & 2009\\
    \midrule
    25    & 25    & 25    & 25    & 529   & 234   & 823   & 105   & 58    & 87    & 109   & 93    & 103   & 228   & 301   & 885 \\
    25    & 25    & 25    & 50    & 118   & 118   & 122   & 406   & 81    & 369   & 412   & 91    & 112   & 220   & 143   & 216 \\
    25    & 25    & 25    & 75    & 102   & 110   & 93    & 1214  & 82    & 895   & 1220  & 92    & 104   & 209   & 120   & 189 \\
    25    & 25    & 50    & 25    & 79    & 693   & 560   & 177   & 94    & 166   & 131   & 93    & 140   & 226   & 136   & 162 \\
    25    & 25    & 50    & 50    & 104   & 139   & 141   & 98    & 60    & 94    & 95    & 79    & 84    & 233   & 217   & 210 \\
    25    & 25    & 50    & 75    & 105   & 114   & 103   & 165   & 80    & 340   & 179   & 90    & 114   & 219   & 139   & 208 \\
    25    & 25    & 75    & 25    & 56    & 414   & 335   & 179   & 108   & 554   & 124   & 96    & 384   & 158   & 133   & 27 \\
    25    & 25    & 75    & 50    & 77    & 245   & 176   & 149   & 93    & 194   & 126   & 91    & 132   & 292   & 117   & 185 \\
    25    & 25    & 75    & 75    & 94    & 133   & 115   & 93    & 61    & 107   & 105   & 71    & 85    & 226   & 197   & 197 \\
    25    & 50    & 25    & 25    & 15    & 42    & 63    & 330   & 53    & 78    & 327   & 123   & 127   & 400   & 271   & 1062 \\
    25    & 50    & 25    & 50    & 1074  & 234   & 1333  & 119   & 55    & 92    & 108   & 91    & 89    & 1027  & 306   & 215 \\
    25    & 50    & 25    & 75    & 137   & 131   & 138   & 209   & 78    & 331   & 264   & 98    & 110   & 222   & 153   & 206 \\
    25    & 50    & 50    & 25    & 248   & 330   & 381   & 80    & 57    & 83    & 93    & 74    & 84    & 373   & 304   & 336 \\
    25    & 50    & 50    & 50    & 332   & 273   & 1349  & 70    & 64    & 86    & 127   & 78    & 80    & 903   & 251   & 1152 \\
    25    & 50    & 50    & 75    & 133   & 134   & 141   & 177   & 76    & 304   & 182   & 95    & 109   & 233   & 261   & 231 \\
    25    & 50    & 75    & 25    & 108   & 492   & 718   & 126   & 91    & 144   & 143   & 89    & 129   & 204   & 187   & 79 \\
    25    & 50    & 75    & 50    & 122   & 694   & 1068  & 128   & 87    & 137   & 144   & 95    & 112   & 331   & 159   & 239 \\
    25    & 50    & 75    & 75    & 118   & 154   & 152   & 91    & 59    & 104   & 110   & 77    & 93    & 239   & 232   & 229 \\
    25    & 75    & 25    & 25    & 11    & 13    & 13    & 915   & 53    & 80    & 1015  & 125   & 130   & 246   & 98    & 921 \\
    25    & 75    & 25    & 50    & 11    & 231   & 149   & 192   & 52    & 78    & 197   & 130   & 136   & 537   & 433   & 1129 \\
    25    & 75    & 25    & 75    & 1139  & 223   & 1542  & 112   & 55    & 75    & 101   & 91    & 115   & 1015  & 287   & 215 \\
    25    & 75    & 50    & 25    & 18    & 16    & 5     & 133   & 52    & 79    & 181   & 114   & 118   & 293   & 111   & 887 \\
    25    & 75    & 50    & 50    & 13    & 311   & 217   & 135   & 51    & 77    & 181   & 111   & 125   & 528   & 466   & 1091 \\
    25    & 75    & 50    & 75    & 1155  & 230   & 1563  & 109   & 61    & 75    & 99    & 90    & 114   & 1062  & 272   & 214 \\
    25    & 75    & 75    & 25    & 64    & 220   & 275   & 81    & 57    & 85    & 94    & 82    & 84    & 300   & 199   & 274 \\
    25    & 75    & 75    & 50    & 304   & 483   & 484   & 79    & 56    & 93    & 85    & 73    & 83    & 478   & 400   & 437 \\
    25    & 75    & 75    & 75    & 333   & 265   & 1532  & 77    & 64    & 96    & 115   & 78    & 79    & 963   & 249   & 153 \\
    50    & 25    & 25    & 25    & 143   & 189   & 126   & 165   & 115   & 149   & 173   & 157   & 183   & 132   & 139   & 123 \\
    50    & 25    & 25    & 50    & 119   & 124   & 105   & 126   & 68    & 143   & 110   & 88    & 98    & 287   & 116   & 197 \\
    50    & 25    & 25    & 75    & 103   & 111   & 90    & 289   & 82    & 424   & 265   & 92    & 104   & 218   & 116   & 185 \\
    50    & 25    & 50    & 25    & 84    & 740   & 157   & 136   & 72    & 258   & 140   & 91    & 277   & 128   & 209   & 93 \\
    50    & 25    & 50    & 50    & 106   & 146   & 124   & 96    & 59    & 100   & 101   & 85    & 113   & 245   & 244   & 202 \\
    50    & 25    & 50    & 75    & 106   & 114   & 100   & 173   & 80    & 292   & 212   & 90    & 113   & 229   & 135   & 211 \\
    50    & 25    & 75    & 25    & 58    & 431   & 217   & 205   & 97    & 452   & 119   & 95    & 440   & 161   & 128   & 7 \\
    50    & 25    & 75    & 50    & 79    & 247   & 160   & 140   & 82    & 192   & 114   & 91    & 154   & 150   & 114   & 150 \\
    50    & 25    & 75    & 75    & 95    & 133   & 111   & 93    & 61    & 106   & 104   & 79    & 106   & 233   & 207   & 197 \\
    50    & 50    & 25    & 25    & 10    & 142   & 207   & 99    & 51    & 75    & 107   & 111   & 112   & 462   & 363   & 1031 \\
    50    & 50    & 25    & 50    & 1156  & 232   & 1319  & 109   & 61    & 81    & 114   & 89    & 80    & 1134  & 367   & 264 \\
    50    & 50    & 25    & 75    & 138   & 131   & 135   & 208   & 78    & 240   & 267   & 98    & 110   & 233   & 150   & 212 \\
    50    & 50    & 50    & 25    & 357   & 387   & 450   & 87    & 56    & 90    & 91    & 80    & 90    & 419   & 324   & 194 \\
    50    & 50    & 50    & 50    & 307   & 272   & 1329  & 76    & 63    & 84    & 88    & 77    & 78    & 218   & 269   & 652 \\
    50    & 50    & 50    & 75    & 134   & 135   & 137   & 185   & 76    & 258   & 183   & 95    & 108   & 236   & 170   & 232 \\
    50    & 50    & 75    & 25    & 110   & 508   & 702   & 125   & 90    & 143   & 124   & 89    & 139   & 209   & 178   & 30 \\
    50    & 50    & 75    & 50    & 123   & 646   & 1044  & 128   & 77    & 147   & 119   & 94    & 132   & 333   & 340   & 178 \\
    50    & 50    & 75    & 75    & 119   & 155   & 149   & 91    & 59    & 103   & 110   & 77    & 103   & 240   & 236   & 236 \\
    50    & 75    & 25    & 25    & 18    & 15    & 6     & 274   & 53    & 80    & 282   & 124   & 130   & 291   & 117   & 933 \\
    50    & 75    & 25    & 50    & 14    & 245   & 142   & 191   & 52    & 77    & 188   & 129   & 126   & 566   & 456   & 1134 \\
    50    & 75    & 25    & 75    & 1155  & 224   & 1523  & 111   & 55    & 75    & 101   & 91    & 115   & 1050  & 348   & 247 \\
    50    & 75    & 50    & 25    & 18    & 13    & 10    & 132   & 52    & 79    & 172   & 114   & 118   & 316   & 140   & 894 \\
    50    & 75    & 50    & 50    & 17    & 325   & 209   & 135   & 51    & 76    & 164   & 111   & 124   & 537   & 502   & 680 \\
    50    & 75    & 50    & 75    & 1170  & 230   & 1544  & 109   & 61    & 83    & 106   & 90    & 114   & 1106  & 308   & 252 \\
    50    & 75    & 75    & 25    & 85    & 232   & 264   & 81    & 57    & 84    & 94    & 82    & 84    & 321   & 205   & 245 \\
    50    & 75    & 75    & 50    & 323   & 493   & 471   & 79    & 56    & 92    & 85    & 81    & 82    & 499   & 406   & 306 \\
    50    & 75    & 75    & 75    & 335   & 266   & 1514  & 77    & 64    & 95    & 115   & 78    & 79    & 966   & 252   & 1192 \\
    75    & 25    & 25    & 25    & 75    & 101   & 29    & 548   & 309   & 213   & 620   & 177   & 136   & 20    & 27    & 34 \\
    75    & 25    & 25    & 50    & 100   & 118   & 50    & 129   & 74    & 176   & 142   & 137   & 128   & 100   & 46    & 73 \\
    75    & 25    & 25    & 75    & 102   & 112   & 81    & 101   & 78    & 133   & 104   & 79    & 99    & 242   & 73    & 160 \\
    75    & 25    & 50    & 25    & 74    & 142   & 39    & 244   & 95    & 190   & 322   & 92    & 285   & 49    & 57    & 42 \\
    75    & 25    & 50    & 50    & 95    & 140   & 59    & 112   & 120   & 154   & 189   & 112   & 386   & 123   & 58    & 81 \\
    75    & 25    & 50    & 75    & 106   & 116   & 83    & 107   & 76    & 131   & 110   & 78    & 99    & 259   & 79    & 179 \\
    75    & 25    & 75    & 25    & 60    & 534   & 65    & 163   & 75    & 260   & 178   & 84    & 355   & 79    & 115   & 17 \\
    75    & 25    & 75    & 50    & 80    & 213   & 82    & 139   & 72    & 237   & 179   & 81    & 280   & 129   & 147   & 135 \\
    75    & 25    & 75    & 75    & 96    & 137   & 96    & 91    & 69    & 111   & 99    & 84    & 114   & 242   & 117   & 188 \\
    75    & 50    & 25    & 25    & 233   & 593   & 136   & 232   & 128   & 157   & 229   & 254   & 130   & 109   & 542   & 677 \\
    75    & 50    & 25    & 50    & 185   & 196   & 138   & 171   & 93    & 156   & 145   & 145   & 121   & 154   & 146   & 135 \\
    75    & 50    & 25    & 75    & 137   & 132   & 115   & 107   & 75    & 118   & 101   & 76    & 106   & 243   & 111   & 182 \\
    75    & 50    & 50    & 25    & 175   & 670   & 149   & 133   & 85    & 132   & 133   & 98    & 118   & 135   & 278   & 412 \\
    75    & 50    & 50    & 50    & 169   & 223   & 149   & 120   & 98    & 141   & 121   & 108   & 136   & 179   & 171   & 139 \\
    75    & 50    & 50    & 75    & 133   & 137   & 118   & 104   & 74    & 117   & 107   & 75    & 95    & 300   & 128   & 211 \\
    75    & 50    & 75    & 25    & 101   & 607   & 182   & 106   & 71    & 258   & 156   & 80    & 351   & 139   & 161   & 59 \\
    75    & 50    & 75    & 50    & 117   & 359   & 177   & 102   & 69    & 236   & 150   & 77    & 279   & 171   & 291   & 160 \\
    75    & 50    & 75    & 75    & 120   & 159   & 131   & 97    & 67    & 108   & 97    & 90    & 111   & 274   & 253   & 210 \\
    75    & 75    & 25    & 25    & 11    & 57    & 144   & 92    & 52    & 85    & 101   & 92    & 83    & 380   & 220   & 872 \\
    75    & 75    & 25    & 50    & 12    & 371   & 377   & 88    & 51    & 83    & 105   & 90    & 90    & 642   & 551   & 1051 \\
    75    & 75    & 25    & 75    & 740   & 219   & 452   & 101   & 53    & 81    & 86    & 88    & 88    & 253   & 359   & 237 \\
    75    & 75    & 50    & 25    & 13    & 136   & 202   & 88    & 51    & 84    & 105   & 90    & 92    & 404   & 284   & 866 \\
    75    & 75    & 50    & 50    & 19    & 439   & 428   & 93    & 50    & 82    & 109   & 89    & 90    & 631   & 630   & 1048 \\
    75    & 75    & 50    & 75    & 549   & 225   & 459   & 106   & 59    & 89    & 91    & 88    & 96    & 263   & 349   & 286 \\
    75    & 75    & 75    & 25    & 296   & 328   & 415   & 79    & 48    & 89    & 85    & 86    & 89    & 363   & 233   & 191 \\
    75    & 75    & 75    & 50    & 507   & 564   & 593   & 85    & 48    & 88    & 83    & 77    & 97    & 385   & 423   & 191 \\
    75    & 75    & 75    & 75    & 285   & 261   & 472   & 82    & 56    & 92    & 88    & 76    & 101   & 236   & 264   & 713 \\
    \midrule
    \multicolumn{13}{l}{\textit{Note: The values displayed are in \$M}}\\
    \end{tabular}%
  \label{tab:MPSS_c-level_81rows}%
\end{table}%

\begin{table}[htbp]
  \tiny
  \centering
  \caption{Marginal Cost of Minor Therapeutic Procedures}
    \begin{tabular}{cccc|ccc|ccc|ccc|ccc}
    \toprule
    \multicolumn{4}{c|}{Percentile} & \multicolumn{3}{c|}{Quadratic Regression} & \multicolumn{3}{c|}{CNLS-d (median)} & \multicolumn{3}{c}{CNLS-d (equal)} & \multicolumn{3}{c}{LL Kernel}\\
    \midrule
    \multicolumn{1}{c}{MinDiag} & MinTher & MajDiag & MajTher & 2007  & 2008  & 2009  & 2007  & 2008  & 2009  & 2007  & 2008  & 2009 & 2007  & 2008  & 2009\\
        \midrule
    25    & 25    & 25    & 25    & 8.9   & 6.5   & 13.2  & 1.3   & 1.5   & 2.5   & 0.3   & 0.3   & 1.5   & 6.3   & 10.4  & 4.7 \\
    25    & 25    & 25    & 50    & 8.9   & 6.5   & 13.2  & 0.1   & 0.1   & 0.0   & 0.3   & 0.1   & 0.1   & 6.1   & 9.9   & 3.8 \\
    25    & 25    & 25    & 75    & 8.9   & 6.5   & 13.2  & 0.1   & 0.0   & 0.0   & 0.1   & 0.0   & 0.0   & 5.1   & 7.8   & 4.2 \\
    25    & 25    & 50    & 25    & 8.9   & 6.5   & 13.2  & 0.0   & 0.1   & 0.4   & 0.0   & 0.0   & 0.6   & 6.5   & 10.7  & 5.9 \\
    25    & 25    & 50    & 50    & 8.9   & 6.5   & 13.2  & 0.1   & 0.1   & 0.2   & 0.0   & 0.1   & 0.5   & 6.4   & 10.2  & 5.1 \\
    25    & 25    & 50    & 75    & 8.9   & 6.5   & 13.2  & 0.2   & 0.0   & 0.0   & 0.1   & 0.0   & 0.0   & 5.5   & 8.0   & 4.6 \\
    25    & 25    & 75    & 25    & 8.9   & 6.5   & 13.2  & 0.0   & 0.1   & 0.1   & 0.1   & 0.0   & 0.6   & 6.8   & 10.0  & 8.2 \\
    25    & 25    & 75    & 50    & 8.9   & 6.5   & 13.2  & 0.0   & 0.1   & 0.0   & 0.0   & 0.0   & 0.2   & 6.8   & 9.6   & 7.6 \\
    25    & 25    & 75    & 75    & 8.9   & 6.5   & 13.2  & 0.0   & 0.0   & 0.0   & 0.0   & 0.0   & 0.1   & 5.9   & 7.8   & 6.4 \\
    25    & 50    & 25    & 25    & 8.1   & 6.1   & 12.4  & 7.3   & 8.7   & 10.3  & 8.0   & 8.1   & 9.6   & 5.0   & 10.7  & 6.2 \\
    25    & 50    & 25    & 50    & 8.1   & 6.1   & 12.4  & 2.8   & 7.1   & 8.3   & 4.9   & 4.5   & 8.0   & 4.8   & 9.5   & 4.8 \\
    25    & 50    & 25    & 75    & 8.1   & 6.1   & 12.4  & 1.4   & 0.2   & 0.0   & 0.1   & 0.0   & 0.0   & 4.3   & 7.0   & 3.6 \\
    25    & 50    & 50    & 25    & 8.1   & 6.1   & 12.4  & 6.9   & 5.8   & 7.7   & 5.9   & 5.9   & 6.0   & 5.3   & 10.5  & 7.8 \\
    25    & 50    & 50    & 50    & 8.1   & 6.1   & 12.4  & 4.1   & 5.5   & 7.2   & 2.3   & 3.4   & 6.5   & 5.2   & 9.8   & 6.3 \\
    25    & 50    & 50    & 75    & 8.1   & 6.1   & 12.4  & 0.2   & 0.0   & 0.0   & 0.1   & 0.0   & 0.0   & 4.7   & 6.9   & 4.1 \\
    25    & 50    & 75    & 25    & 8.1   & 6.1   & 12.4  & 0.4   & 1.6   & 1.2   & 1.4   & 0.2   & 1.7   & 6.0   & 9.6   & 10.6 \\
    25    & 50    & 75    & 50    & 8.1   & 6.1   & 12.4  & 0.5   & 1.8   & 0.7   & 1.4   & 0.3   & 0.9   & 5.9   & 9.0   & 9.2 \\
    25    & 50    & 75    & 75    & 8.1   & 6.1   & 12.4  & 0.0   & 0.0   & 0.1   & 0.0   & 0.0   & 0.1   & 5.0   & 6.7   & 6.7 \\
    25    & 75    & 25    & 25    & 6.0   & 5.0   & 10.4  & 9.6   & 13.5  & 14.0  & 9.5   & 11.0  & 14.2  & 4.7   & 8.0   & 16.0 \\
    25    & 75    & 25    & 50    & 6.0   & 5.0   & 10.4  & 9.6   & 13.5  & 14.1  & 9.6   & 11.0  & 14.2  & 3.8   & 7.6   & 14.9 \\
    25    & 75    & 25    & 75    & 6.0   & 5.0   & 10.4  & 5.7   & 10.1  & 5.7   & 4.6   & 8.6   & 6.9   & 3.7   & 6.3   & 9.5 \\
    25    & 75    & 50    & 25    & 6.0   & 5.0   & 10.4  & 9.6   & 13.5  & 14.1  & 9.5   & 10.9  & 13.8  & 4.5   & 7.1   & 16.5 \\
    25    & 75    & 50    & 50    & 6.0   & 5.0   & 10.4  & 9.6   & 13.5  & 14.3  & 9.6   & 10.9  & 13.8  & 4.0   & 6.9   & 15.4 \\
    25    & 75    & 50    & 75    & 6.0   & 5.0   & 10.4  & 5.7   & 9.6   & 5.7   & 4.6   & 8.1   & 6.4   & 3.5   & 5.8   & 9.7 \\
    25    & 75    & 75    & 25    & 6.0   & 5.0   & 10.4  & 8.8   & 12.5  & 13.1  & 8.1   & 10.4  & 12.2  & 4.6   & 7.2   & 18.4 \\
    25    & 75    & 75    & 50    & 6.0   & 5.0   & 10.4  & 8.8   & 12.5  & 13.1  & 7.8   & 10.4  & 12.2  & 4.3   & 6.1   & 17.9 \\
    25    & 75    & 75    & 75    & 6.0   & 5.0   & 10.4  & 4.3   & 8.9   & 4.3   & 2.7   & 5.8   & 4.3   & 3.6   & 3.6   & 13.2 \\
    50    & 25    & 25    & 25    & 8.9   & 6.5   & 13.2  & 0.0   & 0.4   & 0.1   & 0.1   & 0.3   & 0.2   & 6.6   & 10.0  & 4.9 \\
    50    & 25    & 25    & 50    & 8.9   & 6.5   & 13.2  & 0.1   & 0.0   & 0.1   & 0.1   & 0.1   & 0.1   & 6.4   & 9.6   & 4.0 \\
    50    & 25    & 25    & 75    & 8.9   & 6.5   & 13.2  & 0.1   & 0.0   & 0.0   & 0.1   & 0.0   & 0.0   & 5.3   & 7.9   & 4.4 \\
    50    & 25    & 50    & 25    & 8.9   & 6.5   & 13.2  & 0.0   & 0.0   & 0.0   & 0.2   & 0.0   & 0.1   & 6.8   & 10.4  & 6.1 \\
    50    & 25    & 50    & 50    & 8.9   & 6.5   & 13.2  & 0.0   & 0.1   & 0.1   & 0.0   & 0.1   & 0.0   & 6.7   & 10.0  & 5.4 \\
    50    & 25    & 50    & 75    & 8.9   & 6.5   & 13.2  & 0.2   & 0.0   & 0.0   & 0.1   & 0.0   & 0.0   & 5.8   & 7.9   & 5.0 \\
    50    & 25    & 75    & 25    & 8.9   & 6.5   & 13.2  & 0.0   & 0.1   & 0.0   & 0.2   & 0.0   & 0.1   & 7.0   & 9.8   & 8.6 \\
    50    & 25    & 75    & 50    & 8.9   & 6.5   & 13.2  & 0.0   & 0.1   & 0.0   & 0.1   & 0.0   & 0.1   & 7.1   & 9.5   & 7.8 \\
    50    & 25    & 75    & 75    & 8.9   & 6.5   & 13.2  & 0.0   & 0.0   & 0.0   & 0.0   & 0.0   & 0.0   & 6.0   & 8.2   & 6.7 \\
    50    & 50    & 25    & 25    & 8.1   & 6.1   & 12.4  & 8.0   & 8.6   & 9.7   & 7.6   & 6.8   & 9.9   & 5.3   & 10.3  & 6.6 \\
    50    & 50    & 25    & 50    & 8.1   & 6.1   & 12.4  & 3.9   & 7.1   & 7.2   & 4.9   & 4.3   & 7.8   & 5.1   & 9.5   & 5.2 \\
    50    & 50    & 25    & 75    & 8.1   & 6.1   & 12.4  & 1.4   & 0.4   & 0.0   & 0.1   & 0.0   & 0.0   & 4.6   & 7.2   & 4.0 \\
    50    & 50    & 50    & 25    & 8.1   & 6.1   & 12.4  & 6.9   & 5.5   & 7.4   & 5.9   & 6.3   & 7.8   & 5.6   & 10.4  & 8.0 \\
    50    & 50    & 50    & 50    & 8.1   & 6.1   & 12.4  & 4.3   & 4.9   & 7.8   & 2.1   & 3.7   & 7.4   & 5.5   & 9.8   & 6.6 \\
    50    & 50    & 50    & 75    & 8.1   & 6.1   & 12.4  & 0.2   & 0.4   & 0.0   & 0.1   & 0.0   & 0.0   & 4.8   & 7.2   & 4.2 \\
    50    & 50    & 75    & 25    & 8.1   & 6.1   & 12.4  & 0.5   & 1.6   & 0.8   & 0.7   & 0.1   & 1.0   & 6.4   & 9.6   & 10.2 \\
    50    & 50    & 75    & 50    & 8.1   & 6.1   & 12.4  & 0.5   & 1.8   & 0.7   & 0.6   & 0.3   & 0.7   & 6.3   & 9.1   & 9.1 \\
    50    & 50    & 75    & 75    & 8.1   & 6.1   & 12.4  & 0.1   & 0.0   & 0.1   & 0.0   & 0.0   & 0.1   & 5.3   & 7.1   & 7.2 \\
    50    & 75    & 25    & 25    & 6.0   & 5.0   & 10.4  & 9.6   & 13.5  & 14.0  & 9.5   & 11.0  & 14.2  & 4.7   & 7.9   & 15.9 \\
    50    & 75    & 25    & 50    & 6.0   & 5.0   & 10.4  & 9.6   & 13.5  & 14.1  & 9.6   & 11.0  & 14.2  & 3.9   & 7.6   & 13.5 \\
    50    & 75    & 25    & 75    & 6.0   & 5.0   & 10.4  & 5.7   & 10.1  & 6.4   & 4.6   & 8.7   & 7.6   & 3.4   & 6.4   & 9.1 \\
    50    & 75    & 50    & 25    & 6.0   & 5.0   & 10.4  & 9.6   & 13.5  & 14.0  & 9.5   & 10.9  & 14.1  & 4.6   & 7.7   & 16.7 \\
    50    & 75    & 50    & 50    & 6.0   & 5.0   & 10.4  & 9.6   & 13.5  & 14.3  & 9.6   & 10.9  & 13.8  & 4.1   & 6.9   & 15.7 \\
    50    & 75    & 50    & 75    & 6.0   & 5.0   & 10.4  & 5.7   & 10.1  & 6.4   & 4.6   & 8.7   & 6.4   & 3.6   & 6.0   & 9.2 \\
    50    & 75    & 75    & 25    & 6.0   & 5.0   & 10.4  & 8.8   & 12.5  & 13.1  & 8.1   & 10.1  & 12.2  & 4.8   & 7.5   & 18.4 \\
    50    & 75    & 75    & 50    & 6.0   & 5.0   & 10.4  & 8.8   & 12.5  & 13.1  & 8.2   & 10.1  & 12.2  & 4.3   & 6.3   & 17.6 \\
    50    & 75    & 75    & 75    & 6.0   & 5.0   & 10.4  & 4.3   & 8.9   & 4.3   & 2.9   & 5.8   & 4.3   & 3.4   & 4.4   & 13.2 \\
    75    & 25    & 25    & 25    & 8.9   & 6.5   & 13.2  & 0.0   & 0.0   & 0.3   & 0.1   & 0.0   & 0.3   & 6.9   & 9.1   & 6.9 \\
    75    & 25    & 25    & 50    & 8.9   & 6.5   & 13.2  & 0.2   & 0.2   & 0.0   & 0.5   & 0.1   & 0.1   & 6.6   & 9.0   & 6.6 \\
    75    & 25    & 25    & 75    & 8.9   & 6.5   & 13.2  & 0.1   & 0.1   & 0.4   & 0.0   & 0.0   & 0.0   & 6.0   & 7.9   & 5.7 \\
    75    & 25    & 50    & 25    & 8.9   & 6.5   & 13.2  & 0.0   & 0.0   & 0.3   & 0.1   & 0.1   & 0.1   & 7.1   & 9.3   & 7.8 \\
    75    & 25    & 50    & 50    & 8.9   & 6.5   & 13.2  & 0.2   & 0.1   & 0.3   & 0.3   & 0.1   & 0.0   & 7.0   & 9.0   & 7.5 \\
    75    & 25    & 50    & 75    & 8.9   & 6.5   & 13.2  & 0.1   & 0.1   & 0.1   & 0.0   & 0.0   & 0.0   & 6.2   & 8.0   & 5.8 \\
    75    & 25    & 75    & 25    & 8.9   & 6.5   & 13.2  & 0.1   & 0.2   & 0.3   & 0.1   & 0.1   & 0.2   & 7.3   & 8.6   & 9.5 \\
    75    & 25    & 75    & 50    & 8.9   & 6.5   & 13.2  & 0.1   & 0.2   & 0.3   & 0.2   & 0.1   & 0.2   & 7.1   & 8.6   & 8.8 \\
    75    & 25    & 75    & 75    & 8.9   & 6.5   & 13.2  & 0.0   & 0.1   & 0.2   & 0.0   & 0.0   & 0.2   & 6.3   & 8.1   & 8.1 \\
    75    & 50    & 25    & 25    & 8.1   & 6.1   & 12.4  & 3.1   & 2.3   & 2.9   & 2.6   & 1.2   & 4.0   & 6.0   & 9.6   & 8.4 \\
    75    & 50    & 25    & 50    & 8.1   & 6.1   & 12.4  & 3.0   & 0.5   & 3.3   & 1.7   & 0.9   & 1.8   & 5.9   & 9.5   & 7.4 \\
    75    & 50    & 25    & 75    & 8.1   & 6.1   & 12.4  & 0.1   & 0.1   & 0.8   & 0.0   & 0.2   & 0.0   & 5.3   & 7.9   & 5.6 \\
    75    & 50    & 50    & 25    & 8.1   & 6.1   & 12.4  & 2.6   & 2.6   & 0.4   & 1.5   & 2.4   & 0.5   & 6.2   & 9.9   & 9.2 \\
    75    & 50    & 50    & 50    & 8.1   & 6.1   & 12.4  & 2.1   & 0.1   & 0.3   & 0.8   & 0.1   & 0.5   & 6.2   & 9.5   & 8.6 \\
    75    & 50    & 50    & 75    & 8.1   & 6.1   & 12.4  & 0.1   & 0.1   & 0.7   & 0.0   & 0.2   & 0.0   & 5.5   & 7.7   & 6.4 \\
    75    & 50    & 75    & 25    & 8.1   & 6.1   & 12.4  & 0.4   & 0.2   & 0.5   & 0.2   & 0.1   & 0.8   & 6.8   & 8.9   & 10.8 \\
    75    & 50    & 75    & 50    & 8.1   & 6.1   & 12.4  & 0.4   & 0.2   & 0.4   & 0.2   & 0.1   & 0.8   & 6.7   & 8.8   & 10.0 \\
    75    & 50    & 75    & 75    & 8.1   & 6.1   & 12.4  & 0.1   & 0.1   & 0.3   & 0.0   & 0.0   & 0.3   & 5.7   & 7.8   & 7.7 \\
    75    & 75    & 25    & 25    & 6.0   & 5.0   & 10.4  & 9.6   & 13.1  & 14.4  & 9.6   & 11.0  & 12.6  & 5.5   & 8.6   & 14.8 \\
    75    & 75    & 25    & 50    & 6.0   & 5.0   & 10.4  & 9.6   & 13.0  & 14.4  & 9.6   & 11.0  & 12.6  & 4.8   & 8.3   & 14.2 \\
    75    & 75    & 25    & 75    & 6.0   & 5.0   & 10.4  & 4.1   & 9.0   & 7.4   & 3.6   & 5.6   & 6.6   & 3.9   & 6.9   & 8.4 \\
    75    & 75    & 50    & 25    & 6.0   & 5.0   & 10.4  & 9.6   & 13.1  & 14.4  & 9.6   & 11.1  & 12.5  & 5.6   & 8.5   & 15.5 \\
    75    & 75    & 50    & 50    & 6.0   & 5.0   & 10.4  & 9.6   & 13.0  & 14.1  & 9.6   & 11.1  & 12.5  & 4.9   & 8.1   & 15.4 \\
    75    & 75    & 50    & 75    & 6.0   & 5.0   & 10.4  & 4.1   & 7.6   & 7.5   & 3.6   & 5.6   & 6.9   & 3.7   & 6.6   & 9.4 \\
    75    & 75    & 75    & 25    & 6.0   & 5.0   & 10.4  & 7.1   & 8.2   & 9.5   & 7.9   & 6.8   & 10.7  & 6.5   & 8.3   & 18.1 \\
    75    & 75    & 75    & 50    & 6.0   & 5.0   & 10.4  & 7.1   & 8.2   & 9.5   & 7.9   & 6.8   & 10.7  & 5.6   & 7.7   & 17.8 \\
    75    & 75    & 75    & 75    & 6.0   & 5.0   & 10.4  & 4.5   & 7.7   & 7.5   & 3.1   & 5.3   & 6.4   & 4.0   & 5.3   & 12.8 \\
    \midrule
    \multicolumn{13}{l}{\textit{Note: The values displayed are in \$k}}\\
    \end{tabular}%
  \label{tab:MC_MajDiag_81rows}%
\end{table}%

\begin{table}[htbp]
  \tiny
  \centering
  \caption{Marginal Cost of Major Therapeutic Procedures}
    \begin{tabular}{cccc|ccc|ccc|ccc|ccc}
    \toprule
    \multicolumn{4}{c|}{Percentile} & \multicolumn{3}{c|}{Quadratic Regression} & \multicolumn{3}{c|}{CNLS-d (median)} & \multicolumn{3}{c}{CNLS-d (equal)} & \multicolumn{3}{c}{LL Kernel}\\
    \midrule
    \multicolumn{1}{c}{MinDiag} & MinTher & MajDiag & MajTher & 2007  & 2008  & 2009  & 2007  & 2008  & 2009  & 2007  & 2008  & 2009 & 2007  & 2008  & 2009\\
        \midrule
    25    & 25    & 25    & 25    & 10.5  & 11.5  & 9.8   & 1.7   & 2.8   & 4.4   & 3.3   & 1.8   & 4.9   & 18.4  & 14.3  & 22.9 \\
    25    & 25    & 25    & 50    & 11.7  & 13.0  & 10.8  & 17.3  & 17.0  & 20.0  & 16.8  & 15.2  & 18.5  & 17.5  & 11.2  & 18.5 \\
    25    & 25    & 25    & 75    & 15.1  & 17.2  & 14.5  & 19.4  & 22.3  & 24.6  & 19.8  & 21.8  & 24.0  & 15.2  & 10.2  & 12.7 \\
    25    & 25    & 50    & 25    & 10.5  & 11.5  & 9.8   & 0.0   & 0.0   & 0.2   & 0.1   & 0.0   & 0.1   & 17.4  & 14.6  & 21.7 \\
    25    & 25    & 50    & 50    & 11.7  & 13.0  & 10.8  & 9.6   & 10.6  & 13.5  & 11.2  & 10.3  & 13.4  & 16.8  & 12.2  & 18.1 \\
    25    & 25    & 50    & 75    & 15.1  & 17.2  & 14.5  & 19.8  & 22.2  & 24.6  & 19.8  & 21.8  & 24.0  & 15.8  & 10.5  & 13.9 \\
    25    & 25    & 75    & 25    & 10.5  & 11.5  & 9.8   & 0.1   & 0.9   & 0.1   & 0.1   & 0.0   & 0.2   & 17.4  & 14.9  & 17.2 \\
    25    & 25    & 75    & 50    & 11.7  & 13.0  & 10.8  & 1.3   & 1.7   & 1.3   & 2.9   & 0.1   & 5.1   & 17.3  & 14.0  & 17.2 \\
    25    & 25    & 75    & 75    & 15.1  & 17.2  & 14.5  & 16.1  & 18.0  & 23.5  & 16.8  & 16.5  & 23.8  & 16.9  & 11.8  & 14.2 \\
    25    & 50    & 25    & 25    & 10.5  & 11.5  & 9.8   & 0.1   & 0.1   & 0.2   & 0.1   & 0.1   & 0.5   & 17.9  & 12.6  & 20.0 \\
    25    & 50    & 25    & 50    & 11.7  & 13.0  & 10.8  & 12.9  & 7.9   & 8.2   & 10.0  & 7.0   & 6.2   & 17.1  & 9.7   & 16.8 \\
    25    & 50    & 25    & 75    & 15.1  & 17.2  & 14.5  & 19.8  & 22.3  & 24.6  & 19.8  & 21.8  & 24.0  & 15.0  & 8.7   & 12.1 \\
    25    & 50    & 50    & 25    & 10.5  & 11.5  & 9.8   & 0.4   & 0.2   & 0.4   & 0.1   & 0.5   & 0.3   & 17.3  & 13.3  & 18.9 \\
    25    & 50    & 50    & 50    & 11.7  & 13.0  & 10.8  & 5.2   & 5.2   & 1.4   & 10.5  & 8.1   & 5.8   & 16.6  & 10.8  & 16.5 \\
    25    & 50    & 50    & 75    & 15.1  & 17.2  & 14.5  & 19.8  & 22.2  & 24.6  & 19.8  & 21.8  & 24.0  & 15.4  & 8.5   & 12.6 \\
    25    & 50    & 75    & 25    & 10.5  & 11.5  & 9.8   & 0.1   & 0.3   & 0.1   & 0.2   & 0.1   & 0.2   & 17.3  & 14.1  & 15.7 \\
    25    & 50    & 75    & 50    & 11.7  & 13.0  & 10.8  & 0.1   & 0.5   & 0.9   & 0.2   & 0.2   & 4.6   & 16.9  & 13.0  & 16.3 \\
    25    & 50    & 75    & 75    & 15.1  & 17.2  & 14.5  & 16.1  & 18.0  & 22.8  & 16.8  & 16.5  & 23.8  & 15.9  & 10.2  & 14.3 \\
    25    & 75    & 25    & 25    & 10.5  & 11.5  & 9.8   & 0.0   & 0.0   & 0.1   & 0.2   & 0.1   & 0.1   & 17.1  & 9.3   & 9.9 \\
    25    & 75    & 25    & 50    & 11.7  & 13.0  & 10.8  & 1.6   & 0.0   & 0.3   & 0.7   & 0.1   & 0.1   & 15.4  & 7.0   & 9.8 \\
    25    & 75    & 25    & 75    & 15.1  & 17.2  & 14.5  & 18.3  & 12.4  & 20.9  & 16.2  & 10.9  & 14.3  & 15.3  & 6.2   & 6.7 \\
    25    & 75    & 50    & 25    & 10.5  & 11.5  & 9.8   & 0.2   & 0.0   & 0.2   & 0.0   & 0.1   & 0.2   & 16.3  & 9.2   & 9.4 \\
    25    & 75    & 50    & 50    & 11.7  & 13.0  & 10.8  & 0.2   & 0.3   & 0.4   & 0.8   & 0.1   & 0.3   & 15.6  & 6.6   & 8.4 \\
    25    & 75    & 50    & 75    & 15.1  & 17.2  & 14.5  & 18.3  & 12.8  & 20.9  & 16.2  & 11.3  & 15.2  & 15.2  & 6.2   & 5.8 \\
    25    & 75    & 75    & 25    & 10.5  & 11.5  & 9.8   & 0.1   & 0.1   & 0.1   & 0.2   & 0.1   & 0.1   & 15.7  & 11.1  & 9.8 \\
    25    & 75    & 75    & 50    & 11.7  & 13.0  & 10.8  & 0.1   & 0.1   & 0.1   & 0.6   & 0.1   & 0.1   & 15.6  & 8.4   & 9.7 \\
    25    & 75    & 75    & 75    & 15.1  & 17.2  & 14.5  & 15.5  & 10.4  & 19.7  & 17.2  & 10.8  & 16.7  & 14.7  & 6.9   & 8.0 \\
    50    & 25    & 25    & 25    & 10.5  & 11.5  & 9.8   & 0.3   & 0.1   & 0.0   & 0.2   & 0.3   & 2.7   & 18.6  & 14.1  & 21.4 \\
    50    & 25    & 25    & 50    & 11.7  & 13.0  & 10.8  & 17.8  & 17.7  & 17.0  & 16.3  & 15.4  & 19.3  & 18.0  & 11.4  & 17.2 \\
    50    & 25    & 25    & 75    & 15.1  & 17.2  & 14.5  & 19.2  & 22.3  & 24.6  & 19.8  & 21.8  & 24.0  & 15.1  & 11.1  & 13.1 \\
    50    & 25    & 50    & 25    & 10.5  & 11.5  & 9.8   & 0.1   & 0.0   & 0.1   & 0.2   & 0.0   & 0.1   & 17.6  & 14.7  & 20.4 \\
    50    & 25    & 50    & 50    & 11.7  & 13.0  & 10.8  & 11.3  & 11.8  & 15.7  & 10.5  & 10.3  & 14.6  & 17.2  & 12.5  & 17.2 \\
    50    & 25    & 50    & 75    & 15.1  & 17.2  & 14.5  & 19.8  & 22.1  & 24.6  & 19.8  & 21.8  & 24.0  & 15.9  & 10.7  & 13.5 \\
    50    & 25    & 75    & 25    & 10.5  & 11.5  & 9.8   & 0.1   & 1.0   & 0.3   & 0.1   & 0.0   & 1.3   & 17.3  & 15.1  & 16.8 \\
    50    & 25    & 75    & 50    & 11.7  & 13.0  & 10.8  & 0.9   & 1.5   & 2.1   & 0.5   & 0.2   & 1.3   & 17.2  & 14.5  & 16.9 \\
    50    & 25    & 75    & 75    & 15.1  & 17.2  & 14.5  & 16.1  & 18.0  & 23.5  & 16.8  & 16.5  & 23.6  & 16.7  & 13.0  & 14.3 \\
    50    & 50    & 25    & 25    & 10.5  & 11.5  & 9.8   & 0.2   & 0.1   & 0.2   & 0.1   & 0.4   & 0.5   & 18.2  & 12.8  & 18.6 \\
    50    & 50    & 25    & 50    & 11.7  & 13.0  & 10.8  & 11.1  & 7.9   & 10.0  & 9.4   & 7.7   & 5.5   & 17.6  & 10.3  & 15.8 \\
    50    & 50    & 25    & 75    & 15.1  & 17.2  & 14.5  & 19.8  & 22.2  & 24.6  & 19.8  & 21.8  & 24.0  & 14.9  & 9.4   & 12.1 \\
    50    & 50    & 50    & 25    & 10.5  & 11.5  & 9.8   & 0.4   & 0.2   & 0.5   & 0.1   & 0.1   & 0.4   & 17.5  & 13.6  & 17.6 \\
    50    & 50    & 50    & 50    & 11.7  & 13.0  & 10.8  & 3.7   & 7.7   & 1.7   & 6.9   & 7.1   & 3.7   & 16.9  & 11.5  & 15.6 \\
    50    & 50    & 50    & 75    & 15.1  & 17.2  & 14.5  & 19.8  & 22.0  & 24.6  & 19.8  & 21.8  & 24.0  & 15.2  & 9.5   & 12.8 \\
    50    & 50    & 75    & 25    & 10.5  & 11.5  & 9.8   & 0.1   & 0.3   & 0.2   & 0.0   & 0.0   & 0.4   & 17.4  & 14.6  & 14.8 \\
    50    & 50    & 75    & 50    & 11.7  & 13.0  & 10.8  & 0.1   & 0.5   & 0.3   & 0.1   & 0.2   & 1.0   & 17.0  & 13.6  & 15.3 \\
    50    & 50    & 75    & 75    & 15.1  & 17.2  & 14.5  & 17.4  & 18.0  & 22.8  & 16.8  & 16.5  & 23.8  & 16.0  & 11.5  & 14.8 \\
    50    & 75    & 25    & 25    & 10.5  & 11.5  & 9.8   & 0.0   & 0.0   & 0.1   & 0.2   & 0.1   & 0.0   & 17.0  & 9.3   & 9.5 \\
    50    & 75    & 25    & 50    & 11.7  & 13.0  & 10.8  & 1.6   & 0.0   & 0.3   & 0.7   & 0.1   & 0.1   & 15.7  & 7.6   & 6.3 \\
    50    & 75    & 25    & 75    & 15.1  & 17.2  & 14.5  & 18.3  & 12.4  & 19.8  & 16.2  & 11.0  & 13.4  & 14.4  & 6.6   & 6.5 \\
    50    & 75    & 50    & 25    & 10.5  & 11.5  & 9.8   & 0.2   & 0.0   & 0.1   & 0.0   & 0.1   & 0.1   & 16.6  & 10.0  & 8.9 \\
    50    & 75    & 50    & 50    & 11.7  & 13.0  & 10.8  & 0.2   & 0.2   & 0.4   & 0.8   & 0.1   & 0.3   & 15.8  & 8.2   & 8.1 \\
    50    & 75    & 50    & 75    & 15.1  & 17.2  & 14.5  & 18.3  & 12.4  & 19.8  & 16.2  & 11.0  & 15.2  & 14.8  & 6.7   & 5.3 \\
    50    & 75    & 75    & 25    & 10.5  & 11.5  & 9.8   & 0.1   & 0.1   & 0.1   & 0.2   & 0.1   & 0.1   & 16.1  & 11.6  & 9.6 \\
    50    & 75    & 75    & 50    & 11.7  & 13.0  & 10.8  & 0.1   & 0.1   & 0.1   & 0.3   & 0.1   & 0.1   & 15.5  & 9.2   & 9.1 \\
    50    & 75    & 75    & 75    & 15.1  & 17.2  & 14.5  & 15.5  & 10.4  & 19.7  & 16.3  & 10.8  & 16.7  & 14.6  & 8.3   & 7.2 \\
    75    & 25    & 25    & 25    & 10.5  & 11.5  & 9.8   & 0.3   & 0.0   & 0.3   & 0.3   & 0.0   & 1.5   & 18.9  & 14.7  & 15.0 \\
    75    & 25    & 25    & 50    & 11.7  & 13.0  & 10.8  & 2.4   & 9.7   & 4.0   & 6.7   & 6.2   & 4.3   & 18.0  & 13.9  & 13.9 \\
    75    & 25    & 25    & 75    & 15.1  & 17.2  & 14.5  & 19.6  & 19.5  & 24.7  & 19.3  & 18.3  & 24.4  & 15.7  & 13.2  & 11.0 \\
    75    & 25    & 50    & 25    & 10.5  & 11.5  & 9.8   & 0.1   & 0.0   & 0.2   & 0.3   & 0.0   & 0.2   & 18.0  & 15.4  & 14.8 \\
    75    & 25    & 50    & 50    & 11.7  & 13.0  & 10.8  & 3.9   & 5.5   & 0.8   & 4.5   & 2.8   & 3.3   & 17.6  & 14.7  & 13.8 \\
    75    & 25    & 50    & 75    & 15.1  & 17.2  & 14.5  & 19.6  & 19.5  & 24.7  & 19.3  & 18.9  & 24.4  & 16.4  & 13.4  & 11.9 \\
    75    & 25    & 75    & 25    & 10.5  & 11.5  & 9.8   & 0.1   & 0.1   & 0.3   & 0.0   & 0.1   & 0.4   & 17.1  & 16.1  & 14.0 \\
    75    & 25    & 75    & 50    & 11.7  & 13.0  & 10.8  & 0.1   & 0.1   & 0.3   & 0.2   & 0.1   & 0.4   & 17.1  & 16.5  & 14.3 \\
    75    & 25    & 75    & 75    & 15.1  & 17.2  & 14.5  & 19.5  & 11.6  & 23.4  & 19.1  & 18.5  & 20.8  & 17.5  & 15.4  & 16.5 \\
    75    & 50    & 25    & 25    & 10.5  & 11.5  & 9.8   & 0.3   & 0.1   & 0.5   & 1.7   & 0.1   & 0.7   & 18.6  & 13.9  & 13.2 \\
    75    & 50    & 25    & 50    & 11.7  & 13.0  & 10.8  & 0.9   & 7.4   & 1.4   & 3.1   & 4.5   & 3.5   & 17.8  & 13.3  & 12.5 \\
    75    & 50    & 25    & 75    & 15.1  & 17.2  & 14.5  & 19.6  & 19.5  & 24.7  & 19.3  & 19.2  & 24.4  & 15.0  & 12.4  & 10.9 \\
    75    & 50    & 50    & 25    & 10.5  & 11.5  & 9.8   & 0.5   & 0.1   & 0.2   & 0.7   & 0.1   & 0.1   & 17.8  & 15.0  & 12.8 \\
    75    & 50    & 50    & 50    & 11.7  & 13.0  & 10.8  & 0.7   & 5.5   & 0.8   & 2.5   & 2.8   & 3.3   & 17.3  & 14.2  & 12.2 \\
    75    & 50    & 50    & 75    & 15.1  & 17.2  & 14.5  & 19.6  & 19.5  & 24.7  & 19.3  & 19.8  & 24.4  & 15.7  & 12.3  & 12.3 \\
    75    & 50    & 75    & 25    & 10.5  & 11.5  & 9.8   & 0.1   & 0.1   & 0.2   & 0.2   & 0.1   & 0.2   & 17.3  & 16.0  & 12.1 \\
    75    & 50    & 75    & 50    & 11.7  & 13.0  & 10.8  & 0.1   & 0.1   & 0.3   & 0.2   & 0.1   & 0.2   & 17.0  & 16.1  & 12.9 \\
    75    & 50    & 75    & 75    & 15.1  & 17.2  & 14.5  & 19.2  & 11.6  & 24.2  & 19.1  & 18.5  & 20.0  & 16.5  & 15.0  & 15.5 \\
    75    & 75    & 25    & 25    & 10.5  & 11.5  & 9.8   & 0.1   & 0.2   & 0.1   & 0.1   & 0.2   & 0.3   & 16.8  & 11.6  & 6.9 \\
    75    & 75    & 25    & 50    & 11.7  & 13.0  & 10.8  & 0.1   & 0.4   & 0.1   & 0.1   & 0.2   & 0.3   & 16.2  & 10.4  & 6.5 \\
    75    & 75    & 25    & 75    & 15.1  & 17.2  & 14.5  & 18.6  & 12.6  & 15.4  & 15.9  & 15.0  & 14.1  & 14.6  & 10.1  & 4.5 \\
    75    & 75    & 50    & 25    & 10.5  & 11.5  & 9.8   & 0.1   & 0.2   & 0.1   & 0.1   & 0.1   & 0.1   & 16.5  & 12.4  & 7.2 \\
    75    & 75    & 50    & 50    & 11.7  & 13.0  & 10.8  & 0.1   & 0.4   & 0.1   & 0.1   & 0.1   & 0.1   & 15.8  & 11.5  & 7.2 \\
    75    & 75    & 50    & 75    & 15.1  & 17.2  & 14.5  & 18.6  & 13.4  & 15.9  & 15.9  & 15.0  & 13.6  & 14.4  & 10.6  & 5.3 \\
    75    & 75    & 75    & 25    & 10.5  & 11.5  & 9.8   & 0.1   & 0.1   & 0.1   & 0.1   & 0.1   & 0.2   & 15.7  & 14.1  & 8.5 \\
    75    & 75    & 75    & 50    & 11.7  & 13.0  & 10.8  & 0.1   & 0.2   & 0.1   & 0.1   & 0.1   & 0.2   & 15.3  & 13.1  & 7.4 \\
    75    & 75    & 75    & 75    & 15.1  & 17.2  & 14.5  & 13.5  & 7.2   & 14.2  & 12.2  & 11.7  & 12.1  & 14.6  & 12.7  & 7.9 \\
    \midrule
    \multicolumn{13}{l}{\textit{Note: The values displayed are in \$k}}\\
    \end{tabular}%
  \label{tab:MC_MajTher_81rows}%
\end{table}%

\end{document}